\theoremstyle{plain}
\newtheorem{thm}{Theorem}[section]
 \newtheorem{prop}[thm]{Proposition}
 \newtheorem{lemma}[thm]{Lemma}
\newtheorem{cor}[thm]{Corollary}
\theoremstyle{definition}
 \newtheorem{definition}[thm]{Definition}
 \theoremstyle{remark}
 \newtheorem{remark}[thm]{Remark}
\newcommand{\be}{\begin{equation*}}
\newcommand{\ee}{\end{equation*}}
\newcommand{\ben}{\begin{equation}}
\newcommand{\een}{\end{equation}}
\newcommand{\beqa}{\begin{eqnarray*}}
\newcommand{\eeqa}{\end{eqnarray*}}
\newcommand{\beqan}{\begin{eqnarray}}
\newcommand{\eeqan}{\end{eqnarray}}
\newcommand{\nn}{\nonumber}
\newcommand{\la}{\langle}
\newcommand{\ra}{\rangle}
\def\i{\mathbf{i}}
\def\Z{\mathbb{Z}}
\def\C{\mathbb{C}}
\def\R{\mathbb{R}}
\def\End{\mathrm{End}}
\def\Hess{\mathrm{Hess}}
\def\Crit{\mathrm{Crit}}
\def\pd{\partial}
\def\reg{\mathrm{reg}}
\def\sing{\mathrm{sing}}
\def\crit{\mathrm{crit}}
\def\noncrit{\mathrm{noncrit}}
\def\fM{\mathfrak{M}}
\def\rT{\mathrm{T}}
\def\fX{\mathfrak{X}}
\def\Cx{\mathrm{Cx}}
\def\dd{\mathrm{d}}
\def\fc{\mathfrak{c}}
\def\fd{\mathfrak{d}}
\newcommand{\id}{\mathrm{id}}
\newcommand{\Tr}{\mathrm{Tr}}
\newcommand{\tr}{\mathrm{tr}}
\newcommand{\sign}{\mathrm{sign}}
\def\GL{\mathrm{GL}}
\def\cA{\mathcal{A}}
\def\cB{\mathcal{B}}
\def\cC{\mathcal{C}}
\def\cD{\mathcal{D}}
\def\cF{\mathcal{F}}
\def\cG{\mathcal{G}}
\def\cH{\mathcal{H}}
\def\cL{\mathcal{L}}
\def\cM{\mathcal{M}}
\def\Tw{\mathrm{Tw}}
\def\cV{\mathcal{V}}
\def\Met{\mathrm{Met}}
\def\cA{\mathcal{A}}
\def\U{\mathrm{U}}
\def\cO{\mathcal{O}}
\def\PSL{\mathrm{PSL}}
\def\GL{\mathrm{GL}}
\def\rS{\mathrm{S}}
\def\fM{\mathfrak{M}}
\def\p{{\bf p}}
\def\SO{\mathrm{SO}}
\def \hess{\mathrm{hess}}
\def\Riemm{\mathfrak{Riemm}}
\newcommand{\eqdef}{\stackrel{{\rm def.}}{=}}
\def\grad{\mathrm{grad}}
\def\Sym{\mathrm{Sym}}
\def\End{\mathrm{End}}
\def\Re{\mathrm{Re}}
\def\Im{\mathrm{Im}}
\def\O{\mathrm{O}}
\newcommand{\twopartdef}[4]
{
	\left\{
	\begin{array}{ll}
		#1 & \mbox{if } #2 \\
		#3 & \mbox{if } #4
	\end{array}
	\right.
}
\def\vol{\mathrm{vol}}
\begin{document}

\title[Strong rapid turn inflation and contact Hamilton-Jacobi equations]{
 Strong rapid turn inflation and contact Hamilton-Jacobi equations}

\author{E. M. Babalic$^{1}$, C. I. Lazaroiu$^{1,2}$, V. O. Slupic$^{1}$}

\address{$^{1}$Horia Hulubei National
  Institute for R\&D in Physics and Nuclear Engineering\\
  Department of Theoretical Physics,
  Reactorului 30, Bucharest-Magurele, 077125, Romania\\
mbabalic@theory.nipne.ro, lcalin@theory.nipne.ro, victor.slupic@theory.nipne.ro\\
$^{2}$Departamento de Matematicas, Universidad UNED - Madrid\\
 Calle de Juan del Rosal 10, 28040, Madrid, Spain\\
clazaroiu@mat.uned.es}

\maketitle

\begin{abstract}
We consider the consistency condition for ``strong'' sustained rapid
turn inflation with third order slow roll (SRRT) in two-field
cosmological models with oriented scalar manifold as a geometric PDE
which constrains the scalar field metric and potential of such models. When
supplemented by appropriate boundary conditions, the equation
determines one of these objects in terms of the other and hence
selects ``fiducial'' models for strong SRRT inflation. When the scalar
potential is given, the equation can be simplified by fixing the
conformal class of the scalar field metric (equivalently, fixing a
complex structure which makes the scalar manifold into a complex
Riemann surface). Then the consistency equation becomes a contact
Hamilton-Jacobi PDE which determines the scalar field metric within
the given conformal class. We analyze this equation with standard
methods of PDE theory, discuss its quasilinearization near a non-degenerate
critical point of the scalar potential and extract natural asymptotic
conditions for its solutions at such points. We also give numerical
examples of solutions to a simple Dirichlet problem. For the case of
elliptic curves relevant to two-field axion cosmology, we determine
the general symmetry-adapted solution of the equation for potentials
with a single charge vector.
\end{abstract}



\tableofcontents

\section*{Introduction}

Multifield inflationary cosmology is a subject of renewed interest
given its natural connection to fundamental theories of gravity and
matter such as string theory. In fact, recent arguments suggest
\cite{S1,S2,S3,S4} that such models may be preferred in all consistent
theories of quantum gravity. The simplest examples are provided by
two-field models, which already display some of the general features
of multifield cosmology.

Two-field cosmological models have much richer phenomenology than
their one-field counterparts, due to existence of non-geodesic
trajectories which allow for new inflationary regimes such as those
characterized by ``large'' turning rates. Cosmological trajectories
with sustained ``rapid'' turn and slow roll are of significant
phenomenological interest \cite{RT1, RT2, RT3, RT4, RT5, RT6, RT7,
  RT8, RT9, RT10, RT11, LA, LA2, APR, Chakraborty}. Such models also
display much more involved dynamics than their one-field counterparts,
some general aspects of which were explored in \cite{ren,grad,
  natobs}. Noether symmetries of two-field models were discussed in
\cite{Noether1, Noether2, Hesse}.

A conceptually important problem in the study of multifield
inflationary scenarios is to find useful conditions on a given model
which tell us when cosmological trajectories satisfying the desired
regime may exist. This is essential for developing {\em selection
  criteria} which can be used to rule out phenomenologically
irrelevant models without attempting to compute trajectories in each
candidate model ``by hand'' -- given that every candidate model admits
a continuous infinity of solutions depending on the choice of initial data.

This problem was first addressed systematically in reference
\cite{cons} (see \cite{consproc} for a summary) in the context of
so-called {\em strong} rapid turn inflation with third order slow roll
(SRRT). The qualifier ``strong'' means that the square of the
dimensionless turning rate of trajectories is required to be much
larger than one, as has been the tradition in studies of rapid-turn
inflation.

It was shown in \cite{cons} that such a dynamical regime can exist
only in those regions of the field space where the scalar field metric
and scalar potential of the two-field cosmological model approximately
satisfy the so-called {\em strong\footnote{Later work \cite{Achucarro,
    LiliaCons} derived a similar -- but less restrictive --
  consistency condition for so-called {\em weak} rapid-turn inflation
  with third order slow roll, where the qualifier ``weak'' means that
  one requires only that the squared dimensionless turning rate
  $\eta_\perp^2$ is much larger that the magnitudes of the first three
  slow roll parameters (though it could be smaller that one, so this
  is not ``rapid'' turn in the traditional sense). This reduces to the
  consistency condition derived in \cite{cons} in the traditional case
  $\eta_\perp^2\gg 1$. It was shown in loc. cit. that certain models
  previously proposed in the literature satisfy only the weak SRRT
  condition. We consider only {\em strong} rapid turn in the present
  paper. } SRRT consistency condition}. When imposed strictly, i.e. as
an actual equality rather than an approximate equality, this condition
is a nonlinear geometric PDE (herein called the {\em strong SRRT
  equation}) which constrains the scalar field metric and potential in
terms of each other and hence selects {\em fiducial two-field models}
for strong SRRT inflation. The equation thus provides a model
selection criterion which is {\em apriori} in the sense that it does
not require one to compute trajectories in order to test the fitness
of a candidate model for existence of this inflationary regime. The
scalar field metric and potential for which the cosmological equation
is likely to admit such a dynamical regime must be ``close'' (in the
Whitney topology) to those of a fiducial model, which satisfy the
strong SRRT equation exactly.

In previous literature on the subject, the dominant viewpoint has been
to start with a given pair consisting of a scalar field metric and
potential and interpret the strong SRRT equation as a constraint which
must be satisfied approximately by these objects in those regions of
the field space where strong SRRT trajectories have a chance to
exist. This approach is inefficient since it is based on ``trial and
error'' and because the set of two-field models with fixed scalar
field space is uncountably infinite. The point of view taken in the
present paper is more constructive.  Namely, we will take only the
scalar potential of the model as given and view the strong SRRT
equation as a geometric PDE for the scalar field metric, which can be
used to determine the latter in terms of the scalar potential provided
that appropriate boundary or asymptotic conditions are imposed. In
principle, this partially solves the model selection problem by
allowing one to parameterize fiducial strong SRRT models solely by
their target manifold and scalar potential.  In string
compactifications with low or no supersymmetry (i.e. those
compactifications most likely to be relevant to cosmology), the scalar
potential tends to be easier to compute than the metric on the moduli
space.

When the scalar potential is fixed, the strong SRRT equation (viewed
as a PDE for the scalar field metric) turns out to be rather
formidable when written in an arbitrary coordinate system on the
scalar field space. To simplify it, we fix the conformal class of the
metric, which -- for an oriented scalar field space -- amounts to
fixing a complex structure $J$ which renders the latter into a
(generally non-compact) Riemann surface. This {\em complex
  parameterization} allows us to locally write the strong SRRT
equation with fixed scalar potential as a first order PDE for the
conformal factor of the scalar field metric, which turns out to be the
local form of a geometric PDE of contact Hamilton-Jacobi type for the
volume form of that metric. This establishes an unexpected connection
between consistency of ``strong'' rapid turn inflation and contact
mechanics (see \cite{GG, BCT, LB, LTW, LV, LLLR, VL,
  Guijarro, SLVD}) as well as contact geometry \cite{Arnold, Geiges,
  LM}.

When written in local isothermal coordinates for the complex structure
$J$, our geometric equation takes the standard contact Hamilton-Jacobi
form for a contact Hamiltonian which is a cubic polynomial in momenta
with coefficients dependent in a complicated manner on the scalar
potential. Analyzing this equation with the classical method of
characteristics allows one to extract some qualitative information
about its local solutions, while studying various asymptotic regimes
provides further insight. As typical for nonlinear PDEs, a local
smooth solution of the equation can develop ``shocks'' (singularities
or discontinuities in its partial derivatives) when one attempts to
extend it to a solution defined on the entire scalar manifold,
as illustrated by the phenomenon of crossing characteristics. However,
the equation turns out to be {\em proper} in the sense of the theory
of viscosity solutions (see \cite{CL, CEL, BD, EvansBook, Tran, CIL,
  C, Katzourakis}), so its appropriately posed Dirichlet problems
have unique global viscosity solutions when the boundary conditions
are interpreted in the viscosity sense. Properness of the equation
also allows one to approximate viscosity solutions by classical
solutions of its viscosity perturbation, when one takes the viscosity
parameter of the latter to zero. This allows for numerical computation
of viscosity solutions, as we illustrate in the paper. In general, our
contact Hamiltonian does {\em not} satisfy the Tonelli-type and other
conditions required by the contact version \cite{SWY, WWY1, WWY2,
  WWY3} of Aubry-Mather theory \cite{Evans, Fathi}, so the equation
does not seem amenable to weak KAM methods.

The paper is organized as follows. Section \ref{sec:models} recalls
the geometric formulation of two-field cosmological models in the
absence of cosmological perturbations and discusses the complex
parameterization of such models. In the same section, we introduce
certain geometric objects which will play an important role later
on. Section \ref{sec:SRRTeq} briefly recalls the strong SRRT
consistency condition derived in \cite{cons} and discusses the strong
SRRT equation in its frame-free form, using the jet bundle formalism
for geometric nonlinear differential operators (see
\cite{OlverEquivalence, Kruglikov, Vinogradov}). In Section
\ref{sec:HJ}, we explain the process of passing to the complex
parameterization of the scalar field metric and show that, when the
scalar potential and conformal class of this metric are fixed, the
resulting equation for the volume form is of geometric contact
Hamilton-Jacobi type. We also show that working in isothermal
Liouville coordinates on the first jet bundle of the positive
determinant half-line bundle of the target manifold allows us to write
the equation locally as a standard contact Hamilton-Jacobi equation
for the conformal factor of the scalar field metric. In Section
\ref{sec:nonlinear}, we analyze this equation using the method of
characteristics, establish its properness and present some numerically
computed characteristics and viscosity approximants for a natural
Dirichlet problem in the case of quadratic potentials which are
positive on a vicinity of the origin in $\R^2$. In Section
\ref{sec:quasilinear}, we study the quasilinear approximation of the
contact Hamilton-Jacobi equation in the vicinity of a non-degenerate
critical point of the scalar potential. After discussing the
characteristics of this approximation, we analyze its symmetries and
scale-invariant solutions. We also present some numerically-computed
characteristics and viscosity approximants for a Dirichlet problem of
the quasilinear equation near a non-degenerate critical point. Finally,
we extract natural asymptotic conditions for solutions of the fully
nonlinear eqation in the vicinity of such a point. Section
\ref{sec:torus} briefly discusses the case of models with 2-torus
target, which arise as toy examples in axion cosmology \cite{Axiverse,
  AxiverseM, AxiverseB, Marsh}. After explaining the complex
parameterization of such models and the ``axionic'' presentation of
general smooth scalar potentials, we make a few remarks on the
interpretation of the contact Hamilton-Jacobi equation as a nonlinear
PDE for doubly-periodic functions on the complex plane and find the
general symmetry-adapted solution of this equation for potentials with
a single charge vector. Section \ref{sec:conclusions} presents our
conclusions and some directions for further research.

\subsection{Notations and conventions}
\label{subsec:notations}

\noindent All manifolds considered in this paper are smooth and
paracompact. We denote by $\cO$ the trivial real line bundle on
$\cM$. For any smooth vector bundle $E$ over $\cM$ and any $k\in
\Z_{\geq 0}\sqcup \{\infty\}$, we denote by $\cC^k(E)$ the space of
those sections of $E$ defined over $\cM$ which are of class
$\cC^k$. For ease of notation, we let $\Gamma(E)\eqdef \cC^\infty(E)$
denote the space of smooth sections of $E$. We denote by
$\fX(\cM)\eqdef \Gamma(T\cM)$ the space of smooth vector fields
defined on $\cM$. To avoid cumbersome formulas, we will occasionally
not distinguish between the notations $(x^1,x^2)$ and $(x_1,x_2)$ for
local coordinates on a $2$-manifold. The norms used on natural vector
bundles associated to a Riemannian manifold are induced by a
Riemannian metric given on that manifold.  We denote by
$\varepsilon_{ij}$ the components of the Levi-Civita {\em density} in
local coordinates on a two-manifold $\cM$ and by $\epsilon_{ij}$ the
components of the Levi-Civita {\em tensor} relative to a Riemannian
metric $\cG$ defined on $\cM$. The two are related by the equation:
\be
\epsilon_{ij}=\sqrt{|\cG|}\epsilon_{ij}~~,
\ee
where $|\cG|$ is the Gram determinant of $\cG$ in those local coordinates. 

This paper uses the standard
notations and conventions of modern Riemannian geometry (see
\cite{Petersen}). We refer the reader to \cite{Saunders} for the basic
theory of jet bundles, which is indispensable for understanding nonlinear
differential equations on manifolds. 

\section{The geometric formulation of two-field cosmological models and their complex parameterization}
\label{sec:models}

\noindent By a {\em two-field cosmological model} we mean a classical
cosmological model with two real scalar fields derived from the
following Lagrangian on a spacetime with topology $\R^4$, which is
endowed with its standard differentiable structure and orientation:
\ben
\label{cL}
\cL[g,\varphi]=\left[\frac{M^2}{2} \mathrm{R}(g)-\frac{1}{2}\Tr_g \varphi^\ast(\cG)-V\circ \varphi\right]\vol_g~~.
\een
Here $M$ is the reduced Planck mass and $g$ is a Lorentzian metric on
$\R^4$ (taken to be of ``mostly plus'' signature), while $\vol_g$ and
$\mathrm{R}(g)$ are the volume form and scalar curvature of $g$. The
real scalar fields are described by a map $\varphi:\R^4\rightarrow
\cM$, where $\cM$ is a (generally non-compact) connected, smooth and
paracompact surface without boundary which we assume to be oriented
and endowed with a smooth Riemannian metric $\cG$, while
$V:\cM\rightarrow \R$ is a smooth function which plays the role of
potential for the scalar fields described by $\varphi$. The quantity
$V\circ\varphi$ is the composition of the map $V$ with $\varphi$,
which is a smooth map from $\R^4$ to $\R$. The quantity $\varphi^\ast(\cG)$
denotes the pull-back of $\cG$ by $\varphi$, which is a symmetric
covariant 2-tensor defined on $\R^4$. The operation $\Tr_g$ is the
trace induced by $g$ on the space of such tensors, which is obtained
by lifting either of the indices using $g$ and then taking the trace
of the resulting endomorphism of the tangent bundle of $\R^4$. In
local coordinates on $\cM$ and $\R^4$, we have:
\be
\Tr_g \varphi^\ast(\cG)=g^{\alpha\beta} \cG_{ij} \pd_\alpha \varphi^i\pd_\beta\varphi^j~~,
\ee
which is the usual local form of the Lagrange density for the
nonlinear sigma model with target space $(\cM,\cG)$. Notice that the
formulation given in \eqref{cL} is manifestly coordinate-free and
valid for target surfaces $\cM$ of nontrivial topology. The fact that
the topology of $\cM$ has highly nontrivial effects on the
cosmological dynamics derived from \eqref{cL} was pointed out in
\cite{genalpha} and illustrated in \cite{elem, modular} in the case
when the target space metric $\cG$ is hyperbolic\footnote{Such effects
are rarely discussed in the cosmology literature, most of which
tacitly assumes that the scalar manifold $\cM$ is contractible. There
exists no reason in string theory or supergravity to restrict to
models with a contractible target space.}. See also \cite{ren} and
\cite{grad}.

One usually requires that the target space metric $\cG$ is complete in
order to ensure conservation of energy. For simplicity, we will assume
throughout this paper that $V$ is strictly positive and non-constant
on $\cM$. The Riemannian 2-manifold $(\cM,\cG)$ is called the {\em
  scalar manifold} of the model while the triplet $(\cM,\cG,V)$ is
called its {\em scalar triple}. It is convenient to parameterize the
Lagrangian \eqref{cL} by the quadruplet $\fM\eqdef (M_0,\cM,\cG,V)$,
where:
\be
M_0\eqdef M\sqrt{\frac{2}{3}}
\ee
is the {\em rescaled Planck mass}. We denote by:
\ben
\label{CritDef}
\Crit V\eqdef \{m\in \cM~\vert~(\dd V)(m)=0\}~~\mathrm{and}~~\cM_0\eqdef \cM\setminus \Crit V
\een
the {\em critical} and {\em noncritical} subsets of $\cM$; the first
of these consists of all those points of $\cM$ which are critical
points of the scalar potential. Notice that the set $\Crit V$ is
closed in the manifold topology of $\cM$ and hence its complement
$\cM_0$ is an open subset of $\cM$.  Since $\cM$ is connected, the
latter set is non-empty iff $V$ is not constant on $\cM$.

The two-field cosmological model parameterized by the quadruplet $\fM$
is obtained by assuming that $g$ is an FLRW metric with flat spatial
section:
\ben
\label{FLRW}
\dd s^2_g=-\dd t^2+a(t)^2\sum_{i=1}^3 \dd x_i^2~,
\een
where $a$ is a smooth and positive function of $t$ and by taking
$\varphi$ to depend only on the {\em cosmological time} $t\eqdef
x^0$. Here $(x^0,x^1,x^2,x^3)$ are the Cartesian coordinates on
$\R^4$.  Define the {\em Hubble parameter} of the FLRW metric through:
\be
H(t)\eqdef \frac{\dot{a}(t)}{a(t)}~~,
\ee
where the dot indicates derivation with respect to $t$.

\subsection{The cosmological equation}

\noindent Let $\pi:T\cM\rightarrow \cM$ be bundle projection.
Define the {\em rescaled Hubble function} $\cH:T\cM\rightarrow
\R_{>0}$ of the scalar triple $(\cM,\cG,V)$ through:
\ben
\label{cHdef}
\cH(u)\eqdef \left[||u||^2+2V(\pi(u))\right]^{1/2}~~\forall u\in T\cM~~,
\een
where $||~||$ indicates the fiberwise norm induced on $T\cM$ by the
Riemannian metric $\cG$. Let $\nabla$ be the Levi-Civita connection of
the scalar manifold $(\cM,\cG)$. When $H(t)>0$ (which we assume for
the remainder of the paper), the Euler-Lagrange equations of
\eqref{cL} with the FLRW Ansatz \eqref{FLRW} reduce (see \cite{
  Noether1, Noether2, genalpha}) to the following second order
geometric ODE known as the {\em cosmological equation}:
\ben
\label{eomsingle}
\nabla_t \dot{\varphi}(t)+ \frac{1}{M_0}\cH(\dot{\varphi}(t))\dot{\varphi}(t)+ (\grad V)(\varphi(t))=0~~
\een
(where $\nabla_t\eqdef \nabla_{\dot{\varphi}(t)}$), together with the {\em Hubble condition}:
\ben
\label{Hvarphi}
H(t)= \frac{1}{3 M_0}\cH(\dot{\varphi}(t))~~,
\een
which determines the Hubble parameter for each solution of
\eqref{eomsingle}.  The smooth solutions $\varphi:I\rightarrow \cM$ of
\eqref{eomsingle} (where $I$ is a non-degenerate interval, i.e. an
interval which is not reduced to a point) are called {\em cosmological
  curves}, while their images in $\cM$ are called {\em cosmological
  orbits}.  Given a cosmological curve $\varphi$, relation
\eqref{Hvarphi} determines the conformal factor $a$ of the FLRW metric
up to a multiplicative constant. This formulation reduces classical
cosmological dynamics to the problem of solving the cosmological
equation (given appropriate initial conditions). Notice that the
cosmological equation is well-defined for scalar manifolds $\cM$ of
nontrivial topology, given that its derivation is valid in general.

A cosmological curve $\varphi:I\rightarrow \cM$ need not be an
immersion. Accordingly, we define the {\em singular} and {\em regular}
  parameter sets of $\varphi$ through:
\beqa
I_\sing\eqdef \{t\in I~\vert~\dot{\varphi}(t)=0\}~~,~~I_\reg\eqdef I\setminus I_\sing= \{t\in I~\vert~\dot{\varphi}(t)\neq 0\}~~.
\eeqa
When $\varphi$ is not constant, it can be shown that $I_\sing$ is a discrete subset of $I$.
The sets of {\em critical and noncritical times} of $\varphi$ are defined through:
\beqa
I_\crit\eqdef \{t\in I~\vert~\varphi(t)\in \Crit V \}~~,~~I_\noncrit \eqdef I\setminus I_\crit \eqdef \{t\in I~\vert~\varphi(t)\not\in \Crit V\}~~.
\eeqa
The cosmological curve $\varphi$ is called {\em non-degenerate} if
$I_\sing=\emptyset$ and {\em noncritical} if
$I_\crit=\emptyset$. Noncriticality means that the orbit of $\varphi$
does not meet the critical set of $V$, i.e. $\varphi(I)$ is contained
in the noncritical submanifold $\cM_0$ of $\cM$. It is easy to see
that a cosmological curve $\varphi:I\rightarrow \cM$ is constant iff
its orbit consists of a single point which coincides with a critical
point of $V$, which in turn happens iff there exists some $t\in
I_\sing\cap I_\crit$.  Hence for any non-constant cosmological curve
we have:
\be
I_\sing\cap I_\crit=\emptyset~~.
\ee

\subsection{The complex parameterization of two-field models}
\label{subsec:complexparam}

Let $E\!=\! Sym^2(T^\ast\! \cM)$ be the vector bundle of symmetric
covariant 2-tensors on $\cM$ and $E_+$ be its fiber sub-bundle
consisting of strictly positive-definite tensors. Then
$\Met(\cM)\eqdef \Gamma(E_+)$ is the set of smooth Riemannian metrics
defined on $\cM$. Cosmological models whose target space is a fixed
oriented surface $\cM$ and with rescaled Planck mass $M_0$ are
parameterized by the continuously-infinite set $\Met(\cM)\times
\cC^\infty(\cM,\R_{>0})$ of pairs $(\cG,V)$. This set can be endowed,
for example, with the Fr\'{e}chet topology.

A well-known and convenient way to parameterize Riemannian metrics
$\cG$ on an oriented surface $\cM$ is through the pair $(J,\omega)$
formed by the complex structure $J$ defined by the conformal class
of $\cG$ relative to the orientation of $\cM$ and the volume form
$\omega$ of $\cG$ relative to that orientation. As explained in the
next subsection, this identifies $\Met(M)$ with $\Cx(\cM)\times
\Gamma(L_+)$, where $\Cx(\cM)$ is the set of those complex structures
which induce the given orientation of $\cM$ and $L_+$ is the
sub-bundle of positive open half-lines in the real determinant line bundle
of $\cM$. In this parameterization, the scalar manifold $(\cM,\cG)$
identifies with the pair $(S,\omega)$ consisting of the smooth (and
generally non-compact) Riemann surface $S\eqdef (\cM,J)$ and the
2-form $\omega\in \Omega^2(\cM)=\Omega^{1,1}(S)$. In fact, the metric
$\cG$ is Hermitian with respect to the complex structure $J$ induced
by its conformal class and hence automatically K{\"a}hler since $\dd
\omega=0$ for dimension reasons. Moreover, $\omega$ coincides with its
K\"{a}hler form relative to $J$. Accordingly, the scalar triple
$(\cM,\cG,V)$ identifies with the ordered system $(S,\omega,V)$. If
$\Riemm$ denotes the proper class of all (not necessarily compact)
Riemann surfaces, then the class of all two-field models with
oriented target space fibers over $\Riemm$ with fiber above $S\in
\Riemm$ given by $\Gamma(L_+^S)\times \cC^\infty(S)$, where $L_+^S$ is
the bundle of open positive half-lines over $S$. We will use this {\em
  complex parameterization} in most of the paper. In this formulation,
the information carried by the conformal class of $\cG$ is encoded by
the complex structure of the Riemann surface $S$, while the remaining
information carried by $\cG$ is encoded by its volume form.

The complex parameterization can be made locally explicit by choosing
a local holomorphic coordinate $z$ on $S$ (relative to the complex
structure $J$). In this case, we have:
\be
\omega=\frac{\i}{2} e^{2\phi}\dd z\wedge \dd \bar{z}=\i \cG_{z\bar{z}}\dd z\wedge \dd \bar{z}~~,
\ee
where the {\em conformal factor} $\phi$ of $\cG$ is a locally-defined
smooth function of $z$ and $\bar{z}$ ($\phi$ should not be confused
with the cosmological curves $\varphi(t)$ of the previous
subsection!). Thus:
\be
\cG_{z\bar{z}}=\cG_{\bar{z} z}=\frac{1}{2}e^{2\phi}~~,~~\cG^{z\bar{z}}=\cG^{\bar{z} z}=2 e^{-2\phi}
\ee
and the only non-vanishing Christoffel symbols are:
\beqan
\label{ChristoffelComplex}
\Gamma^z_{zz}&=&\cG^{z\bar{z}}\pd_z \cG_{z\bar{z}}=2\pd_z \phi\nn\\
\Gamma^{\bar{z}}_{\bar{z}\bar{z}}&=&\overline{\Gamma^z_{zz}}=2\pd_{\bar{z}} \phi~~.
\eeqan
Using these relations and the formulas:
\beqa
&&\nabla_t \dot\varphi^i=\ddot\varphi^i+\Gamma^i_{jk}\dot\varphi^j\dot\varphi^k \\
&& \grad_\cG V=(\grad_\cG V)^i\partial_i =(\cG^{ij}\partial_j V)\partial_i~~
\eeqa
for $(\varphi^1,\varphi^2)=(z,\bar z)$, one finds that the cosmological equation
\eqref{eomsingle} is equivalent locally on $\cM$ with the following system of second
order ODEs:
\beqan
\label{CosmEqComplex}
&& \ddot{z}\!+2(\pd_z \phi)(z,\bar{z}) \dot{z}^2\!+\!\frac{1}{M_0}\! \left[e^{2\phi(z,\bar{z})} \dot{z}\dot{\bar{z}}+ 2V(z,\bar{z})\right]^{1/2}\! \dot{z}+2e^{-2\phi(z,\bar{z})} (\pd_{\bar z} V)(z,\bar{z})\!=0\nn\\
&& \ddot{\bar{z}}\!+2(\pd_{\bar z} \phi)(z,\bar{z}) \dot{\bar{z}}^2\!+\!\frac{1}{M_0} \!\left[e^{2\phi(z,\bar{z})}\! \dot{z}\dot{\bar{z}}+ 2V(z,\bar{z})\right]^{1/2} \dot{\bar{z}}+2e^{-2\phi(z,\bar{z})}
  (\pd_{z} V)(z,\bar{z})\!=0~,\nn
\eeqan
which is parameterized by $M_0$, $V$ and $\varphi$. The qualitative
behavior of the solutions of this system depends markedly on the
conformal factor $\phi$. In this paper, we will be interested in those
choices of $\phi$ which ensure that the system admits solutions with
sustained ``strong'' rapid turn and third order slow roll. As we
explain in latter sections, such solutions exist provided that $\phi$
satisfies a certain geometric contact Hamilton-Jacobi equation, which
takes classical contact Hamilton-Jacobi form in the isothermal
coordinates $x_1\eqdef \Re z$ and $x_2\eqdef \Im z$ defined by the
local complex coordinate $z$. This allows one to apply techniques from
contact Hamilton-Jacobi theory to address the strong SRRT selection
problem for such models.

\begin{remark}
The cosmological equation \eqref{eomsingle} admits an equivalent
description as a {\em geometric} dynamical system \cite{Palis,Katok}
defined on the tangent bundle $T\cM$ of the scalar manifold $\cM$. See
\cite{ren} for an account of this equivalence.
\end{remark}

\subsection{The parameterization of scalar manifold metrics through volume forms and complex structures}

Let $\Tw(T\cM)$ be the fiber sub-bundle of the bundle of endomorphisms
$End(T\cM)\simeq T^\ast \cM\otimes T\cM$ whose fiber at a point $m\in
\cM$ consists of those endomorphisms $J_m\in \End(T_m\cM)$ which
satisfy the condition $J_m^2=-\id_{T_m\cM}$. Moreover,
let\footnote{This should not be confused with the moduli space of
complex structures on $\cM$, which is (a certain compactification of)
the quotient of $\Cx(\cM)$ by the group of orientation-preserving
diffeomorphisms of $\cM$.}:
\be
\Cx(\cM)\eqdef \Gamma(\Tw(T\cM))=\{J\in \End(T\cM)~\vert~J^2=-\id_{T\cM}\}
\ee
be the set of complex structures of $T\cM$. Here $\End(T\cM)\eqdef
\Gamma(End(\cM))$.  Recall that any almost complex structure on $T\cM$
is integrable (since its Nijenhuis tensor vanishes for dimension
reasons) and hence is automatically a complex structure on
$\cM$. Every complex structure on $T\cM$ induces a reduction of
structure group from $\GL(2,\R)$ to $\GL(1,\C)=\C^\ast\simeq
\R_{>0}\times\U(1)=\R_{>0}\times \SO(2,\R)$ (which is
homotopy-equivalent with $\SO(2,\R)$) and hence induces an orientation
of $\cM$. Let $\Cx_+(\cM)\subset \Cx(\cM)$ be the subset of those
complex structures which induce the given orientation of $\cM$. Any
such complex structure determines a maximal holomorphic atlas which
makes $\cM$ into a smooth Riemann surface $S\eqdef (\cM,J)$ whose
natural orientation coincides with the original orientation of $\cM$.

Since $\cM$ is two-dimensional and oriented, the metric $\cG$ gives a
reduction of structure group to $\SO(2,\R)=\U(1)\subset
\C^\ast=\GL(1,\C)$ and hence determines a complex structure $J\in
\Cx_+(\cM)$ which makes $\cG$ into a Hermitian metric\footnote{The
last statement follows from Schur's lemma applied to the vector
representation of $\SO(2,\R)$ since by definition $J$ commutes with
all elements of $\SO(2,\R)$.}. Therefore, the complex structure $J$ is
$\cG$-orthogonal, i.e. it satisfies:
\ben
\label{Jorth}
\cG(JX,JY)=\cG(X,Y)\quad\forall X,Y\in \fX(\cM)~~,
\een
which is equivalent with the $\cG$-antisymmetry condition:
\ben
\label{Jasym}
\cG(JX,Y)=-\cG(X,JY)\quad\forall X,Y\in \fX(\cM)~~
\een
because $J^2=-\id_{T\cM}$. The complex structure $J$ depends only on
the conformal equivalence class of $\cG$ since the reduction of
structure group is unchanged if one multiplies $\cG$ by an
everywhere-positive smooth function. Since the K\"ahler form
$\omega_0$ of $\cG$ is closed for dimension reasons, it follows that
$\cG$ is a K\"{a}hler metric on the Riemann surface $S\eqdef (S,J)$. Thus
the volume form of $\cG$ equals $\frac{\omega_0^n}{n!}$ where $n=1$ is
the complex dimension of $S$, i.e. we have $\omega_0=\omega$.  It
follows that $J$ is the unique endomorphism of $T\cM$ which satisfies:
\ben
\label{omegaJ}
\omega(X,Y)=\cG(JX,Y)\quad\forall X,Y\in \fX(\cM)~~.
\een
Using \eqref{Jorth} and the identity $J^2=-\id_{T\cM}$, relation \eqref{omegaJ}
implies:
\be
\omega(X,JY)\!=\!\cG(JX,JY)\!=\!\cG(X,Y)~\Rightarrow ~\omega(JX,JY)\!=\!\omega(X,Y)\quad\forall X,Y\in \fX(\cM)~,
\ee
which together with \eqref{Jorth} shows that $J$ preserves both $\cG$ and $\omega$.

It is well-known that the correspondence which associates $J$ to $\cG$
is a bijecton between the set of conformal equivalence classes of
Riemannian metrics on $\cM$ and the set $\Cx_+(\cM)$ (see
\cite[Sec. 3.11]{Jost}). To describe this geometrically, notice that
the endomorphism $J_m\in \End(T_m\cM)$ at a point $m\in \cM$ is
determined algebraically by the value $\cG_m\in (E_+)_m$ of the metric
$\cG$ at that point through the relation (cf. \eqref{omegaJ}):
\ben
\label{omegaJm}
\omega_m(u,v)=\cG_m(J_mu,v)\quad\forall u,v\in T_m\cM~~.
\een
Recall that $\omega_m$ itself is determined algebraically by $\cG_m$
through the Gram determinant. Let: 
\be
L=\det(T^\ast\cM)\eqdef \wedge^2T^\ast \cM
\ee
be the real determinant line bundle of $\cM$. Since $\cM$ is oriented,
the fibers of $T\cM$ and hence also the fibers of $L$ are equipped
with a canonical orientation and the slit bundle $\dot{L}$ (defined as
the complement of the image of the zero section in $L$) decomposes as:
\be
\dot{L}=L_+\sqcup L_-~~,
\ee
where the fibers of $L_+$ and $L_-$ correspond respectively to the
strictly positive and strictly negative real axes relative to the
orientation of $L$. The volume form $\omega$ of any Riemannian metric
$\cG$ defined on $\cM$ is a smooth section of $L_+$, and any such
section can be realized as the volume form of some metric. Since
\eqref{omegaJ} allows one to recover $\cG$ from $J$ and $\omega$ as:
\ben
\label{cGomega}
\cG(X,Y)=\omega(X,JY)~~,
\een
it follows that \eqref{omegaJm} defines an isomorphism of fiber bundles:
\ben
\label{xi}
\eta:E_+\stackrel{\sim}{\rightarrow} \Tw_+(T\cM)\times L_+
\een
whose fiber $\eta_m$ at $m\in \cM$ sends $\cG_m$ into the pair
$(J_m,\omega_m)$. Here $\Tw_+(T\cM)$ is the fiber sub-bundle of $T\cM$
whose fiber at $m\in M$ consists of those complex structures on the
vector space $T_m\cM$ which induce the same orientation as that
inherited from the orientation of $\cM$. The fiber bundle isomorphism
\eqref{xi} induces a bijection on sections:
\be
\Met(\cM)\simeq \Cx_+(\cM)\times \Gamma(L_+)
\ee
which provides the complex parameterization of target space metrics
used in the previous subsection.

\begin{remark}
Let $(U, x^1,x^2)$ be any positively-oriented local
coordinate system on $\cM$ and denote by:
\be
\epsilon_{ij}\eqdef \omega(\pd_i,\pd_j)=\sqrt{\det \cG} \,\varepsilon_{ij}
\ee
the Levi-Civita tensor, where $\varepsilon_{ij}$ is the Levi-Civita
symbol. We have $J\pd_i=J_i^{\, j}\pd_j$ and relation \eqref{omegaJ}
gives:
\ben
\label{J}
J_{ij}=\epsilon_{ij}\Longrightarrow J\pd_i=\epsilon_i^{\,\, j} \pd_j~~.
\een
For any $X=X^j\pd_j\in \fX(U)$, we have:
\be
JX=(JX)^i\pd_i~~\mathrm{with}~~(JX)^i=J_j^{\,\, i} X^j=\epsilon_j^{\,\,i} X^j~~.
\ee
Notice that $\epsilon_i^{\,\, j}=-\epsilon^j_{\,\, i}$.  The identity
$\epsilon_{ij} \epsilon^{jk}=-\delta^k_i$ implies $\epsilon^i_{\,
  j}\epsilon^j_{\, k}=-\delta^i_k$, which recovers the relation
$J^2=-\id_{T\cM}$.
\end{remark}

\subsection{The modified gradient operator}

Using the complex structure $J$, we define the {\em modified gradient}
operator $\grad_J:\cC^\infty(\cM)\rightarrow \fX(\cM)$ through:
\be
\grad_J V\eqdef J\grad V~~,
\ee
where $\grad V\in \fX(\cM)$ is the gradient vector field of $V$
relative to the Riemannian metric $\cG$.  With this definition, we
have:
\be
(\dd V)(X)=\cG(X,\grad V)=\omega(X,\grad_JV)\quad\forall X\in \fX(\cM)~,
\ee
where we used relation \eqref{cGomega}. Moreover, we have:
\be
||\grad_JV||=||\grad V||=||\dd V||~,
\ee
since $J$ is a $\cG$-orthogonal endomorphism of $T\cM$. 

\begin{remark}
In positively-oriented local coordinates $(U,x^1,x^2)$, we have:
\be
\grad_JV=_U \pd^i V J\pd_i=\epsilon_j^{\,\, i} \pd^j V \pd_i=-\epsilon^{ij} \pd_j V \pd_i~~.
\ee
\end{remark}

\subsection{The adapted frame of the non-critical submanifold}
\label{subsec:adapted}

Following \cite{cons}, we recall the {\em adapted frame} of the scalar
triple $(\cM,\cG,V)$.  This is a frame of $T\cM_0$ defined by $\cG$
and $V$, where $\cM_0$ is the non-critical submanifold of $\cM$
defined in \eqref{CritDef}. Let:
\ben
n\eqdef \frac{\grad V}{||\grad V||}\in \fX(\cM_0)
\een
be the normalized gradient vector field of $V$, which is well-defined
on the non-critical submanifold $\cM_0$. Notice the relation:
\be
||\grad V||=||\dd V||~~,
\ee
where in the right hand side we denote by $||~||$ the fiberwise norm
of the metric induced by $\cG$ on the cotangent bundle of $\cM$. The
relation follows from the fact that the musical isomorphism
$\flat:T\cM\rightarrow T^\ast \cM$ defined by the metric $\cG$ is a
fiberwise isometry.

Let $\tau\in \fX(\cM_0)$ be the normalized vector field defined on
$\cM_0$ which is orthogonal to $n$ and chosen such that the frame
$(n,\tau)$ of $T\cM_0$ is positively oriented at every point of
$\cM_0$. We have:
\ben
\label{taudef}
\tau=(\iota_n\omega\vert_{\cM_0})^\sharp~~,
\een
where $\omega$ is the volume form of $(\cM,\cG)$ and $\sharp:T^\ast
\cM\rightarrow T\cM$ is the other musical isomorphism defined by
$\cG$. Notice that:
\be
\omega=n_\flat\wedge \tau_\flat~~.
\ee

\begin{prop}
\label{prop:tauJ}
We have:
\ben
\tau=J n=\frac{\grad_JV}{||\dd V||}~~,
\een
where $J$ is the complex structure determined by $\cG$ relative to the
orientation of $\cM$.
\end{prop}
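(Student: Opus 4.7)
The plan is to compute $\iota_n\omega$ using the characterization $\omega(X,Y)=\cG(JX,Y)$ from \eqref{omegaJ}, and then show that the musical dual of this one-form is exactly $Jn$; the proposition then follows by substituting the definition of $n$ and using that $\grad_J=J\circ\grad$.

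First I would evaluate $\iota_n\omega$ on an arbitrary vector field $Y\in\fX(\cM_0)$ using \eqref{omegaJ}:
\be
(\iota_n\omega)(Y)=\omega(n,Y)=\cG(Jn,Y),
\ee
so $\iota_n\omega=(Jn)_\flat$ as one-forms on $\cM_0$, and applying the musical isomorphism $\sharp$ gives $(\iota_n\omega)^\sharp=Jn$. This identifies the unnormalized candidate $Jn$ for $\tau$ up to orientation. Since $J$ is $\cG$-orthogonal and $\|n\|=1$, we have $\|Jn\|=1$, so $Jn$ is already a unit vector. It is also orthogonal to $n$: using the $\cG$-antisymmetry \eqref{Jasym} of $J$ and $J^2=-\id_{T\cM}$,
\be
\cG(n,Jn)=-\cG(Jn,J^2n)=\cG(Jn,n),
\ee
hence $\cG(n,Jn)=0$.

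Next I would check that the frame $(n,Jn)$ is positively oriented, which is needed to match the sign convention in the definition \eqref{taudef} of $\tau$. Using \eqref{omegaJ} and $J^2=-\id_{T\cM}$,
\be
\omega(n,Jn)=\cG(Jn,Jn)=\|Jn\|^2=1>0,
\ee
so $(n,Jn)$ is indeed positively oriented relative to $\omega$; this also confirms consistency with $\omega=n_\flat\wedge\tau_\flat$ upon plugging in $\tau=Jn$. Combining the previous two steps yields $\tau=Jn$.

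Finally, substituting $n=\grad V/\|\grad V\|$ and using the definition $\grad_JV=J\grad V$ together with the identity $\|\grad V\|=\|\dd V\|$ (which holds because $\flat$ is a fiberwise isometry) gives
\be
\tau=Jn=\frac{J\grad V}{\|\grad V\|}=\frac{\grad_JV}{\|\dd V\|},
\ee
completing the proof. I do not anticipate any real obstacle: the argument is a direct unfolding of definitions, and the only point requiring care is the orientation check, which is settled by the single computation $\omega(n,Jn)=1$.
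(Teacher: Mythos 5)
Your proof is correct and follows essentially the same route as the paper: both compute $(\iota_n\omega)^\sharp$ and identify it with $Jn$; the paper does this in positively-oriented local coordinates via the Levi-Civita tensor, while you do it invariantly from \eqref{omegaJ}, which is cleaner. Your additional verifications (unit norm, orthogonality, orientation) are welcome but strictly redundant once \eqref{taudef} is taken as the definition of $\tau$, which is what the paper's proof does. One small slip: the displayed orthogonality chain $\cG(n,Jn)=-\cG(Jn,J^2n)=\cG(Jn,n)$ reduces to the tautology $\cG(n,Jn)=\cG(Jn,n)$ and does not by itself yield $\cG(n,Jn)=0$; the first equality should carry a plus sign (by \eqref{Jorth}, $\cG(n,Jn)=\cG(Jn,J^2n)=-\cG(Jn,n)$, whence $2\,\cG(n,Jn)=0$), or more directly apply \eqref{Jasym} with $X=Y=n$ and use the symmetry of $\cG$.
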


\begin{proof}
In a positively-oriented local coordinate system $(U, x^1,x^2)$ of $\cM$, we have:
\be
\omega=\sqrt{\det \cG}\, \dd x^1\wedge \dd x^2=\frac{1}{2}\epsilon_{ij}\dd x^i\wedge \dd x^j~~.
\ee
Writing $n=n^i\pd_i$, we compute: 
\be
\iota_n \omega=\frac{1}{2} \left(\epsilon_{ij} n^i\dd x^j-\epsilon_{ij} n^j\dd x^i \right)=\epsilon_{ij} n^i\dd x^j~~.
\ee
This gives:
\be
\tau=(\iota_n\omega)^\sharp=\epsilon_{ij} n^i\cG^{jk}\pd_k=\epsilon_i^{\,\, k} n^i \pd_k=n^i J\pd_i=J n~~,
\ee
where we used the relation $(\dd x^j)^\sharp=\cG^{jk} \pd_k$.
\end{proof}

\begin{definition}
The oriented orthonormal frame $(n,\tau)$ of $T\cM_0$ is called the
{\em adapted frame} of the oriented scalar triple $(\cM,\cG,V)$.
\end{definition}

\noindent Notice that the gradient flow and level set curves of $V$
determine an orthogonal mesh on the open Riemannian submanifold
$(\cM_0,\cG\vert_{\cM_0})$ of the scalar manifold $(\cM,\cG)$.

\begin{definition}
\label{def:mesh}
The {\em potential mesh} of the oriented scalar triple $(\cM,\cG,V)$
is the oriented orthogonal mesh determined by the gradient flow and
level set curves of $V$ on the noncritical open submanifold $\cM_0$,
where we orient gradient flow lines towards increasing values of $V$
and level set curves such that their normalized tangent vectors
generate the vector field $\tau$.
\end{definition}

\noindent Of course, the gradient flow of $V$ is considered with
respect to the scalar field metric $\cG$. With Definition
\ref{def:mesh}, the frame $(n,\tau)$ is the oriented orthonormal frame
defined by the potential mesh.

\subsection{The Riemannian and modified Riemannian Hessian of $V$}

Recall that the {\em Riemannian Hessian} of $V$ relative to $\cG$ is
the symmetric covariant 2-tensor defined on $\cM$ through:
\be
\Hess(V)\eqdef \nabla \dd V\in \Gamma(Sym^2(T^\ast \cM))~~,
\ee
where $\nabla$ is the Levi-Civita connection of $(\cM,\cG)$. Using the
complex structure $J$, we define the {\em modified Riemannian Hessian}
of $V$ relative to $\cG$ to be the covariant 2-tensor $\Hess_J(V) \in
\Gamma(\otimes^2T^\ast \cM)$ given by:
\be
\Hess_J(V)(X,Y)\eqdef \Hess(V)(X,JY)\quad\forall X,Y\in \fX(\cM)~~.
\ee
For any vector fields $X,Y\in \fX(\cM)$, we set:
\ben
\label{HessXY}
V_{XY}\!\!\eqdef\!\Hess(V\!)(X,Y\!)\!=\!(\nabla_X \dd V\!)(Y\!)\!=
\!X(Y\!(V\!))-(\dd V\!)(\nabla_X Y\!)\!=\!X(Y\!(V\!))-(\nabla_XY\!)(V\!)
\een
and:
\ben
\label{ModHessXY}
{\tilde V}_{XY}\eqdef \Hess_J(V)(X,Y)~~.
\een
Notice that $V_{XY}=V_{YX}$ since $\Hess(V)$ is a symmetric tensor. 

\begin{remark}
In local coordinates $(U,x^1,x^2)$ on $\cM$, we define:
\ben
\label{HessV}
V_{ij}\eqdef V_{\pd_i\pd_j}=\Hess(V)(\pd_i,\pd_j)=\pd_i\pd_jV -\Gamma_{ij}^k \pd_kV
\een
and:
\ben
\label{HessJV}
{\tilde V}_{ij}\eqdef {\tilde V}_{\pd_i\pd_j}= \Hess(V)(\pd_i,J\pd_j)=\epsilon_j^{\,\,k} V_{ki}=-V_{ik} \epsilon^k_{\,\,j} ~~.
\een
In particular, the symmetric part of $\tilde{V}_{ij}$ is given by:
\be
{\tilde V}_{(ij)}=-\frac{1}{2}(V_{jk} \epsilon^k_{\,i}+V_{ik} \epsilon^k_{\,j})~~.
\ee
\end{remark}

When the metric $\cG$ is fixed, the Hessian and modified Hessian give
second-order linear differential operators $f\!\rightarrow \!\Hess(f)$ and
$f\!\rightarrow\! \Hess_J(f)$ acting on $f\in \cC^\infty(\cM)$. When
the scalar potential $V$ is fixed, the maps $\cG\!\rightarrow
\!\Hess(V)$ and $\cG \!\rightarrow \!  \Hess_J(V)$ are quasilinear
first order operators acting on $\cG\in \Met(\cM)$, because the
Christoffel symbols are given by:
\be
\Gamma^i_{jk}=\frac{1}{2} \cG^{im}\left(\pd_j \cG_{mk}+\pd_k\cG_{mj}-\pd_m\cG_{jk} \right)~~
\ee
and the complex structure $J$ is determined by $\cG$ through a purely
algebraic relation. When both $\cG$ and $V$ are allowed to vary, the
Hessian and modified Hessian give geometric differential operators
acting on the pair $(\cG,V)$. Formally, this means that there exist
morphisms of fiber bundles $~\hess:j^1(E_+)\times_\cM j^2(\cO)\rightarrow
\cO$ and $\hess_J:j^1(E_+)\times_\cM j^2(\cO)\rightarrow \cO$ such that:
\be
\Hess(V)=\hess(j^1(\cG),j^2(V))~~,~~\Hess_J(V)=\hess_J(j^1(\cG),j^2(V))~~.
\ee
Here $j^k(F)$ denotes the $k$-th jet bundle of a fiber bundle $F$
defined over $\cM$ while $\cO$ is the trivial real line bundle on
$\cM$. Notice that $\Gamma(\cO)=\cC^\infty(\cM)$, so in particular $V$
is a global section of $\cO$. Moreover, $j^k(s)\in \Gamma(j^k(F))$
denotes the $k$-th jet prolongation of a section $s\in \Gamma(F)$. We
use the same notation for a morphism of fiber bundles and for the map
which it induces on sections of those bundles. We refer the reader to
\cite{Saunders} for the elementary theory of jet bundles.

\begin{remark}
Let $\widehat{\Hess}(V)\in \End^s(T\cM)$ be the {\em Hesse
  endomorphism} of $\cM$, i.e. the $\cG$-symmetric endomorphism of
$T\cM$ defined through:
\be
\Hess(V)(X,Y)=\cG(X,\widehat{\Hess}(V)(Y))\quad\forall X,Y\in \fX(\cM)~~.
\ee
Define the {\em modified Hesse endomorphism} $\widehat{\Hess}_J(V)\in \End(T\cM)$ through:
\be
\Hess(V)(X,Y)=\omega(X,\widehat{\Hess}_J(V)(Y))\quad\forall X,Y\in \fX(\cM)~~.
\ee
Then $\widehat{\Hess}(V)$ is $\cG$-symmetric and we have:
\be
\widehat{\Hess}(V)=J \widehat{\Hess}_J(V)~\Longrightarrow \widehat{\Hess}_J(V)=-J \widehat{\Hess}(V)~~.
\ee
The $\cG$-transpose of $ \widehat{\Hess}_J(V)$ is given by:
\be
\widehat{\Hess}_J(V)^t=\widehat{\Hess}(V) J~~.
\ee
Moreover, we have: 
\be
\Hess_J(V)(X,Y)=\cG(\widehat{\Hess}(X), JY)=\cG(\widehat{\Hess}_J(X),Y)\quad\forall X,Y\in \fX(\cM)~~.
\ee
\end{remark}

\section{The strong SRRT equation}
\label{sec:SRRTeq}

A fundamental problem in the theory of scalar two-field cosmological
models is the {\em selection problem}, which asks for necessary and
sufficient conditions on the target space metric $\cG$ and scalar
potential $V$ defined on a fixed target manifold $\cM$ that suffice to
characterize those two-field models which admit cosmological curves
$\varphi$ with certain desirable properties. A solution to this
problem would allow one to rule out phenomenologically uninteresting
models without having to compute cosmological trajectories (a
computation which is unfeasible in general, in particular since such
trajectories depend on the choice of initial conditions for the
cosmological equation). In this section, we discuss the {\em strong SRRT
  equation}, which originated in the work of \cite{cons} and 
provides a selection criterion for models to admit trajectories
with sustained strong rapid turn and third order slow roll.

\subsection{The strong SRRT conditions}
\label{subsec:SRRTcond}

One phenomenologically-inspired condition that can be imposed on
cosmological curves $\varphi$ is that they have ``sustained {\em
  strong}\footnote{As pointed out in \cite{Achucarro}, one can impose
a weaker form of the SRRT conditions, which requires that the square
of the dimensionless turning rate is much larger than the first three
adiabatic slow roll parameters (see also \cite{LiliaCons}). As pointed
out in \cite{cons} and clarified in \cite{Achucarro}, some older
special models which were claimed to allow for rapid turn inflation
fail to satisfy the strong SRRT condition, but some of them do satisfy
the {\em weak} SRRT condition obtained in this manner.} rapid turn
with third order adiabatic slow roll'' \cite{cons}. This means that
$\varphi$ satisfies the first three adiabatic slow roll conditions,
the condition that the square of its dimensionless turning rate is
much greater than one and the condition that the relative rate of
change of its turning rate is much smaller than one in absolute
value. We call these the {\em strong SRRT conditions}. When the
cosmological equation admits such solutions, then it can realize
various scenarios that are important for early universe cosmology. To
simplify uninteresting factors, we shall take $M=1$ i.e.
$M_0=\sqrt{\frac{2}{3}}$ henceforth. This is justified by the fact
that the cosmological equation admits a similarity which allows one to
absorb $M$ into $\cG$ and $V$ or to set it to any desired non-zero
value (see \cite{ren}).

The precise mathematical formulation of the strong SRRT conditions is
as follow.  Given a noncritical cosmological curve
$\varphi:I\rightarrow \cM_0$ and a regular time $t\in I_\reg$, define
the first, second and third adiabatic slow-roll parameters:
\be
\varepsilon(t)\eqdef-\frac{\dot{H}(t)}{H(t)^2}~~,~~\eta_\parallel(t)\eqdef-\frac{\ddot{\sigma}(t)}{H(t)\dot{\sigma}(t)}~~,~~\xi(t)\eqdef \frac{\dddot{\sigma}(t)}{H(t)^2\dot{\sigma}(t)}~,
\ee
where we denote by $\sigma$ the increasing proper length parameter of $\varphi$
relative to an arbitrary but fixed origin of cosmological time. We
also define the first and second dimensionless turn parameters:
\be
\eta_\perp(t)\eqdef \frac{\Omega(t)}{H(t)}~~,~~\nu\eqdef \frac{\dot{\eta}_\perp(t)}{H(t)\eta_\perp(t)}~~,
\ee
where $\Omega$ is the turning rate of $\varphi$ (see \cite{cons}). We
say that $\varphi$ satisfies the {\em sustained slow roll with strong
  rapid turn} (strong SRRT) conditions at cosmological time $t$ if
  each of the five quantities:
\ben
\label{small}
\varepsilon(t),~|\eta_\parallel(t)|,~|\xi(t)|,~\frac{1}{\eta_\perp(t)^2},~|\nu(t)|~~
\een
is much smaller than one. 

\subsection{The strong SRRT equation}

It was shown in \cite{cons} that a noncritical cosmological curve
$\varphi:I\rightarrow \cM_0$ satisfies the strong SRRT conditions at
a regular time $t\in I_\reg$ only if the {\em strong SRRT equation}
\ben
\label{SRRT}
V_{n\tau}^2V_{\tau\tau}=3V V_{nn}^2
\een
is satisfied at the point $m=\varphi(t)\in \cM_0$ up to corrections of
order at least one in the small parameters listed in \eqref{small}.
If \eqref{SRRT} is satisfied at $m$ with strict
equality, then we say that $m$ is a {\em strong SRRT point} of
$\cM_0$. In this paper, we are interested in {\em fiducial models},
defined as those models for which all points of $\cM_0$ are strong
SRRT points, i.e. whose metric and scalar potential satisfy the strong
SRRT equation exactly everywhere on $\cM_0$. Such models can serve as
starting points for an approximation scheme, in the sense that a model
which allows for cosmological curves that satisfy the strong SRRT
condition should be ``locally close'' to a fiducial model -- namely,
its scalar field metric and potential should be well-approximated by
the scalar field metric and potential of a fiducial model in those
regions of the target manifold where strong SRRT curves are likely to
exist.

Notice that \eqref{SRRT} is a manifestly covariant equation, since the
adapted frame is defined intrinsically in terms of the scalar field
metric, scalar potential and orientation of $\cM$. In principle, this
equation solves the selection problem for strong SRRT inflation by
providing a {\em selection criterion} for fiducial models.  However,
equation \eqref{SRRT} is not immediately practical since it is written
using the adapted frame. To understand it better, we first rewrite it
geometrically in a frame-free form.

\subsection{The frame-free form of the strong SRRT equation}

To write the strong SRRT equation in a more useful
form, recall that:
\be
n\!=\!\frac{1}{||\dd V||_\cG}\grad V\!=_U\!\frac{1}{||\dd V||_\cG} \pd^i V \pd_i ~\quad~,~\quad~\tau\!=\!\frac{1}{||\dd V||_\cG}\grad_JV\!=\!-\frac{1}{||\dd V||_\cG}\epsilon^{ij}\pd_j V \pd_i~~,
\ee
where:
\be
||\dd V||=\sqrt{\pd^iV\pd_iV}~~.
\ee
Using these relations, we find: 
\ben
\label{VHessOp}
V_{n\tau}=\frac{1}{||\dd V||_\cG^2} \cD_1(\cG,V)~~,~~V_{n n}=\frac{1}{||\dd V||_\cG^2} \cD_2(\cG,V)~~,~~V_{\tau\tau}=\frac{1}{||\dd V||_\cG^2} \cD_3(\cG,V)~~,
\een
where: 
\beqan
\label{D1D2D3}
&& \cD_1(\cG,V\!)\!\eqdef\! \Hess_J(V)(\grad V,\grad V)\!=_U\!\!{\tilde V}_{(ij)} \pd^iV\pd^jV~~,\nn\\
&& \cD_2(\cG,V\!)\!\eqdef\! \Hess(V)(\grad V,\grad V)\!=_U\!\! V_{ij}\pd^iV\pd^jV~~,\\
&& \cD_3(\cG,V\!)\!\eqdef\! \Hess(V\!)(J\grad V, \!J\grad V\!)\!=_U\!\! V_{ij}\epsilon^{ik}\epsilon^{jl}\pd_kV\pd_lV\!\!=\!||\dd V||^2 \Delta V\!\!-\!\cD_2(\cG,\!V\!)~~.\nn
\eeqan
Here:
\be
\Delta V=\tr(\widehat{\Hess}(V))=V_i^i=\frac{1}{\sqrt{\det \cG}}\pd_i(\sqrt{\det \cG} \,\pd^i V)=\pd^i\pd_iV-\cG^{ij}\Gamma_{ij}^k\pd_kV
\ee
is the (negative-definite) Laplacian of $V$ with respect to $\cG$. To
derive the last relation of \eqref{D1D2D3}, we used the identity:
\be
\epsilon^{ij}\epsilon^{kl}=\cG^{ik} \cG^{jl}-\cG^{il}\cG^{jk}~~.
\ee
The objects $\cD_k$ are nonlinear scalar differential operators:
\be
\cD_k:\Gamma(E_+)\times \cC^\infty(\cM)\rightarrow \cC^\infty(\cM)
\ee
of order $(1,2)$. Mathematically, this means that there exist
morphisms of fiber bundles:
\be
\cF_k:j^1(E_+)\times_\cM j^2(\cO) \rightarrow \cO
\ee
(with $k=1,2,3$) such that:
\be
\cD_k(\cG,V)=\cF_k(j^1(\cG), j^2(V))~~.
\ee
Substituting \eqref{VHessOp} into \eqref{SRRT} gives {\em the
  frame-free form of the strong SRRT equation}:
\ben
\label{SS}
\cD_1(\cG,V)^2 \cD_3(\cG,V)=3V ||\dd V||^2 \cD_2(\cG,V)^2~~.
\een
Hence \eqref{SRRT} is equivalent with the following geometric PDE: 
\ben
\label{SRRTop}
\cD(\cG,V)=0~~,
\een
where we introduced the nonlinear differential operator
\be
\cD:\Gamma(E_0)\times \cC^\infty(\cM)\rightarrow \cC^\infty(\cM)
\ee
of order $(1,2)$ defined through:
\ben
\label{cD}
{\cD}(\cG,V)\!\eqdef \!3V ||\dd V||^2 \cD_2(\cG,V)^2-\cD_1(\cG,V)^2 \cD_3(\cG,V)~.
\een
When the metric $\cG$ is fixed, relation \eqref{SS} is a nonlinear
second order geometric PDE for the scalar potential $V$. When $V$ is
fixed, it becomes a first order PDE for the scalar field metric $\cG$.

To formally describe the {\em strong SRRT operator} $\cD$, notice that
the map $(\cG,V)\rightarrow V ||\dd V||^2$ is induced by a morphism of
fiber bundles $\cF_0:E_+\times_\cM j^1(\cO)\rightarrow \cO$ which in
turn induces a morphism of jet bundles $\bar{\cF}_0:j^1(E_+)\times_\cM
j^2(\cO)\rightarrow \cO$ upon composing with the fiber product of
the projection morphisms $j^1(E_+)\rightarrow E_+$ and
$j^2(\cO)\rightarrow j^1(\cO)$.

Consider the morphism of fiber bundles
$\cF:j^1(E_+)\times_\cM j^2(\cO)\rightarrow \cO$ defined through:
\ben
\label{cFdef}
\cF\eqdef 3 \bar{\cF}_0 \otimes (\cF_2)^2-(\cF_1)^2 \cF_3~~.
\een
With these notations, we have:
\ben
\label{cDdef}
\cD(\cG,V)=\cF(j^1(\cG),j^2(V))~~.
\een
The strong SRRT operator $\cD$ is well-defined on the entire scalar
manifold $\cM$ and sensitive to its topology.

\subsection{The strong SRRT equation for metrics in a fixed conformal class} 

When written in general local coordinates, the strong SRRT equation
\eqref{SS} is quite formidable. One way to simplify this equation is
to restrict the metric $\cG$ to lie within a fixed conformal class.
As explained in Subsection \ref{subsec:complexparam}, this amounts to
fixing the complex structure $J$ defined by the conformal class of
$\cG$ relative to the orientation of $\cM$.

Recall the isomorphism of fiber bundles \eqref{xi} defined through
relation \eqref{omegaJm} and let
$\eta_J:E_+^J\stackrel{\sim}{\rightarrow} L_+$ be its $J$-section,
i.e. the isomorphism of fiber bundles obtained by restricting $\eta$
to the conformal class which corresponds to a complex structure
$J$. Here $E_+^J\subset E_+$ is the sub-bundle of $E_+$ whose sections
correspond to metrics within the conformal class corresponding to $J$;
its fibers are open half-lines sitting inside the fibers of $E_+$
(which are full open cones of dimension 3). On global sections, this
induces the map which sends a metric $\cG$ which belongs to the
conformal class specified by $J$ to its volume form $\omega$. The
inverse of $\eta_J$ allows us to transport $\cF_k$ to $J$-dependent
fiber bundle morphisms:
\be
F_k:=F_k^J\eqdef \cF_k\circ [j^1(\eta_J^{-1})\times_\cM \id_{j^2(\cO)}]:j^1(L_+)\times j^2(\cO)\rightarrow \cO
\ee
and hence to transport $\cD_k$ to $J$-dependent differential operators:
\be
D_k:=D_k^J \eqdef F_k\circ (j^1\times j^2):\Gamma(L_+)\times \cC^\infty(\cM)\rightarrow \cC^\infty(\cM)
\ee
acting on pairs $(\omega,V)$. They act as first order operators on
$\omega$ and as second order nonlinear operators on $V$. Also consider
the fiber bundle morphism:
\be
F_0:=F_0^J\eqdef \cF_0\circ (\eta_J^{-1}\times \id_{j^1(\cO)}):L_+\times j^1(\cO)\rightarrow
\cO
\ee
and the bundle morphism $\bar{F}_0:j^1(L_+)\times j^2(\cO)\rightarrow
\cO$ obtained by composing it with the appropriate bundle
projections. Using these objects, the strong SRRT equation
\eqref{SRRTop} becomes:
\ben
\label{SRRTconf}
D(\omega, V)=0~~,
\een
where:
\be
D(\omega,V):=D^J(\omega,V)\eqdef 3 F_0(\omega, j^1(V)) D_2(\omega,V)^2-D_1(\omega,V)^2 D_3(\omega,V)
\ee
is the $J$-dependent operator obtained by transporting $\cD$ to an
operator defined on $\Gamma(L_+)\times \Gamma(\cO)$ using the bundle
isomorphism $\eta_J$. We have:
\be
D(\omega,V)=F(j^1(\omega),j^2(V))~~,
\ee
where:
\ben
\label{Fdef}
F:=F^J\eqdef \cF\circ [j^1(\eta_J^{-1})\times \id_{j^2(\cO)}]:j^1(L_+)\times j^2(\cO)\rightarrow \cO~~.
\een
Once again, notice that $D$ is a geometric differential operator
which is globally well-defined and sensitive to the topology of
the scalar manifold $\cM$.

\subsection{The strong SRRT equation with fixed $V$ for metrics in a fixed conformal class}
Let
\ben
\label{FVdef}
F_V:=F^J_V:j^1(L_+)\rightarrow \R
\een
be the $j^2(V)$-section of the function \eqref{Fdef}. When $V$ is
fixed, the strong SRRT equation \eqref{SRRTconf} for volume forms of
metrics $\cG$ belonging to the conformal class determined by $J$
reads:
\ben
\label{HJcov}
F_V(j^1(\omega))=0~~.
\een
One can immediately recognize this as {\em the stationary contact
  Hamilton-Jacobi equation} of the geometric contact Hamiltonian
system $(j^1(L_+), F_V)$, where the jet bundle $j^1(L_+)$ is endowed
with its Cartan contact structure (see \cite{Saunders}). We refer the
reader to \cite{GG, BCT, LB, LTW, LV, LLLR, VL, Guijarro, SLVD} for
background on the geometric formulation of contact Hamilton-Jacobi
mechanics and to the classic references \cite{Arnold, Geiges, LM} for
background on contact geometry and contact topology. The formulation
of contact mechanics given in \cite{GG} applies most directly to our
situation. This observation allows one to apply results and techniques
from that subject to the study of the strong SRRT equation. As we will
see below, equation \eqref{HJcov} is a {\em proper} first order PDE,
so one can approach it using the theory of viscosity solutions (see
\cite{CL, CEL, BD, EvansBook, Tran, CIL, C, Katzourakis}).

\section{The contact Hamiltonian in isothermal Liouville coordinates}
\label{sec:HJ}

We can make the formulation of the previous section less abstract (but
also less geometric) by extracting the form of the contact Hamiltonian
$F_V$ in certain natural coordinates on $j^1(L_+)$ induced by local
isothermal coordinates\footnote{ Recall that any oriented Riemannian
2-manifold $(\cM,\cG)$ (not necessarily compact) admits an atlas
consisting of positively-oriented isothermal coordinate charts, which
corresponds to a holomorphic atlas for the complex structure $J$. See
for example \cite{Jost}.} $(U, x^1,x^2)$ relative to the complex
structure $J$ (such coordinates are the real and imaginary parts of a
local complex coordinate $z=x^1+\i x^2$ defined on $U$).

\subsection{Isothermal coordinates on the scalar manifold}

In local isothermal coordinates $(U,x^1,x^2)$ for the Riemann surface
$S\eqdef (\cM,J)$, the metric $\cG$ and its volume form are given by:
\beqan
\label{isothermal}
&&\cG=e^{2\phi} \cG_0~,~~~\dd s^2= e^{2\phi}\dd s_0^2=e^{2\phi}(\dd x_1^2+\dd x_2^2)~~,\nn\\
&&~~\omega=e^{2\phi} \omega_0=e^{2\phi}\dd x^1 \wedge \dd x^2=\frac{\i}{2} e^{2\phi} \dd z\wedge \dd \bar{z}~~,
\eeqan
where $\phi\in \cC^\infty(U)$ and $\cG_0$ is the flat metric on $U$
with squared line element and volume form given by:
\ben
\label{cG0}
\dd s_0^2=\dd x_1^2+\dd x_2^2=|\dd z|^2~~,~~\omega_0=\dd x^1\wedge \dd x^2\in \Gamma(L_+\vert_U)~~.
\een
Thus $\cG$ is Weyl-equivalent on $U$ to the flat metric $\cG_0$.  In
such coordinates, all remaining information about the scalar field
metric $\cG$ is locally encoded by the {\em conformal factor}
$\phi\in\cC^\infty(U)$, which carries the same information as the
restriction of the volume form $\omega$ to the open set $U$. The
information carried by the conformal class of $\cG$ is encoded in the
choice of isothermal coordinate system, since the latter is determined
(up to a local biholomorphism) by the local complex coordinate
$z=x_1+\i x_2$, which in turn depends on the complex structure $J$.

The Christoffel symbols of $\cG$ are given by:
\ben
\label{ChristoffelIsothermal}
\Gamma_{ij}^k=\delta_i^k \pd_j \phi +\delta_j^k \pd_i \phi-\delta_{ij}\pd_k\phi~~,
\een
while its Gaussian curvature (which equals half of the scalar curvature) takes the form:  
\ben
\label{curvature}
K=- e^{-2\phi} \Delta \phi~~.
\een
The Levi-Civita tensor of $\cG$ is:
\ben
\label{epsilon}
\epsilon_{ij}=\omega(\pd_i,\pd_j)=f\varepsilon_{ij}\Longrightarrow \epsilon^i_{\, j}=\varepsilon_{ij}~~,
\een
where $f\eqdef e^{2\phi}$. The complex structure $J$ is given on $U$ by the conditions: 
\be
J\pd_1=\pd_2~~,~~J\pd_2=-\pd_1~~,~~\mathrm{i.e.}~~J\pd_i=\varepsilon_{ij}\pd_j~~.
\ee
Since $J\pd_i=J_i^{\,j} \pd_j$, this gives $J_i^{\,j}=\varepsilon_{ij}=\epsilon^i_{\, j}=-\epsilon^j_{\, i}$ and:
\be
J_{ij}=\epsilon_{ij}=f\varepsilon_{ij}~~.
\ee
In particular, we have:
\be
\omega_{ij}=\epsilon_{ij}=J_{ij}~~,
\ee
i.e. $\omega(X,Y)=\cG(JX,Y)$ for all $X, Y\in \fX(U)$, which recovers
relation \eqref{omegaJ}.

\subsection{Isothermal Liouville coordinates}

The total space of the restricted jet bundle $j^1(L_+)\vert_U$ admits
coordinates $(x^1,x^2, h, h_1,h_2)$ associated to the isothermal
coordinates $(U,x^1,x^2)$ of $\cM$, where $h$ is the fiber
coordinate on $L_+\vert_U$ defined through the relation (here $\pi_L:L\rightarrow \cM$
is the bundle projection of $L$):
\be
v= h(v) \omega_0\vert{\pi_L(v)}\quad\forall v\in L_+\vert_U
\ee
and $h_1$, $h_2$ are the fiber coordinates of the bundle $j^1(L_+)\vert_U\rightarrow L_+\vert_U$
corresponding to the formal partial derivatives of
$h$ with respect to $x_1$ and $x_2$. Since the isothermal coordinate
system is positively-oriented, we have $\dd x_1\wedge \dd
x_2\vert_U\in \Gamma(L_+\vert_U)$, which implies $h(v)>0$ for all
$v\in L_+\vert_U$.  The volume form $\omega_0=\dd x^1\wedge \dd x^2$
of $\cG_0$ gives a local trivialization of $L$ above $U$ which allows
us to identify the restriction of $\omega\in \Gamma(L_+)$ to the open
set $U$ with the function $f=h\circ\omega=e^{2\phi}\in
\cC^\infty(U)$. In these coordinates on $j^1(L_+)\vert_U$, the basic
contact form $\Theta\in \Omega^1(j^1(L_+)\vert_U)$ is given by:
\be
\Theta=\dd h-h_1 \dd x^1-h_2\dd x^2~~.
\ee
By general theory, this form is a basis for the contact ideal of
$j^1(L_+)\vert_U$. In particular, any contact $1$-form $\theta$ defined on
$j^1(L_+)\vert_U$ has the form $\theta=\psi \Theta$, where $\psi\in
\cC^\infty(j^1(L_+)\vert_U)$ is a smooth function defined on the total space
of the bundle $j^1(L_+)\vert_U$.

\begin{definition}
The {\em isothermal Liouville coordinates} $(x^1,x^2,u,p_1,p_2)$ on
the total space of $j^1(L_+)\vert_U$ determined by the local
isothermal coordinates $(U,x^1,x^2)$ of $S=(\cM,J)$ are defined
through:
\be
u\eqdef \frac{1}{2}\log h~~,~~p_1\eqdef \fd_1 u= \frac{1}{2}\frac{h_1}{h}~~,~~p_2\eqdef \fd_2 u=\frac{1}{2}\frac{h_2}{h}~~,
\ee
where $\fd_1 $ and $\fd_2 $ stand for the formal partial derivatives with respect to $x^1$ and $x^2$. 
\end{definition}

\noindent We refer to $x^1,x^2$ as {\em position coordinates} and to
$p_1,p_2$ as {\em momentum coordinates}. The coordinate $u$ is called
the {\em action coordinate}. Notice the relations:
\be
u\circ \omega=\frac{1}{2}\log f=\varphi~~,~~p_1\circ \omega=\frac{1}{2}\frac{\pd_1 f}{f}=\pd_1\varphi~~,~~p_2\circ\omega=\frac{1}{2}\frac{\pd_2f}{f}=\pd_2\varphi~~.
\ee
It is convenient to define complex-valued fiber coordinates for the
bundle $j^1(L_+)\vert_U\rightarrow L_+\vert_U$ through:
\be
p_z\eqdef \frac{1}{2}(p_1-\i p_2)~~,~~p_{\bar z}\eqdef \frac{1}{2}(p_1+\i p_2)~~,~~
\ee
which correspond to the formal partial derivatives of $u$ with respect
to $z$ and $\bar{z}$. Then $(u,p_z,p_{\bar z})$ are complex-valued
coordinates on the fibers of $j^1(L_+)\vert_U\rightarrow U$ such that
${\bar u}=u$ and $p_{\bar z}=\overline{p_z}$, while $(z,u,p_z,p_{\bar
  z})$ are complex-valued coordinates on the total space of
$j^1(L_+)\vert_U$, where $u$ happens to be valued in $\R$. In
isothermal Liouville coordinates, the basic contact form $\Theta$ is
given by:
\be
\Theta=2 e^{2u} \theta~~, 
\ee
where the contact form:
\be
\theta\eqdef \dd u -p_1 \dd x^1-p_2\dd x^2=\dd u-p_z \dd z-p_{\bar z} \dd \bar{z}
\ee
is the basic contact form in the coordinates $(x^1,x^2, u,p_1,p_2)$
and $(z,\bar{z},u,p_z,p_{\bar z})$ on $j^1(L_+)\vert_U$. In
particular, $(x^1,x^2, u,p_1,p_2)$ are Liouville coordinates for
$\theta$. The Cartan distribution on $j^1(L_+)\vert_U$ coincides with
the kernel distribution of $\theta$ and is given by the completely
non-integrable Pfaffian equation:
\be
\dd u=p_1\dd x^1+p_2 \dd x^2~,~\mathrm{i.e.}~~\dd u=p_z\dd z+p_{\bar z}\dd \bar{z}~~.
\ee
This distribution is globally well-defined and makes $j^1(L_+)$ into a
five-dimensional contact manifold.

In isothermal Liouville coordinates, the geometric contact Hamiltonian $F_V$
becomes a function of the variables $x^1,x^2, u,p_1, p_2$ and the
strong SRRT equation takes the local form:
\be
F(x^1,x^2,\phi(x^1,x^2),\pd_1 \phi(x^1,x^2), \pd_2\phi(x^1,x^2))=0~~,
\ee
being a nonlinear first order PDE of the conformal factor $\phi$. Here
$F$ is the local form of the map $F_V$ of \eqref{HJcov} in isothermal
Liouville coordinates. The equation has classical contact
Hamilton-Jacobi form, where $F_V$ plays the role of Hamiltonian and
$\phi$ plays the role of Hamilton-Jacobi action.  In practice, this
local form can be extracted by substituting expression
\eqref{isothermal} for the scalar field metric in isothermal
coordinates into the frame-free form \eqref{SS} of the strong SRRT
equation.

In the next subsection, we extract the explicit form of this equation
for the conformal factor $\phi$.  Before proceeding with that
derivation, let us discuss the change of isothermal Liouville
coordinates induced by a change of isothermal coordinates of
$S=(\cM,J)$.

Let $(U', y^{1},y^{2})$ be another isothermal coordinate chart of
$(\cM,\cG)$ with $U\cap U'\neq \emptyset$ and associated complex
coordinate $w=y^{1}+\i y^{2}$. Then:
\be
\dd s^2\vert_{U'}=e^{2\phi'}(\dd y_1^2+\dd y_2^2)=e^{2\phi'}|\dd w|^2
\ee
for some $\phi'\in \cC^\infty(U')$. Moreover, we have $w=\chi\circ z$
for some biholomorphism $\chi:x(U\cap V)\rightarrow y(U\cap U')$ and
the following relation holds on $U\cap U'$:
\be
\phi=\phi' +\frac{1}{2}\log \det \left(\frac{\pd y^{i}}{\pd x^j}\right)_{i=1,2}=\phi'+\log |\lambda(z)|~~,
\ee
where:
\be
\lambda(z)\eqdef \frac{\dd w(z)}{\dd z}=\chi'(z)~~.
\ee
The holomorphicity condition $\frac{\pd w}{\pd \bar{z}}=0$ amounts to the
Cauchy-Riemann equations:
\be
\frac{\pd y^{1}}{\pd x^1}=\frac{\pd y^{2}}{\pd x^2}~~,~~\frac{\pd y^{1}}{\pd x^2}=-\frac{\pd y^{2}}{\pd x^1}~~.
\ee
Hence the change of complex coordinates $z\rightarrow w$ induces the transformation:
\ben
\label{varphitf}
\phi(z)\rightarrow \phi'(w)=\phi(z)-\log |\lambda(z)|~~,
\een
where $\lambda$ is a holomorphic function of $z$ which does not vanish
on the open set $z(U\cap U')\subset \C$. We have:
\be
\pd_z\phi'= \pd_z \phi-\frac{\lambda'}{2\lambda}~~,~~\pd_{\bar z}\phi'= \pd_{\bar z} \phi-\frac{\bar{\lambda}'}{2\bar{\lambda}}
\ee
and:
\be
\pd_{w} \phi'=\frac{1}{\lambda}\pd_z\phi'~~,~~\pd_{\bar w}\phi'=\frac{1}{\bar{\lambda}}\pd_{\bar z} \phi'~~.
\ee
This corresponds to the following change of complex coordinates on the jet bundle $j^1(L_+)\vert_{U\cap U'}$:
\be
z\rightarrow w~~,~~u\rightarrow u'\eqdef u -\log |\lambda|~~,~~p_z\rightarrow p_{w}=\frac{1}{\lambda} \left(p_z-\frac{\lambda'}{2\lambda}\right)~~,~~p_{\bar z}\rightarrow p_{\bar w}=\frac{1}{\bar{\lambda}}
\left(p_{\bar z}-\frac{\overline{\lambda'}}{2\overline{\lambda}}\right)~~.
\ee
Notice that this change of coordinates preserves the form of $\theta$:
\be
\theta=\dd u-p_z \dd z-p_{\bar{z}}\dd \bar{z}=\dd u'-p_w \dd w-p_{\bar{w}}\dd \bar{w}
\ee
and hence $(y^{1},y^{2},u, p'_1\eqdef 2\Re(p_w), p'_2\eqdef -2 \Im
(p_w))$ are isothermal Liouville coordinates on $j^1(L_+)\vert_{U\cap U'}$. Thus
a change of isothermal coordinates on the Riemann surface $S=(\cM,J)$
induces a change of local Liouville coordinates on $j^1(L_+)$, i.e. a
(passive) contact transformation.

\subsection{Explicit form of the contact Hamiltonian}

Let $(U,x^1,x^2)$ be an isothermal coordinate chart of $S=(\cM,J)$
such that $U$ is contained in the noncritical submanifold $\cM_0$.
Let $\grad_0, \grad_0^J:\cC^\infty(U)\rightarrow \cC^\infty(U)$ be the
gradient and modified gradient operators defined by the local flat
metric $\cG_0\in \Met(U)$:
\ben
\label{grad0}
\grad_0 V=_U e^{2\phi}\pd^iV \pd_i= \pd_iV \pd_i ~~,~~\grad^J_0 V=_U e^{2\phi} J^{\,\,i}_j\pd^j V \pd_i =-\varepsilon_{ij} \pd_j V \pd_i
\een
and let $D_0, {\tilde D}_0:\cC^\infty(U)\rightarrow \cC^\infty(U)$ be the operators defined through:
\beqan
\label{DDbarDef}
D_0 V &\eqdef_U& \Hess(V)(\grad_0V, \grad_0V)= _U V_{ij}\pd_iV\pd_jV~~\nn\\
{\tilde D}_0V&\eqdef_U& \Hess_J(V)(\grad_0V,\grad_0V)=_U{\tilde V}_{ij}\pd_iV\pd_jV~~.
\eeqan
Moreover, let $\Delta_0:\cC^\infty(U)\rightarrow \cC^\infty(U)$ be the
(negative-definite) Laplacian defined by the local flat metric
$\cG_0$:
\ben
\Delta_0 V =_U (\pd_1^2 +\pd_2^2) V~~.
\een
Finally, let:
\ben
||\dd V||_0 =_U \sqrt{(\pd_1V)^2+(\pd_2 V)^2}
\een
be the norm of $\dd V$ with respect to $\cG_0$.

\begin{lemma}
\label{lemma:Liso}
The strong SRRT equation \eqref{SS} is equivalent on $U$ with the following equation:
\ben
\label{SSThermal}
({\tilde D}_0V)^2 [||\dd V||_0^2\Delta_0V -D_0V]=3 e^{2\phi} V ||\dd V||_0^2 (D_0V)^2~~.
\een
\end{lemma}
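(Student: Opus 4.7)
The approach is purely computational: pull back the frame-free strong SRRT equation \eqref{SS} to the isothermal chart $(U, x^1, x^2)$ and carefully keep track of the Weyl factors $e^{2\phi}$ that relate $\cG$ to the background flat metric $\cG_0$. The key observation is that the Hessian $\Hess(V) = \nabla\dd V$ and modified Hessian $\Hess_J(V)$ appearing in the definitions \eqref{D1D2D3} of $\cD_1,\cD_2,\cD_3$ are exactly the same symmetric covariant $2$-tensors whose components $V_{ij}$ and $\tilde V_{ij}$ (computed with the $\cG$-Christoffel symbols \eqref{ChristoffelIsothermal}) enter the definitions \eqref{DDbarDef} of $D_0V$ and ${\tilde D}_0V$. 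Thus the only difference between $\cD_k(\cG,V)$ and its ``flat-gradient'' counterpart is the replacement of the vector field $\grad V$ by $\grad_0 V$, which amounts to not raising indices with $\cG^{ij}$ but with $\delta^{ij}$.

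In detail, I would first collect the scaling rules: from $\cG_{ij} = e^{2\phi}\delta_{ij}$ we get $\cG^{ij} = e^{-2\phi}\delta^{ij}$ and $\sqrt{\det\cG} = e^{2\phi}$, hence $\pd^iV = e^{-2\phi}\pd_iV$, $\|\dd V\|^2 = e^{-2\phi}\|\dd V\|_0^2$, and (using the standard formula for the Laplacian) $\Delta V = e^{-2\phi}\Delta_0V$. Since each of $\cD_1$, $\cD_2$, $\cD_3$ contains exactly two raised gradient factors, one obtains
\begin{equation*}
\cD_1(\cG,V) = e^{-4\phi}\,{\tilde D}_0V~,\quad \cD_2(\cG,V) = e^{-4\phi}\,D_0V~,
\end{equation*}
while the last formula in \eqref{D1D2D3} gives
\begin{equation*}
\cD_3(\cG,V) = \|\dd V\|^2\Delta V - \cD_2(\cG,V) = e^{-4\phi}\bigl[\|\dd V\|_0^2\,\Delta_0V - D_0V\bigr]~.
\end{equation*}

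Substituting these into the frame-free form \eqref{SS} of the strong SRRT equation yields
\begin{equation*}
e^{-12\phi}\,({\tilde D}_0V)^2\bigl[\|\dd V\|_0^2\,\Delta_0V - D_0V\bigr] = 3V \cdot e^{-2\phi}\|\dd V\|_0^2 \cdot e^{-8\phi}(D_0V)^2~,
\end{equation*}
and cancelling the common factor $e^{-10\phi}$ (and multiplying through by $e^{2\phi}$) gives exactly \eqref{SSThermal}. The only ``nontrivial'' step is to recognize that the $\phi$-dependence hidden inside $V_{ij}$ and $\tilde V_{ij}$ (through the Christoffel symbols \eqref{ChristoffelIsothermal}) is the \emph{same} on both sides of the equivalence, so that those contributions need not be expanded; the entire derivation reduces to bookkeeping of the conformal weights $e^{-4\phi}$ coming from the raised gradients and the factor $e^{-2\phi}$ appearing in $\Delta V$ and $\|\dd V\|^2$. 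I expect no genuine obstacle — only the mild temptation to expand the $\cG$-Christoffel correction in $V_{ij}$, which would make the identification with $D_0V$, ${\tilde D}_0V$ considerably less transparent.
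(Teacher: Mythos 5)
Your proposal is correct and follows essentially the same route as the paper: the paper's proof likewise records the conformal scaling relations $\cD_1(\cG,V)=e^{-4\phi}{\tilde D}_0V$, $\cD_2(\cG,V)=e^{-4\phi}D_0V$, $\|\dd V\|^2=e^{-2\phi}\|\dd V\|_0^2$ and $\Delta V=e^{-2\phi}\Delta_0V$, and substitutes them into \eqref{SS}. Your weight bookkeeping ($e^{-12\phi}$ on the left versus $e^{-10\phi}$ on the right) is accurate, and your observation that the $\cG$-Christoffel content of $V_{ij}$, ${\tilde V}_{ij}$ need not be expanded is exactly the point of the definitions \eqref{DDbarDef}.
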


\begin{proof}
We have: 
\be
D_2 (\cG,V)=e^{-4\phi} D_0V~~,~~D_1(\cG,V)=e^{-4\phi}{\tilde D}_0V~~,~~||\dd V||^2=e^{-2\phi} ||\dd V||^2_0
\ee
and the formula for change of the Laplacian under a conformal transformation gives:
\be
\Delta V=e^{-2\phi} \Delta_0V~~.
\ee
Using these relations, the strong SRRT equation \eqref{SS} reduces on $U$ to equation \eqref{SSThermal}.
\end{proof}

Let $\cdot$ be the scalar product defined by $\cG_0$ (which satisfies
$\pd_i\cdot \pd_j=\delta_{ij}$) and $\Hess_0(V)\in \cC^\infty(U)$ be
the Riemannian Hessian of $V$ with respect to the local flat metric $\cG_0$:
\be
\Hess_0(V)\eqdef_U\nabla_0\dd V= \pd_i\pd_jV \dd x^i\otimes \dd x^j\in \Gamma(\otimes^2 T^\ast U)~~,
\ee
where $\nabla_0$ is the Levi-Civita connection of $\cG_0$. 

\begin{lemma}
\label{lemma:DDbar}
We have:
\beqan
\label{DDbar}
&& D_0V=\Hess_0(V)(\grad_0V,\grad_0V)-||\dd V||_0^2 \grad_0 \phi \cdot \grad_0V~~\nn\\
&& {\tilde D}_0V =\Hess_0(V)(\grad_0V, \grad_0^JV)-||\dd V||_0^2 \grad_0 \phi \cdot \grad^J_0V~~.
\eeqan
\end{lemma}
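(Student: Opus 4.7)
The plan is to substitute the explicit isothermal form of the Christoffel symbols \eqref{ChristoffelIsothermal} into the coordinate expressions $V_{ij} = \pd_i\pd_j V - \Gamma^k_{ij}\pd_k V$ and $\tilde V_{ij} = \epsilon_j^{\,k} V_{ki}$ from \eqref{HessV}, \eqref{HessJV}, contracting the result against $\pd_i V\pd_j V$ and then tidying everything with the antisymmetry of $\varepsilon_{ij}$. Throughout I would use that $\epsilon_j^{\,k} = \varepsilon_{jk}$ on $U$ (a direct consequence of \eqref{epsilon}), together with the flat-metric coordinate formulas $\Hess_0(V)_{ij} = \pd_i\pd_j V$, $\grad_0 V = \pd_i V\, \pd_i$, and $\grad_0^J V = -\varepsilon_{ij}\pd_j V\,\pd_i$ from \eqref{grad0}.

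For the first identity I would split $D_0V = V_{ij}\pd_i V\pd_j V$ as $\pd_i\pd_j V\pd_i V\pd_j V - \Gamma^k_{ij}\pd_k V\pd_i V\pd_j V$. The principal term is manifestly $\Hess_0(V)(\grad_0 V, \grad_0 V)$. In the Christoffel term, expanding $\Gamma^k_{ij} = \delta^k_i\pd_j\phi + \delta^k_j\pd_i\phi - \delta_{ij}\pd_k\phi$ and contracting gives three scalars: the first two pieces each reduce to $||\dd V||_0^2(\grad_0\phi\cdot \grad_0 V)$ (by collapsing the Kronecker delta and recognizing $\pd_i V\pd_i V = ||\dd V||_0^2$), while the third contributes $-||\dd V||_0^2(\grad_0\phi\cdot\grad_0 V)$, the three combining to $||\dd V||_0^2(\grad_0\phi\cdot\grad_0 V)$. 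The overall sign in $V_{ij}$ then gives the desired minus in \eqref{DDbar}.

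For the second identity I would first note that, by $\epsilon_j^{\,k}=\varepsilon_{jk}$, the principal part of $\tilde V_{ij}\pd_i V\pd_j V$ is $\varepsilon_{jk}\pd_k\pd_i V\pd_i V\pd_j V$, which equals $\Hess_0(V)(\grad_0 V,\grad_0^J V)$ after a trivial relabelling and use of $(\grad_0^J V)^i = -\varepsilon_{ij}\pd_j V$. The Christoffel contribution $-\varepsilon_{jk}\Gamma^m_{ki}\pd_m V\pd_i V\pd_j V$ again splits into three pieces by \eqref{ChristoffelIsothermal}. Two of these pieces vanish by antisymmetry: the first pairs $\varepsilon_{jk}$ with the symmetric factor $\pd_k V\pd_j V$, while the third pairs $\varepsilon_{ji}$ (after the $\delta_{ki}$ collapse) with the symmetric $\pd_i V\pd_j V$. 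Only the middle term $-\varepsilon_{jk}\pd_k\phi\pd_i V\pd_i V\pd_j V$ survives; factoring out $||\dd V||_0^2$ and recognizing $\varepsilon_{jk}\pd_k\phi\pd_j V = \grad_0\phi\cdot\grad_0^J V$ yields exactly $-||\dd V||_0^2(\grad_0\phi\cdot\grad_0^J V)$.

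The calculation is entirely algebraic; the only real obstacle is sign-and-index bookkeeping, particularly keeping consistent conventions for $\epsilon_j^{\,k}$ vs.\ $\epsilon^k_{\,j}$ (they differ by a sign, see \eqref{J}) and making sure that the antisymmetric/symmetric cancellations in the $\tilde D_0 V$ computation are pinned to the right index pair. Once those conventions are fixed, both identities drop out immediately from the expansion.
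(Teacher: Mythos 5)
Your proposal is correct and follows exactly the route the paper takes: it substitutes the isothermal Christoffel symbols \eqref{ChristoffelIsothermal} into $V_{ij}=\pd_i\pd_jV-\Gamma^k_{ij}\pd_kV$ and contracts against $\pd_iV\pd_jV$ (the paper's proof states this in one line and leaves the bookkeeping implicit). Your sign accounting checks out, including the $(+1+1-1)$ combination in the $D_0V$ case and the two antisymmetry cancellations in the ${\tilde D}_0V$ case.
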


\begin{proof}
In isothermal coordinates, the Hessian tensor of $V$ is given by:
\ben
V_{ij}=\pd_i\pd_jV-\Gamma_{ij}^k \pd_kV=\pd_i\pd_jV+\delta_{ij}\pd_k\phi\pd_kV-\pd_i \phi\pd_jV-\pd_j \phi\pd_i V~~,  
\een
where we used relations \eqref{HessV} and
\eqref{ChristoffelIsothermal}. Using this in \eqref{DDbarDef} gives
\eqref{DDbar}. 
\end{proof}

\begin{remark}
Using \eqref{grad0}, we can write \eqref{DDbar} as: 
\beqan
\label{D0rels}
D_0V&=&\pd_i\pd_jV\pd_iV\pd_jV- ||\dd V||_0^2 \pd_i \phi \pd_iV~~\nn\\
{\tilde D}_0V&=&\!\!-\left[\pd_i\pd_jV\pd_iV(\varepsilon_{jk}\pd_kV)-||\dd V||_0^2 \pd_i \phi (\varepsilon_{ij} \pd_jV)\right]~~.
\eeqan
\end{remark}

\begin{prop}
The strong SRRT equation \eqref{SSThermal} is equivalent on $U$ with the
following equation:
\ben
\label{SSThermalRed}
[\grad_0^JV \cdot \grad_0\phi-\tilde{H}_0]^2 [\grad_0V \cdot \grad_0\phi+\Delta_0V - H_0]=3 e^{2\phi} V  [\grad_0V \cdot \grad_0\phi-H_0]^2~~
\een
where we defined:
\beqa
&& H_0\eqdef \frac{\Hess_0(V)(\grad_0 V,\grad_0 V)}{||\dd V||_0^2}=
\frac{(\partial_i\partial_j V)(\partial_i V)(\partial_j V)}{||\dd V||_0^2}~, \\
&& \tilde H_0\eqdef \frac{\Hess_0(V)(\grad_0 V,\grad_0^J V)}{||\dd V||_0^2}=
\frac{-(\partial_i\partial_j V)(\partial_i V)\varepsilon_{jk}(\partial_j V)}{||\dd V||_0^2}~.
\eeqa
\end{prop}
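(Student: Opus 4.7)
The plan is to derive equation \eqref{SSThermalRed} from equation \eqref{SSThermal} by a direct substitution, using Lemma \ref{lemma:DDbar} to eliminate $D_0V$ and $\tilde D_0 V$ in favor of the quantities $H_0, \tilde H_0$ and contractions of $\grad_0\phi$ with $\grad_0V, \grad_0^J V$. Since $U$ is contained in the noncritical submanifold $\cM_0$, the norm $\|\dd V\|_0$ is strictly positive on $U$, so $H_0$ and $\tilde H_0$ are well-defined smooth functions on $U$ and we will be allowed to divide by powers of $\|\dd V\|_0$ at the end.

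First, I would rewrite the identities of Lemma \ref{lemma:DDbar} as
\[
D_0V = \|\dd V\|_0^2\bigl[H_0-\grad_0V\cdot\grad_0\phi\bigr]
     = -\|\dd V\|_0^2\bigl[\grad_0V\cdot\grad_0\phi - H_0\bigr],
\]
\[
\tilde D_0V = -\|\dd V\|_0^2\bigl[\grad_0^JV\cdot\grad_0\phi - \tilde H_0\bigr],
\]
which follows immediately from the definitions of $H_0$ and $\tilde H_0$ after factoring out $\|\dd V\|_0^2$. These identities are the core content; the remainder is bookkeeping.

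Next, I would substitute these expressions into the left and right hand sides of \eqref{SSThermal} separately. On the left, squaring $\tilde D_0V$ and using the identity above for $D_0V$ inside the bracket $\|\dd V\|_0^2\Delta_0V - D_0V$, the factor of $\|\dd V\|_0^2$ pulls through and the bracket collapses to
\[
\|\dd V\|_0^2\bigl[\grad_0V\cdot\grad_0\phi + \Delta_0V - H_0\bigr].
\]
A quick count of powers shows the LHS becomes
\[
\|\dd V\|_0^6\,\bigl[\grad_0^JV\cdot\grad_0\phi - \tilde H_0\bigr]^2\bigl[\grad_0V\cdot\grad_0\phi + \Delta_0V - H_0\bigr].
\]
On the right, squaring $D_0V$ and multiplying by the external factor $\|\dd V\|_0^2$ produces
\[
3\,e^{2\phi}V\,\|\dd V\|_0^6\,\bigl[\grad_0V\cdot\grad_0\phi - H_0\bigr]^2.
\]

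Finally, I would divide both sides by the common nonzero factor $\|\dd V\|_0^6$, which is legitimate on $U\subset \cM_0$. This yields precisely \eqref{SSThermalRed}, and since division by a nonzero function is reversible, the two equations are equivalent on $U$. The only real obstacle is keeping the signs straight in the factorization of $D_0V$ and $\tilde D_0V$; once those are fixed, the proof reduces to a one-line substitution plus cancellation of the overall factor $\|\dd V\|_0^6$.
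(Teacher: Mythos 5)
Your proposal is correct and follows the same route as the paper, which simply states that the result ``follows immediately from Lemmas \ref{lemma:Liso} and \ref{lemma:DDbar}''; you have just made the substitution explicit, and your sign bookkeeping and the cancellation of the overall factor $\|\dd V\|_0^6$ (legitimate since $U\subset\cM_0$) are both right.
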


\begin{proof}
Follows immediately from Lemmas \ref{lemma:Liso} and \ref{lemma:DDbar}.
\end{proof}

Let $U_0\subset \R^2$ be the image of $U$ in the isothermal coordinate
chart $(U,x^1,x^2)$. The isothermal Liouville coordinates $(x^1,x^2,
u,p_1,p_2)$ induce an isomorphism of fiber bundles
$j^1(L_+)\vert_U\simeq U_0\times \R\times \R^2$, which, by a common
abuse of notation, allows us to take $x=(x^1,x^2)\in U_0$, $u\in \R$
and $p\eqdef (p_1,p_2)\in \R^2$. Consider the smooth functions
$A,B:U_0\times \R^2\rightarrow \R$ defined through:
\ben
\label{ABdef}
A(x,p)\eqdef (\grad_0 V)(x)\cdot p=(\pd_iV)(x) p_i~~,~~B(x,p)\eqdef \grad_0^JV \cdot p=-\epsilon_{ij} (\pd_jV)(x) p_i~~.
\een
Notice that $A$ and $B$ are linear functions of degree one in $p_1$
and $p_2$ whose coefficients depend smoothly on $x\in U_0$. The linear
transformation $\R^2\ni (p_1,p_2)\rightarrow (A,B)\in \R^2$ has the
matrix form:
\be
\left[\begin{array}{c} A\\ B \end{array}\right]=\left[\begin{array}{cc} \pd_1V & \pd_2 V \\ -\pd_2V & \pd_1 V \end{array}\right] \left[\begin{array}{c}p_1\\p_2\end{array}\right]
\ee
and its determinant $||\dd V||_0^2=(\pd_1V)^2+(\pd_2V)^2$ cannot
vanish on $U_0$ since $U$ is a subset of the noncritical submanifold
$\cM_0$.  Hence this linear transformation is bijective everywhere on
$U_0$, with inverse given by:
\ben
\label{pAB}
p_1=\frac{\pd_1V A-\pd_2V B}{(\pd_1V)^2+(\pd_2V)^2}~~,~~p_2=\frac{\pd_2V A+\pd_1V B}{(\pd_1V)^2+(\pd_2V)^2}~~.
\een

\begin{thm}
In isothermal Liouville coordinates $(x^1,x^2,u,p_1,p_2)$ on
$j^1(L_+)\vert_U$, the contact Hamiltonian $F_V$ of \eqref{FVdef}
corresponds to the smooth function $F:U_0\times \R^3\rightarrow \R$
given by:
\beqan
\label{Floc}
F(x,u,p)\!\!\!\!\!&\eqdef&\!\!\!\!\! -[B(x,p)\!-\!\tilde{H}_0(x)]^2 [A(x,p)\!+\!(\Delta_0V)(x)\! -\! H_0(x)]\!+\!3 e^{2 u} V  [A(x,p)\!-\!H_0(x)]^2\nn\\
\eeqan
In particular, the contact Hamilton-Jacobi equation \eqref{HJcov} has
the following local form in these coordinates:
\ben
\label{Feq}
F(x_1,x_2,\phi, \pd_1\phi,\pd_2\phi)=0~~.
\een
\end{thm}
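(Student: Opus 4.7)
The plan is to translate the reduced form \eqref{SSThermalRed} of the strong SRRT equation, established in the preceding Proposition, into the fibre coordinates $(x^1,x^2,u,p_1,p_2)$ on $j^1(L_+)\vert_U$. Since the content of isothermal Liouville coordinates is precisely that $u\circ j^1(\omega) = \phi$ and $p_i\circ j^1(\omega) = \pd_i\phi$ for any $\omega = e^{2\phi}\omega_0 \in \Gamma(L_+\vert_U)$, the argument reduces to a careful substitution rather than any new PDE analysis.

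First, I would use \eqref{grad0} to rewrite the two linear forms appearing in \eqref{SSThermalRed} as
\be
\grad_0V\cdot \grad_0\phi = (\pd_iV)(\pd_i\phi),\qquad \grad_0^JV\cdot \grad_0\phi = -\varepsilon_{ij}(\pd_jV)(\pd_i\phi),
\ee
whose right-hand sides coincide with $A(x,p)$ and $B(x,p)$ from \eqref{ABdef} upon the identification $p_i \leftrightarrow \pd_i\phi$. Substituting these together with $u \leftrightarrow \phi$ into \eqref{SSThermalRed} and moving all terms to a single side yields
\be
3 e^{2u}V\,[A(x,p)-H_0(x)]^2 - [B(x,p)-\tilde H_0(x)]^2\,[A(x,p)+\Delta_0V(x)-H_0(x)] = 0,
\ee
which is precisely the equation $F(x,u,p) = 0$ for the function $F$ displayed in \eqref{Floc}. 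Evaluating along $j^1(\phi)$ then delivers the local form \eqref{Feq} of the strong SRRT equation.

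It remains to identify this $F$ with the local coordinate representative of the intrinsic contact Hamiltonian $F_V$ from \eqref{FVdef}. By construction, $F_V$ is the $j^2(V)$-section of the fibre bundle morphism $F^J$ introduced in \eqref{Fdef}, so its local expression in any trivialization of $j^1(L_+)\vert_U$ is uniquely characterised by the property that $F_V\circ j^1(\omega)$ reproduces the scalar left-hand side of the strong SRRT equation for every $\omega\in\Gamma(L_+\vert_U)$. The displayed $F$ satisfies this property by the previous paragraph, and isothermal Liouville coordinates provide exactly the trivialization of $j^1(L_+)\vert_U$ compatible with the complex parameterization used to build $F_V$; hence $F$ coincides with the local form of $F_V$. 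The only genuine pitfall is tracking the sign conventions embedded in $\grad_0^J$ and in the Levi-Civita symbol $\varepsilon_{ij}$, and verifying that no residual $u$- or $p$-dependence remains beyond $e^{2u}$ and the polynomial combinations assembled into $A$ and $B$; both checks are immediate from \eqref{D0rels} and from the fact that $F_V$ is a first-order operator in $\omega$.
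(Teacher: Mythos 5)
Your proposal is correct and follows essentially the same route as the paper: the paper's proof likewise observes that $A=\grad_0V\cdot\grad_0\phi$ and $B=\grad_0^JV\cdot\grad_0\phi$ under the identification $p_i\leftrightarrow\pd_i\phi$, $u\leftrightarrow\phi$, and then notes that \eqref{SSThermalRed} rearranges directly into \eqref{Feq}. Your additional paragraph identifying the displayed $F$ with the coordinate representative of the geometric Hamiltonian $F_V$ is a harmless elaboration of what the paper leaves implicit.
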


\begin{proof}
Since: 
\ben
\label{AB}
A=\grad_0 V\cdot \grad_0\phi~~,~~B=\grad_0^J V\cdot \grad_0\phi~~,
\een
an easy computation shows that equation \eqref{SSThermalRed} is equivalent with \eqref{Feq}.
\end{proof}

We call \eqref{Feq} the {\em contact Hamilton-Jacobi equation for the
  conformal factor}. Notice that $F$ is a polynomial function of
degree $3$ in $p_1$ and $p_2$ whose coefficients depend smoothly on
the point $(x,u)\in U_0\!\times \!\R$.  In the next sections, we fix
the scalar potential $V$ and analyze \eqref{Feq} as a nonlinear first
order PDE for the conformal factor $\phi$.

\section{Analysis of the contact Hamilton-Jacobi equation for the conformal factor}
\label{sec:nonlinear}

In this section, we discuss some aspects of the contact Hamilton-Jacobi
equation for the conformal factor, including local existence and
unicity of solutions for its Dirichlet problem using the method of
characteristics. We also show that the contact Hamilton-Jacobi
operator is {\em proper} in the sense of the theory of viscosity
solutions and discuss a few general features of such solutions, giving
some numerically computed examples. 

\subsection{The momentum curve}
\label{subsec:curve}

The condition $F(x,u,p)=0$ defines a cubic curve $C_{x,u}$ in the
$p$-plane which depends on the parameters $(x,u)\in U_0\times \R$; we
will call this the {\em momentum curve} at the point $(x,u)$. In the
linear coordinates $A,B$ on the $p$-plane defined in \eqref{ABdef},
the momentum curve has equation:
\beqa
&& A B^2 -3e^{2 u} V A^2 -2 {\tilde H}_0 A B+({\tilde H}_0^2+ 6 e^{2 u} H_0 V) A-2{\tilde H}_0(\Delta_0 V -H_0) B\nn\\
&&+B^2(\Delta_0 V -H_0)+{\tilde H}_0^2 (\Delta_0V-H_0)-3e^{2 u} H_0^2V=0~~,
\eeqa
whose only cubic term is $AB^2$. We have:
\be
\label{F}
F\!=\! -A B^2 \!+3V e^{2u}\! A^2\!-(\Delta_0V\!-\!H_0) B^2\!+\!2{\tilde H}_0 A B -(6 V e^{2u} H_0+{\tilde H}_0^2) A\!+\! 2{\tilde H}_0(\Delta_0 V\!-\!H_0) B+F_0~,
\ee
where: 
\be
F_0={\tilde H}_0^2 (\Delta_0V\!-\!H_0)-3V e^{2 u} H_0^2
\ee
is the momentum-independent term. When $F_0\neq 0$, the momentum curve
does not contain the origin of the $p$-plane. 

\subsection{Properness of the contact Hamilton-Jacobi equation for $\phi$. Viscosity solutions}

Notice that $F$ depends
linearly on $e^{2u}$, which enters through the term:
\be
3V(H_0- A)^2e^{2u}~~.
\ee
Since $V>0$, it follows that $F$ is a monotonously increasing
function of $u$ when $x$ and $p$ are fixed. Hence our contact
Hamilton-Jacobi equation \eqref{Feq} is a {\em proper} first order PDE
in the sense of \cite{CIL} and we can apply to it the theory of
viscosity solutions. Viscosity solutions of weakly-elliptic nonlinear
PDEs of first and second order were introduced by Crandall, Evans and
Lyons in \cite{CL,CEL} motivated by the method of vanishing
viscosity. We refer the reader to \cite{BD}, \cite[Chap.10]{EvansBook}
and \cite{Tran} for the first order theory and to
\cite{CIL,C,Katzourakis} for the second-order theory.

Viscosity solutions are a special kind of weak solutions which are
particularly useful in the study of proper contact Hamilton-Jacobi
equations. By definition, they are continuous (but generally not
differentiable) functions which satisfy the equation in a generalized
sense involving certain inequalities related to a weak form of the
maximum principle. Under reasonably mild conditions, properness of a
contact Hamilton-Jacobi equation implies that a consistent viscosity
version of the Dirichlet problem admits a unique viscosity
solution. Moreover, approximations of that viscosity solution can be
obtained through the {\em method of vanishing viscosity}, which in our
case proceeds by adding a small {\em viscosity term} $-\fc\Delta_0
\phi$ to the opposite of \eqref{Feq}, thereby replacing the contact
Hamilton-Jacobi equation for $\phi$ with its {\em viscosity
  perturbation}:
\ben
\label{viscFeq}
F(x_1,x_2,\phi,\pd_1\phi,\pd_2\phi)-\fc\Delta_0\phi =0~~.
\een
The quantity $\fc>0$ is a small {\em positive} real number called the
{\em viscosity parameter}. One then computes the classical solutions
$\phi_\fc$ of the given Dirichlet problem for equation \eqref{viscFeq}
and considers its limit as $\fc\rightarrow 0$. Under reasonably mild
conditions, one can show that this recovers the viscosity solution of
the viscosity version of the initial Dirichlet problem of \eqref{Feq}.

Since a viscosity solution need not be of class $\cC^1$, its partial
derivatives may be discontinuous on a locus of measure zero --- in
which case one says that the solution has {\em shocks} along that
locus. In general, a locally-defined classical solution of \eqref{Feq}
cannot be extended to a globally-defined classical solution, an
observation which is well-known in the subject of nonlinear PDEs and
arises in the local theory due to the phenomenon of ``crossing
characteristics''. If one attempts to extend such a solution while
preserving continuous differentiability, then one generally encounters
singularities where the classical solution becomes non-differentiable
or tends to infinity.

In practice, the viscosity solution of a Dirichlet problem for the
contact Hamilton-Jacobi equation \eqref{Feq} can be approximated numerically
by choosing a very small viscosity parameter $\fc$ and computing the
numerical solution of \eqref{viscFeq} through an appropriate version
of the method of finite elements. The computation becomes increasingly
resource intensive as the viscosity parameter decreases. In this
paper, we performed a few such computations using
\texttt{Mathematica}, without attempting to address the mathematical
questions of convergence to a viscosity solution of \eqref{Feq}, numerical
stability and so forth.

Figure \ref{fig:ViscSolutions} illustrates how numerical solutions of
a Dirichlet problem for the viscosity perturbation \eqref{viscFeq}
approach a viscosity solution of \eqref{Feq} as one takes
$\fc\rightarrow 0$. In this example, we considered the viscosity
perturbation of the contact Hamilton-Jacobi equation on a domain
around the origin in $\R^2$ where the quadratic potential
$V(x_1,x_2)=\frac{1}{18}+\frac{1}{2}(x_2^2-x_1^2)$ is positive and
imposed the Dirichlet condition
$\phi=-\log\left[\frac{\log(20)}{20}\right]$ along the circle of
radius $R=\frac{1}{20}$ centered at the origin. The figure shows
classical solutions of this boundary value problem for the viscosity
perturbation \eqref{viscFeq}, for three values of the viscosity
parameter $\fc$.

\begin{figure}[H]
\centering ~~
\includegraphics[width=.75\linewidth]{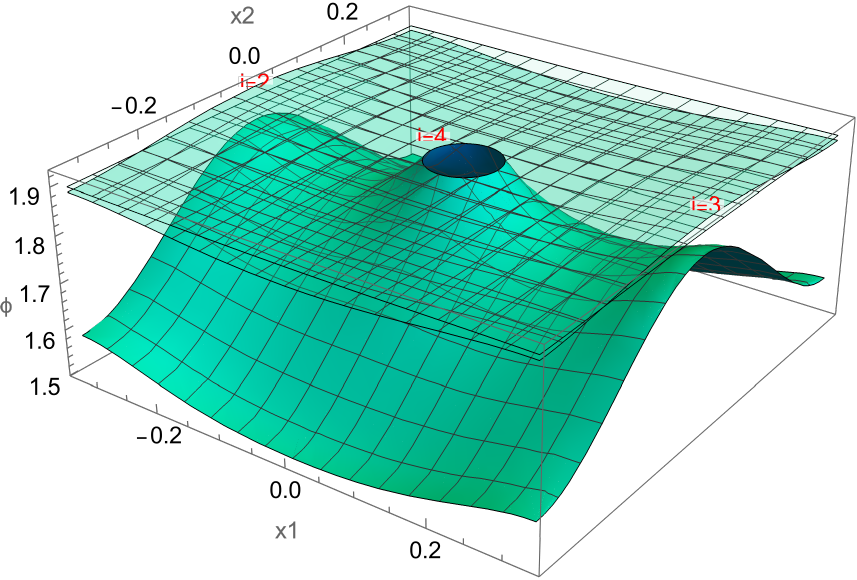}
\caption{The potential
  $V(x_1,x_2)=\frac{1}{18}+\frac{1}{2}(x_2^2-x_1^2)$ and viscosity
  approximants of the solution of the corresponding contact
  Hamilton-Jacobi equation with Dirichlet boundary condition
  $\phi=-\log[R \log(1/R)]$ imposed on the circle of radius
  $R=\frac{1}{20}$ centered at the origin. We show three viscosity
  approximants of the viscosity solution of this boundary value
  problem in the exterior of the disk which bounds this circle, for
  viscosity parameters $\fc=e^{-2i}$ with $i=2,3,4$. The solutions
  with larger viscosity coefficient are shown with higher
  transparency.}
\label{fig:ViscSolutions}
\end{figure}

\subsection{Study of the momentum curve}

For fixed $x$ and $u$, the momentum curve is a real planar cubic. To
study it, define:
\ben
P_1\eqdef A-H_0~~,~~P_2=B-{\tilde H}_0~~,
\een
which are related to $p_1$ and $p_2$ by an $x$-dependent affine
transformation. Then $-F$ can be written as:
\ben
\label{FP}
-F= P_1 P_2^2 - 3 V e^{2 u} P_1^2 +  (\Delta_0 V) P_2^2 ~~.
\een
Notice that the origin of the $(P_1,P_2)$-plane always lies on the
momentum curve. This corresponds to the point $(A,B)=(H_0,{\tilde
  H}_0)$ in the $(A,B)$-plane and to the point with coordinates:
\beqan
\label{pchar}
\!\!\!\!\!\!&&\!\!\!\!\!\!\!p_1^s(x)\eqdef -\frac{\grad V(x)\cdot (-H_0(x),{\tilde H}_0(x))}{||\dd V(x)||^2}=\frac{\pd_1V(x) H_0(x)-\pd_2V(x) {\tilde H}_0(x)}{(\pd_1V(x))^2+(\pd_2V(x))^2}~~\nn\\
\!\!\!\!\!\!&&\!\!\!\!\!\!\!p_2^s(x)\eqdef \frac{\grad_J V(x)\cdot (-H_0(x),{\tilde H}_0(x))}{||\dd V(x)||^2}=\frac{\pd_2V(x) H_0(x)+\pd_1V(x) {\tilde H}_0(x)}{(\pd_1V(x))^2+(\pd_2V(x))^2}
\eeqan
in the $(p_1,p_2)$-plane, which we call {\em the special momentum} at
$x\in U_0$. Notice that the action coordinate $u$ does not appear in
the right hand side of \eqref{pchar}. When $x$ varies in $U_0$, the
equations above define the {\em special surface} $\Sigma_F\subset
U_0\times \R^2$.

The singular points of the momentum curve are the solutions of
the equations:
\be
\pd_{p_1}F=\pd_{p_2}F=F=0~\Longleftrightarrow~ \pd_{P_1}F=\pd_{P_2}F=F=0~~.
\ee
It is easy to check that the only solution is the origin of the
$(P_1,P_2)$ plane, which therefore is the only singular point of the
momentum curve. This corresponds in the $(p_1,p_2)$ plane to the
special momentum \eqref{pchar}. When $(\Delta_0 V)(x)=0$, the momentum
curve is reducible. In this case, $F$ factorizes as:
\be
F=-P_1 (P_2^2 -3 V e^{2u} P_1)
\ee
and hence the irreducible components are the line $P_1=0$ and the
parabola with equation $P_2^2 =3 V e^{2u} P_1$.  The two components are
tangent to each other at the origin of the $(P_1,P_2)$-plane.

The equation $F=0$ reads:
\be
P_2^2(P_1+\Delta_0 V) =3 V e^{2u} P_1^2~~.
\ee
When $P_2=0$ this equation requires $P_1=0$ and hence the origin of
the $(P_1,P_2)$-plane is the only point of the momentum curve which
lies on the $P_1$ axis. When $P_1\neq 0$, the equation requires
$P_1> -\Delta_0 V$, in which case it gives:
\ben
\label{P2sol}
P_2=\pm (3 V)^{1/2} e^{u} \frac{P_1}{\sqrt{P_1+\Delta_0V}}~~.
\een
Hence the momentum curve is symmetric under reflection in the
$P_1$-axis. When $\Delta_0 V\geq 0$, the curve is connected and
contained in the half-plane defined by the inequality $P_1\geq
-\Delta_0 V$. When $\Delta_0 V=0$, it is the union of the $P_2$-axis
with a parabola which is tangent to that axis at the origin of the
$(P_1,P_2)$-plane. When $\Delta_0 V>0$, it is the union of two
embedded curves which intersect each other at the origin of that
plane. When $\Delta_0 V<0$, the curve has three connected components,
namely the origin of the $(P_1,P_2)$-plane (which is its only singular
point) and two connected components which are nonsingular and
contained in the half-space $P_1> -\Delta_0 V$.  These three cases are
illustrated in Figure \ref{fig:CubicCurve} at a point $x\in U_0$ where
$V(x) e^{2u(x)}=1$ and $(\Delta_0 V)(x)\in \{-1,0,1\}$.

\begin{figure}[H]
\centering
\begin{minipage}{.32\textwidth}
\centering  \includegraphics[width=\linewidth]{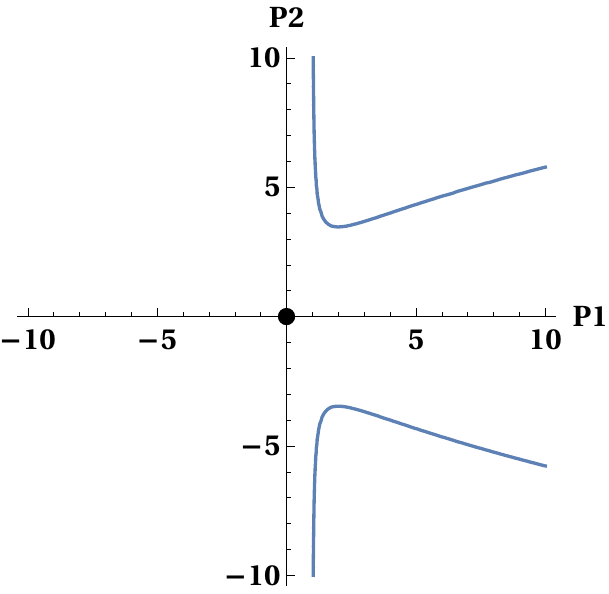}
\subcaption{$(\Delta_0 V)(x)=-1$.}
\end{minipage}
\hfill
\begin{minipage}{.32\textwidth}
\centering ~~ \includegraphics[width=\linewidth]{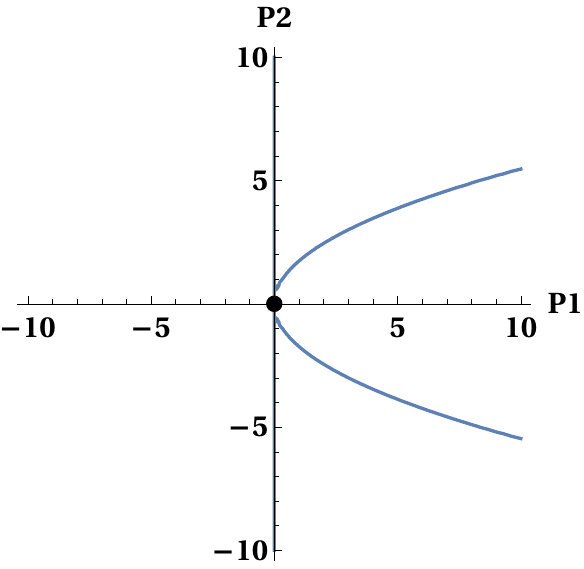}
\subcaption{$(\Delta_0 V)(x)=0$.}
\end{minipage}
\begin{minipage}{.32\textwidth}
\centering ~~ \includegraphics[width=\linewidth]{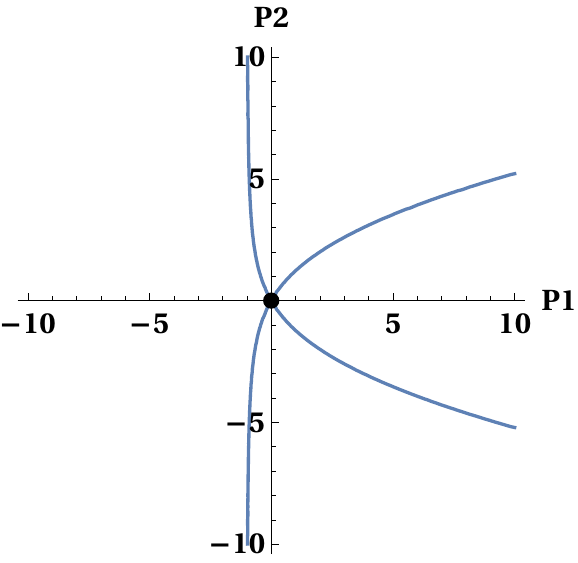}
\subcaption{$(\Delta_0 V)(x)=1$.}
\end{minipage}
\caption{The momentum curve for $V(x) e^{2u(x)}=1$ for the cases
  $(\Delta_0 V)(x)=-1,0,1$. The singular point of the curve is shown
  as a black dot.}
\label{fig:CubicCurve}
\end{figure}

\pagebreak

When $x$ varies in $U_0$, the momentum curve transitions between being
connected or disconnected when $x$ crosses the curve defined inside
$U_0$ by the equation $(\Delta_0 V)(x)=0$. This indicates a
qualitative change in the local character of the contact
Hamilton-Jacobi equation \eqref{Feq} and hence also in the qualitative
behavior of its local solutions. When $(\Delta_0V)(x)\neq 0$, equation
\eqref{P2sol} shows that the momentum curve is a double cover of an
interval on the $P_1$-axis, which is branched when $(\Delta_0V)(x)
>0$. When $\phi$ depends continuously on $x$, the equation
$F(x,\phi(x),p)=0$ determines $p$ as a {\em continuous} function of
$x$ only if $p$ is constrained to lie on one of the two branches. In
particular, this must be the case for the momentum
$p=(\pd_1\phi,\pd_2\phi)$ of a classical solution $\phi$ of the
contact Hamilton-Jacobi equation \eqref{Feq}. When $x$ varies in
$U_0$ such that it crosses the locus $\Delta_0 V=0$ (which
generically is one-dimensional), the partial derivatives of a
generalized solution $\phi$ (for example, a Lipschitz continuous
solution which satisfies the equation almost everywhere) may vary
discontinuously. In the regions where $\Delta_0 V\neq 0$ there are two
types of classical solutions $\phi$ of \eqref{Feq}, which are
distinguished by the branch of the momentum curve on which the
momentum $p=(\pd_1\phi,\pd_2\phi)$ lies.

\begin{remark}
When $P_1\neq 0$, equation \eqref{P2sol} gives:
\be
B={\tilde H}_0\pm (3 V)^{1/2} e^{u} \frac{A-H_0}{\sqrt{A-H_0+\Delta_0V}}~~.
\ee
One can use this relation to express $p_2$ in terms of $p_1$
locally and hence to write \eqref{Feq} as a time-dependent contact
Hamilton-Jacobi equation for a 1-dimensional system, where the
coordinate $x_1$ is treated as ``time''.  This can form the basis for
an alternative numerical approach to approximating viscosity solutions of
\eqref{Feq}, but we will not pursue this point of view in the present
paper.
\end{remark}

\subsection{Characteristic points}

By general theory, a {\em characteristic point} of \eqref{Feq} is a point
$(x,u,p)\in U_0\times \R^3$ such that:
\be
F(x,u,p)=F_{p_1}(x,u,p)=F_{p_2}(x,u,p)=0~~.
\ee
These conditions mean that $p=(p_1,p_2)$ coincides with the singular
point of the momentum curve $C_{x,u}$, i.e. with the special momentum
\eqref{pchar} at $x$. A point $(x,u,p)$ is characteristic iff $(x,p)$
lies on the special surface $\Sigma_F$.

\subsection{The method of characteristics}

We recall the classical method of characteristics for {\em
  locally} solving Dirichlet boundary value problems of first order
PDEs (see, for example, \cite{Melikyan}).

The Dirichlet problem for the PDE \eqref{Feq} asks for a solution
$\phi$ of this equation which satisfies the boundary condition:
\ben
\label{Dirichlet}
\phi\circ \gamma=\phi_0~~,
\een
where $\gamma:I\rightarrow U_0$ is a non-degenerate\footnote{I.e.  we
have $\gamma'(q)\neq 0$ for all $q\in I$, where $\gamma'\eqdef
\frac{\dd\gamma}{\dd q}$.} smooth curve and $\phi_0:I\rightarrow \R$
is a smooth function. Here $I$ is an interval not reduced to a
point. When this interval is compact and the boundary data
$(\gamma,\phi_0)$ is regular (see below), this problem can be solved
locally through the method of characteristics.

The characteristic system of \eqref{Feq} reads\footnote{The parameter
$t$ used here should not be confused with the cosmological time.}:
\beqan
\label{charsys}
&& \frac{\dd x^i}{\dd t}=F_{p_i}(x,u,p)\nn\\
&& \frac{\dd u}{\dd t}=p_i F_{p_i}(x,u,p)\\
&& \frac{\dd p_i}{\dd t}=-F_{x_i}(x,u,p)-p_i F_u(x,u,p)~~,\nn
\eeqan
where $F_{x_i}$, $F_u$ and $F_{p_i}$ are the partial derivatives
of $F$ with respect to $x_i$, $u$ and $p_i$. We denote the
defining vector field of this system by:
\ben
\label{xiF}
\Upsilon_F\eqdef (F_p,p F_p, -F_x-p F_u)~~,
\een
where $pF_p\eqdef p_1 F_{p_1}+p_2 F_{p_2}$. The restriction of this
vector field from $U_0\times \R\times \R^2$ to the {\em defining
  hypersurface} of the equation:
\ben
\label{W}
W_F\eqdef \{(x,u,p)\in U_0\times \R\times \R^2~~\vert~~F(x,u,p)=0\}
\een
is called the {\em characteristic vector field}. In our case, the
defining hypersurface is fibered over the set $U_0\times \R$ with fiber
at $(x,u)$ given by the momentum curve $C_{x,u}$. Notice that
$\Upsilon_F$ is tangent to $W_F$ and hence the pair $(W_F,\Upsilon_F)$
defines a four-dimensional geometric dynamical system, which we call
the {\em characteristic dynamical system of $F$}. The integral curves
of this dynamical system are the {\em characteristic curves} of the
contact Hamilton-Jacobi equation \eqref{Feq}. Any integral curve of
\eqref{charsys} which meets $W_F$ is contained in $W_F$ and hence is a
characteristic curve. To obtain a solution $\phi$ of the Dirichlet
problem \eqref{Dirichlet} for \eqref{Feq}, we must find a family of
solutions $(x(t,q),u(t,q), p(t,q))$ of \eqref{charsys} which satisfies
the initial conditions:
\ben
\label{incond}
x(0,q)=\gamma(q)~~,~~u(0,q)=\phi_0(q)~~,~~p(0,q)=p_0(q)~~(q\in I)~~,
\een
where the {\em curve of initial momenta} $p_0:I\rightarrow \R^2$ is a
smooth map which satisfies the following {\em admissibility
  conditions} for all $q\in I$:
\beqan
\label{admissibility}
&& F(\gamma(q),\phi_0(q),p_0(q))=0\nn\\
&& \gamma_1'(q)p_{01}(q)+\gamma_2'(q) p_{02}(q)=\phi_0'(q)~~.
\eeqan
The first admissibility condition means that the point $(x(0,q),
u(0,q), p(0,q))$ lies on the defining hypersurface $W_F$ for all $q\in
I$, which insures that the integral curves $t\rightarrow (x(t,q),
u(t,q),p(t,q))$ are contained in $W_F$ for all $q$. Equivalently, this
condition requires that $p_0(q)$ lies on the momentum curve
$C_{\gamma(q),\phi_0(q)}$ for all $q\in I$. The second admissibility
condition constrains the scalar product between $p_0(q)$ and the
tangent vector $\gamma'(q)$ to the curve $\gamma$ to equal the
derivative $\phi'_0(q)=\frac{\dd \phi_0(q)}{\dd q}$. Any smooth
solution $p_0$ to the admissibility conditions provides a lift:
\ben
\label{Gamma}
\Gamma=\{(\gamma(q), \phi_0(q), p_0(q))~~\vert~~q\in I\}\subset W_F
\een
to the defining hypersurface of the image $\gamma(I)\subset U_0$
of the curve $\gamma$.

The admissibility conditions can be used to locally determine $p_0$
given $\phi_0$ (though the solution might have a finite ambiguity)
provided that the Jacobian matrix of the system \eqref{admissibility}
with respect to $p_0$:
\be
\mathrm{Jac}_{p_0}(q)=\left[\begin{array}{cc} F_{p_1}(\gamma(q),\phi_0(q),p_0(q)) & F_{p_2}(\gamma(q),\phi_0(q),p_0(q))
\\ \gamma'_1(q) & \gamma'_2(q)\end{array}\right]
\ee
has maximal rank. Since the curve $\gamma$ is non-degenerate, this
amounts to the condition that $F_p$ is not proportional to $\gamma'$,
which means that the vector $F_p$ is not tangent to $\gamma$. In turn,
this is equivalent with the condition:
\ben
\label{regularity}
F_{p_1}(\gamma(q),\phi_0(q),p_0(q))\gamma_2'(q)-F_{p_2}(\gamma(q),\phi_0(q),p_0(q)) 
\gamma_1'(q)\neq 0\quad\forall q\in I~~.
\een
The triplet $(\gamma, \phi_0, p_0)$ is called {\em regular} (or {\em
  noncharacteristic}) if this condition is satisfied. The regularity
condition means that the projected characteristic curves $t\rightarrow
x(t)$ are nowhere tangent to the curve $\gamma$. In particular,
regularity requires that $F_{p_1}(\gamma(q),\phi_0(q),p_0(q))$ and
$F_{p_2}(\gamma(q),\phi_0(q),p_0(q))$ should not vanish simultaneously
for any $q\in I$. This means that $(\gamma(q),\phi_0(q),p_0(q))$ must
be a noncharacteristic point of \eqref{Feq} for all $q\in I$ (recall
that $(\gamma(q),\phi_0(q),p_0(q))\in W_F$ by the first condition in
\eqref{admissibility}). Since $p_0(q)$ lies on the momentum curve
$C_{\gamma(q),\phi_0(q)}$ by the first condition in
\eqref{admissibility}, it follows that a necessary (but not
sufficient) condition for regularity is that $p_0(q)$ should not
coincide with the singular point of this curve for any $q\in I$.  In
our case, this means that $p_0(q)$ should not coincide with the
special momentum $p^s(\gamma(q))$ computed at the point $\gamma(q)\in
U_0$. The boundary data $(\gamma,\phi_0)$ is called {\em regular} if
there exists a smooth function $p_0:I\rightarrow \R^2$ which satisfies
both \eqref{admissibility} and \eqref{regularity}. In this case, one
also says that the Dirichlet data is {\em noncharacteristic}.

Assume that the interval $I$ is compact. For regular boundary data and
an admissible choice of $p_0$, condition \eqref{regularity} insures
that there exists a small enough positive number $a$ such that the map
$(-a,a)\times I \ni (t,q)\rightarrow (x_1(t,q), x_2(t,q))\in U_0$
provided by a solution of \eqref{charsys} which satisfies the initial
conditions \eqref{incond} is a diffeomorphism on the interior $\cV$ of
its image. One can invert this map in order to extract $t$ and $q$ as
functions of $(x_1,x_2)\in \cV$. Then the function
$\phi:\cV\rightarrow \R$ defined through:
\ben
\label{phichar}
\phi(x_1,x_2)\eqdef u(t(x_1,x_2),q(x_1,x_2))
\een
is a (smooth) solution of \eqref{Feq} defined on $\cV\subset U_0$,
which satisfies the Dirichlet boundary condition:
\be
\phi(\gamma(q))=\phi_0(q)\quad\forall q\in I~~.
\ee
Notice that the image of $\gamma$ is contained in $\cV$. Also notice
that $\phi$ can be finitely non-unique, since there can exist a finite
ambiguity in the choice of the admissible initial momentum
$p_0:I\rightarrow \R^2$. One can relax the smoothness assumption on
the choice of admissible momentum $p_0$, but in this case the solution
\eqref{phichar} may fail to be smooth. This situation arises in some of
the examples considered later in the present paper.

\begin{remark}
Suppose the boundary data $(\gamma,\phi_0)$ is nonregular, i.e. there
exists no admissible $p_0$ which satisfies the regularity condition
\eqref{regularity}. The simplest such situation is that of {\em
  completely irregular Dirichlet problems}, when there exists a choice
of initial momenta $p_0:I\rightarrow \R^2$ such that
$F_p(\gamma(q),\phi_0(q),p_0(q))$ is tangent to $\gamma$ for all $q\in
I$. It can be shown (see, for example, \cite{Melikyan})
that such a Dirichlet problem admits a locally-defined classical solution
iff the lifted curve \eqref{Gamma} is a {\em characteristic curve},
which means that the characteristic vector field $\Upsilon_F$ is
tangent to $\Gamma$ at each point of $\Gamma$. In this case, there
exists an infinity of locally-defined smooth solutions to the
Dirichlet problem. In this paper, we consider only Dirichlet problems
which are regular, though we allow $p_0$ to be a piece-wise smooth
function of $q$ in some cases.
\end{remark}

\subsection{Constant Dirichlet conditions on a circle}
\label{subsec:DirichletCircle}

The following Dirichlet condition will be used later on.  Consider the
case when $\gamma$ is a circle of radius $R<1$ (with respect to the
local Euclidean metric $\cG_0$) which is contained in $U_0$ and
centered at some point $x_0\in U_0$. We can take $I=[0,2\pi]$ and
$q=\theta$, where $\theta$ is the polar angle in the isothermal
coordinates $(x^1,x^2)$. Then $\gamma$ has equations:
\be
x_1=R\cos \theta~~,~~x_2=R\sin \theta~~.
\ee
Moreover, we take the Dirichlet datum $\phi_0$ to be constant along $\gamma$,
i.e. $\phi_0'(\theta)=0$. In this case, the second admissibility
condition in \eqref{admissibility} requires that $p_0$ be normal to
$\gamma$, i.e. we have:
\be
p_0(\theta)=(\nu(\theta)\cos\theta,\nu(\theta)\sin\theta)\quad\forall \theta\in [0,2\pi]~~,
\ee
where $\nu\eqdef||p_0||_0$. For every $\theta\in [0,2\pi]$, the line
of angle $\theta$ in the $p$-plane intersects the momentum curve
$C_{\gamma(\theta),\phi_0}$ in at most three points. For the boundary
data $(\gamma,\phi_0)$ to be admissible, this intersection must be
non-empty for all $\theta\in [0,2\pi]$, in which case any point of
intersection determines a possible choice of $p_0(\theta)$. As
$\theta$ varies in the interval $[0,2\pi]$, this gives up to three
continuously-varying points on $C_{\gamma(\theta),\phi_0}$. A
necessary condition for the boundary data to be regular is that the
momentum $p_0(\theta)$ never becomes special, i.e. that at least one
of these choices never meets the singular point of
$C_{\gamma(\theta),\phi_0}$ as $\theta$ varies from $0$ to
$2\pi$. 

When the point $x$ varies on the circle $\gamma$, the special
momenta \eqref{pchar} describe a curve in the $(p_1,p_2)$ plane, which
we call the {\em curve of special initial momenta}. An admissible
initial momentum cannot lie on this curve.

Let us illustrate this in the case
$V=10^{-10}+\frac{1}{2}\left(\frac{1}{5}x_1^2+x_2^2\right)$, for the
constant boundary condition:
\ben
\label{BC}
\phi\vert_{C_R}=\phi_0=-\log[R\log(1/R)]
\een
imposed on the circle $C_R$ of radius $R=\frac{1}{20}$ centered at the
origin of the $(x_1,x_2)$ plane. This condition is natural given the
asymptotic forms discussed in Section \ref{sec:quasilinear}. 

Figure \ref{fig:InitialMomenta} shows the curve of special initial momenta
and the norms $\nu$ of admissible initial momenta (as functions of the
polar angle $\theta$) for this choice of $R$ and $V$. The latter are
obtained by solving the following cubic equation for $\nu$:
\ben
\label{nueq}
F(R\cos\theta,R\sin\theta, -\log[R\log(1/R)], \nu\cos\theta,\nu\sin\theta)=0
\een
whose real solutions $\nu(\theta)$ we plot as functions of
$\theta$. The plot has limited range, hence some real solutions are
not visible in the figure.

\begin{figure}[H]
\centering
\vspace{-1em}
\begin{minipage}{0.46\textwidth}
\centering ~~ \includegraphics[width=.5\linewidth]{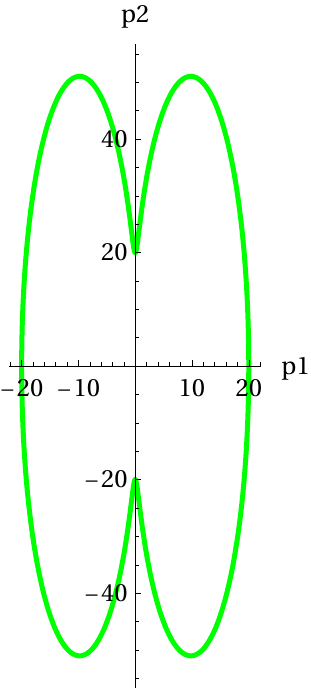}
\subcaption{The curve of special initial momenta.}
\end{minipage}
\hfill
\centering 
\begin{minipage}{0.5\textwidth}
\centering ~~\includegraphics[width=.9\linewidth]{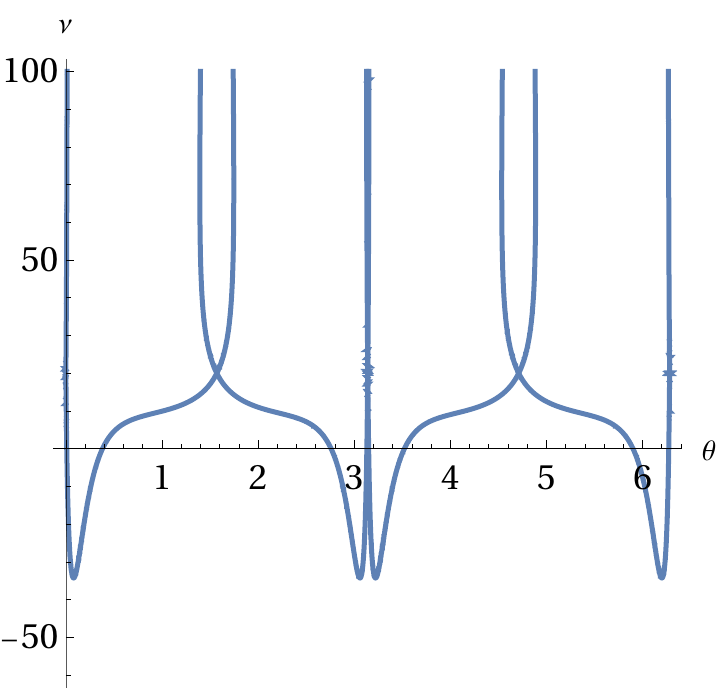}
\vspace{2em}
\subcaption{Norms of admissible initial momenta.}
\end{minipage}
\hfill
\caption{The curve of special initial momenta and the norms of
  admissible momenta for
  $V=10^{-10}+\frac{1}{2}\left(\frac{1}{5}x_1^2+x_2^2\right)$ and the
  Dirichlet boundary condition \eqref{BC} with $R\!=\!\frac{1}{20}$.}
\label{fig:InitialMomenta}
\end{figure}

Figure \ref{fig:Characteristics} shows the
choice of a set of regular initial momenta together with the
projections of the corresponding characteristic curves on
$(x_1,x_2,u)$-space.

\begin{figure}[H]
\centering
\begin{minipage}{0.45\textwidth}
\centering ~~ \includegraphics[width=.8\linewidth]{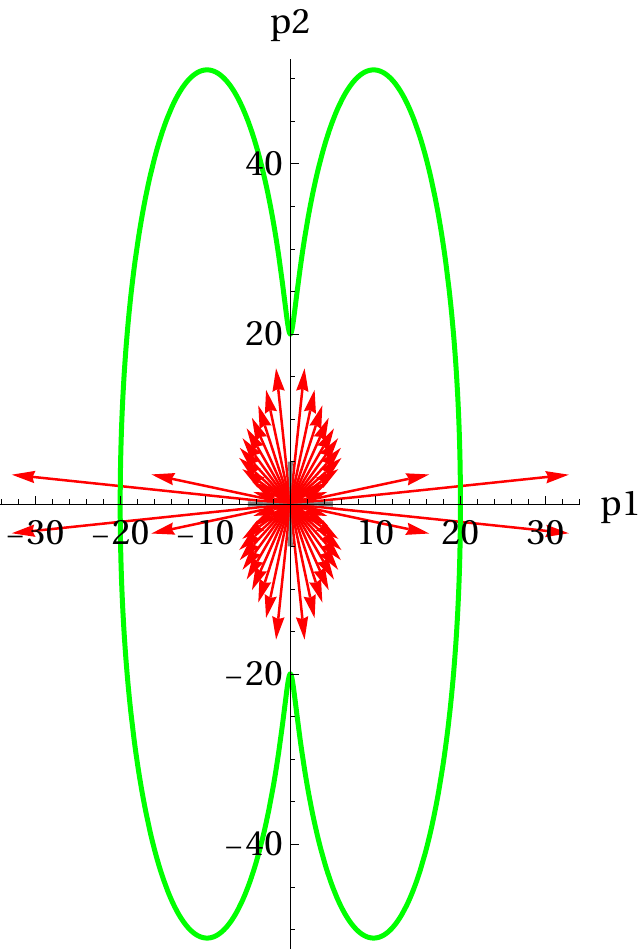}
\subcaption{A set of regular initial momenta \\(shown as red arrows).}
\end{minipage}
\hfill
\centering 
\begin{minipage}{0.49\textwidth}
\centering ~~\includegraphics[width=\linewidth]{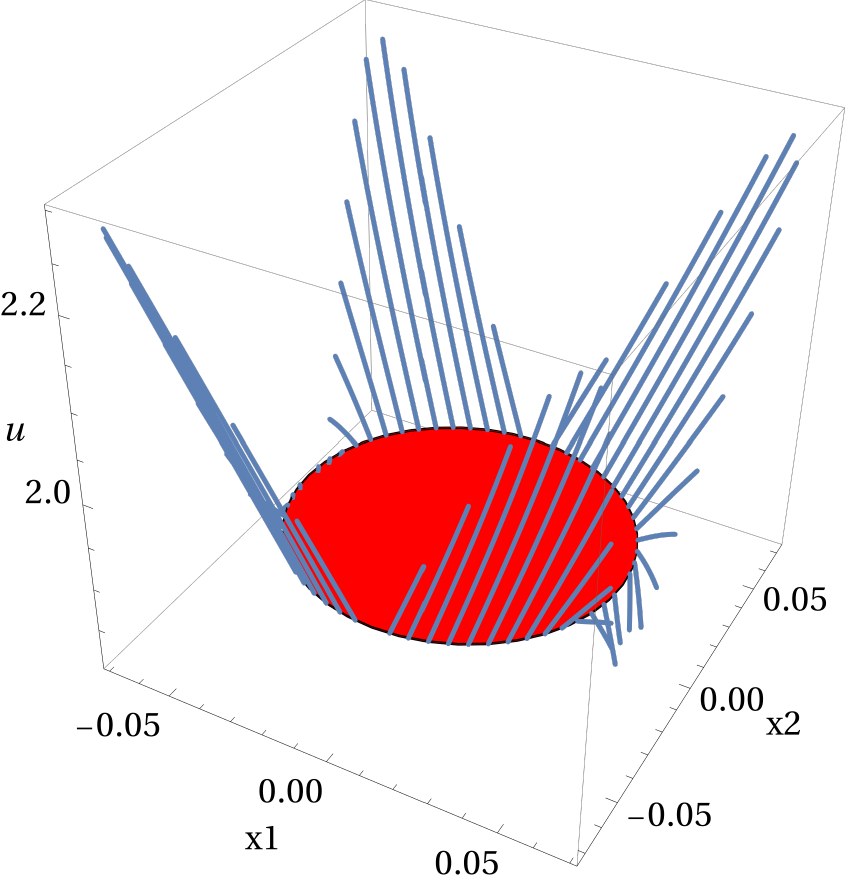}
\subcaption{Projection of characteristic curves on
  $(x_1,x_2,u)$-space for the choice of initial momenta shown on the
  left.}
\end{minipage}
\hfill
\caption{Choices of regular initial momenta and projection of the
  corresponding characteristic curves on $(x_1,x_2,u)$-space for
  $V=10^{-10}+\frac{1}{2}\left(\frac{1}{5}x_1^2\!+\!x_2^2\right)$ and
  the Dirichlet boundary condition \eqref{BC} with
  $R\!=\!\frac{1}{20}$.}
\label{fig:Characteristics}
\end{figure}

\subsection{The stationary points of the characteristic dynamical system}

The characteristic curves of an admissible Dirichlet problem are
integral curves of the dynamical system $(W_F,\Upsilon_F)$. The
stationary points of this system correspond to the zeroes of
$\Upsilon_F$ on $W_F$, i.e. to the points $(x,u,p)\in U\times \R^3$
which satisfy the conditions:
\ben
\label{singpoint}
F(x,u,p)=F_{p_i}(x,u,p)=F_{x_i}(x,u,p)+p_i F_u(x,u,p)=0 \quad \forall i=1,2~~.
\een
It is clear that any such point is a characteristic point. The following
result shows that the converse is also true. 

\begin{prop}
The stationary points of the characteristic dynamical system
$(W_F,\Upsilon_F)$ coincide with the characteristic points of $F$.
\end{prop}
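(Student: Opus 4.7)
The plan is to show both inclusions between the two sets. One direction is immediate from the definitions: any stationary point of $(W_F,\Upsilon_F)$ satisfies $F=F_{p_1}=F_{p_2}=0$ by the first three equations of \eqref{singpoint}, hence is a characteristic point. The substantive content is the converse, namely that the remaining conditions $F_{x_i}+p_i F_u=0$ are automatically satisfied at any characteristic point.

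For the converse, I would pass to the affine coordinates $(P_1,P_2)=(A-H_0(x),B-\tilde H_0(x))$ introduced in \eqref{FP}, in which
\be
F=-P_1 P_2^2+3Ve^{2u}P_1^2-(\Delta_0 V)P_2^2.
\ee
Two observations are crucial. First, $P_1$ and $P_2$ are linear in $p=(p_1,p_2)$ with coefficients depending only on $x$ (and not on $u$); in particular $\partial_u P_i=0$. Second, as a polynomial in $(P_1,P_2)$ with coefficients smoothly depending on $(x,u)$, the function $F$ has vanishing constant and linear parts: every monomial in $F$ contains at least one factor $P_1$ or $P_2$ with total degree at least two in $(P_1,P_2)$. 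As already observed in the analysis of the momentum curve, a characteristic point is precisely a point where $(P_1,P_2)=(0,0)$, i.e. where $p$ coincides with the special momentum $p^s(x)$.

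I would then compute the partial derivatives $F_u$ and $F_{x_i}$ by the chain rule and evaluate them at a characteristic point. Because $\partial_u P_i=0$, one has directly $F_u=6Ve^{2u}P_1^2$, which vanishes at $P_1=0$. For $F_{x_i}$, the derivative picks up three kinds of contributions: derivatives of the scalar coefficients $Ve^{2u}$ and $\Delta_0 V$, producing terms proportional to $P_1^2$ or $P_2^2$; and derivatives of $P_1,P_2$ themselves through their $x$-dependence, producing terms with at least one factor of $P_1$ or $P_2$ remaining. Every such term vanishes when $(P_1,P_2)=(0,0)$, so $F_{x_i}=0$ at characteristic points. Combining these, one obtains $F_{x_i}+p_i F_u=0$ there, which completes the verification of \eqref{singpoint}.

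The main (mild) obstacle is bookkeeping: one needs to confirm that no derivative term produced by the chain rule survives at $(P_1,P_2)=(0,0)$, which is clear from the structural fact that $F$ is a polynomial in $(P_1,P_2)$ with vanishing jet of order $\leq 1$ at the origin, so that $\mathrm{d}F$ (as a differential on $U_0\times\R^3$) necessarily vanishes wherever $P_1=P_2=0$. This structural remark may be stated once and applied to $F_u$ and $F_{x_i}$ simultaneously, making the argument short and coordinate-invariant.
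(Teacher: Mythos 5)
Your proposal is correct and follows essentially the same route as the paper's proof: both identify characteristic points with $(P_1,P_2)=(0,0)$, note that $F_u=6Ve^{2u}P_1^2$ vanishes there, and kill $F_{x_i}$ by the chain rule using that $F$ has vanishing $1$-jet in $(P_1,P_2)$ at the origin (the paper phrases this as $\partial F/\partial P_k$, $\partial F/\partial V=-3e^{2u}P_1^2$ and $\partial F/\partial(\Delta_0 V)=-P_2^2$ all vanishing at $P=0$). Your closing structural remark is just a cleaner packaging of the same bookkeeping.
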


\begin{proof}
It suffices to show that the last equalities in \eqref{singpoint} are
satisfied at any characteristic point. If $(x,u,p)$ is such a point,
then we have $F(x,u,p)=F_{p_i}(x,u,p)=0$ and hence $p_1$ and $p_2$ are
given by \eqref{pchar}, which correspond to $A=H_0$ and $B={\tilde
  H_0}$. Thus:
\be
F_u=6V e^{2u} (A-H_0)^2
\ee
vanishes and the last conditions in \eqref{singpoint} reduce to:
\beqa
F_{x_1}(x,u,p)=F_{x_2}(x,u,p)=0~~,
\eeqa
where $p_1$ and $p_2$ are given by \eqref{pchar}. Using expression
\eqref{FP}, we compute:
\be
F_{x_i}=\frac{\pd F}{\pd P_k} \pd_{x_i} P_k+\frac{\pd F}{\pd V} \pd_{x_i} V+\frac{\pd F}{\pd (\Delta_0 V)} \pd_{x_i} (\Delta_0 V)~~,
\ee
where $u$ and $p$ are held fixed. 

Since \eqref{pchar} corresponds to
$P=(0,0)$ (which is the singular point of the momentum curve), we have
$\frac{\pd F}{\pd P_k}\Big{|}_{P=0}=0$. On the other hand, the quantities:
\be
\frac{\pd F}{\pd V}=-3 e^{2u} P_1^2~\quad~,~\quad~\frac{\pd F}{\pd (\Delta_0 V)}=-P_2^2~~
\ee
vanish when $P_1=P_2=0$. It follows that $F_{x_i}(x,u,p)$ vanish and
hence any characteristic point is a stationary point of the
characteristic dynamical system.
\end{proof}

\subsection{Numerical examples}

Let us illustrate the solutions of the contact Hamilton-Jacobi
equation for the conformal factor and their characteristics. Recall
that $U_0$ is an open subset of $\R^2$, whose Cartesian coordinates
$x^1,x^2$ are isothermal on $U$. Thus $J$ is the canonical complex
structure which identifies $\R^2$ with the complex plane $\C$ of
complex coordinate $z=x^1+\i x^2$. For simplicity, we take:
\ben
\label{Vquad}
V(x_1,x_2)=V_c+\frac{1}{2}(\lambda_1 x_1^2+\lambda_2 x_2^2)~~,
\een
where $V_c>0$ and $\lambda_1,\lambda_2$ are nonzero real numbers. We
assume that $U_0$ is a domain contained in the region of $\R^2$ where
$V>0$. Such potentials arise as local approximations of an everywhere
positive potential of a cosmological model close to a non-degenerate
critical point. The quadratic potential \eqref{Vquad} has a single
critical point located at the origin of the $(x_1,x_2)$-plane, which
is an extremum if $\lambda_1\lambda_2>0$ and a saddle point if
$\lambda_1\lambda_2<0$.

We consider the Dirichlet problem of Subsection
\ref{subsec:DirichletCircle} with boundary condition \eqref{BC} on a
circle of radius $R<1$ centered at the origin. This problem is regular
except possibly for a finite set of points of this circle when
$\lambda_1\neq \lambda_2$ (we say that the problem is {\em essentially
  regular} in this case). If $\lambda_1=\lambda_2$, then the only
admissible choice of $p_0(\theta)$ is special for all $\theta$ and the
Dirichlet problem is irregular.

We show some numerically computed projected characteristic curves in a
few essentially regular cases in the bottom left corner of Figures
\ref{fig:Nonlinear1}, \ref{fig:Nonlinear2}, \ref{fig:Nonlinear4}.  For
the indicated values of $V_c$, $\lambda_1$, $\lambda_2$ and $R$, these
figures also show the contour and 3d plots of the corresponding
potential and the numerical solutions of the Dirichlet problem for the
viscosity perturbation \eqref{viscFeq} of the contact Hamilton-Jacobi
equation \eqref{Feq} for the listed values of the viscosity parameter
$\fc$. 

Figure \ref{fig:Nonlinear3} shows the irregular case
$\lambda_1=\lambda_2=1$, when the classical method of characteristics
does not apply. The qualitative behavior of the viscosity approximants
of the solutions differs little between these four cases on the region
within which they were computed. The values considered for the
viscosity parameter $\fc$ were dictated by the computational limits of
our system. Computing with smaller values of $\fc$ might reveal more
detail.

\begin{figure}[H]
\centering
\begin{minipage}{.49\textwidth}
\centering \!\!\!\!\includegraphics[width=0.85\linewidth]{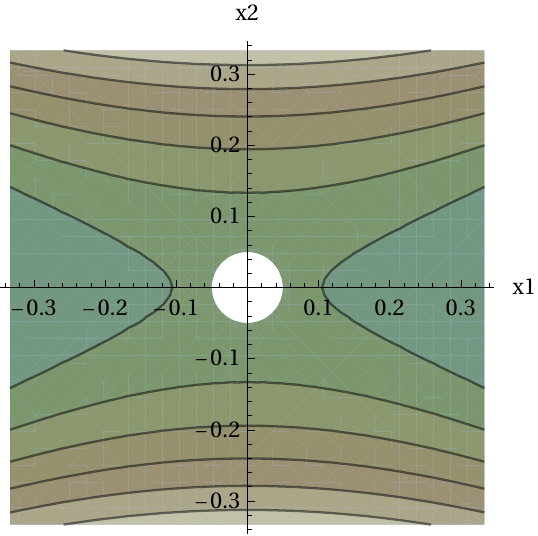}
\subcaption{Contour plot of the potential.}
\end{minipage}
\hfill
\begin{minipage}{.49\textwidth}
\centering \includegraphics[width=1.07\linewidth]{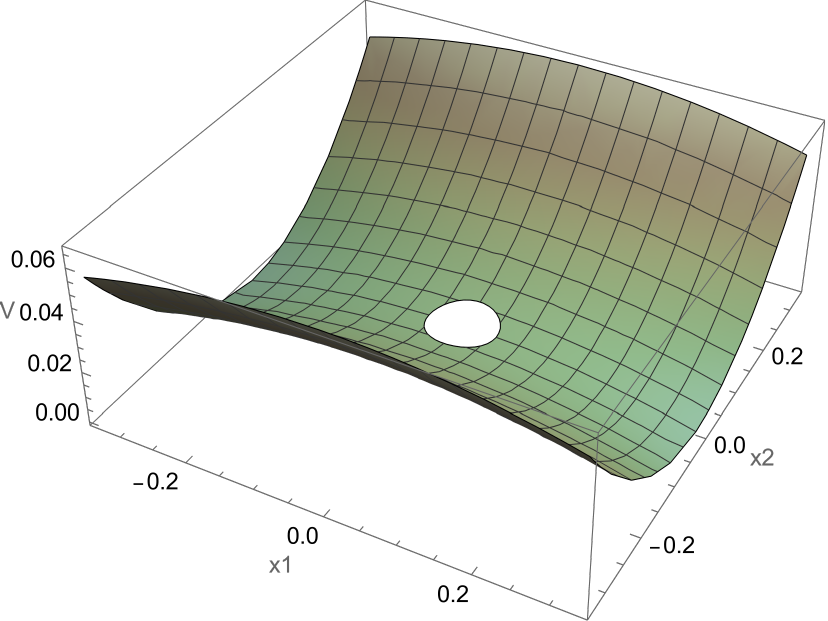}
\subcaption{3D plot of the potential.}
\end{minipage}
\hfill
\vspace{1em}
\centering
\begin{minipage}{.46\textwidth}
\centering ~~ \includegraphics[width=.3\linewidth]{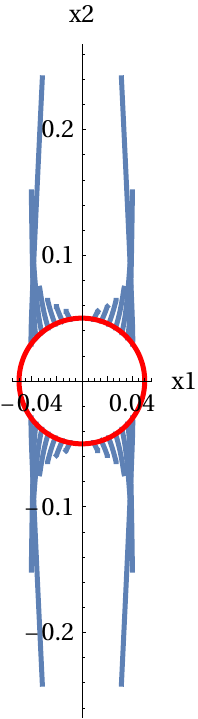}
\subcaption{Some characteristic curves projected \\on the
  $(x_1,x_2)$-plane. Some curves intersect if followed long enough,
  which signals the breakdown of the method of characteristics.}
\end{minipage}
\hfill
\begin{minipage}{.49\textwidth}
\centering \!\!\!\!\!\!\includegraphics[width=1.1\linewidth]{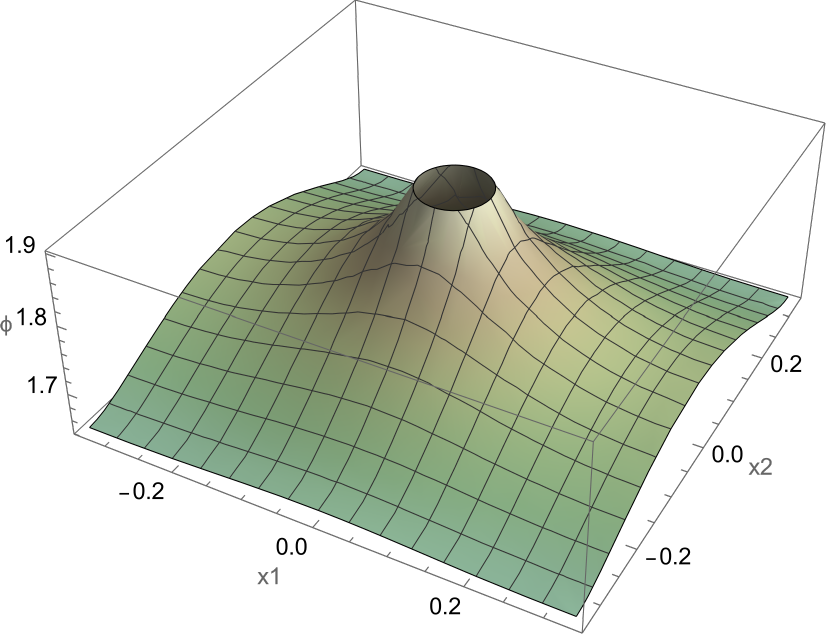}
\subcaption{Solutions of the Dirichlet problem for the viscosity
  perturbation with $\fc=e^{-7}$.}
\end{minipage}
\hfill
\caption{The potential, projected characteristics and a viscosity
  approximant of the solution of the Dirichlet problem for the contact
  Hamilton-Jacobi equation for $V_c=1/90$ and $\lambda_1=-1/5$,
  $\lambda_2=1$ with $R=1/20$. We show the approximant of the solution
  $\phi$ in a region where the potential is positive.}
\label{fig:Nonlinear1}
\end{figure}

\begin{figure}[H]
\vspace{3em}
\centering
\!\!\begin{minipage}{.49\textwidth}
\centering  \includegraphics[width=\linewidth]{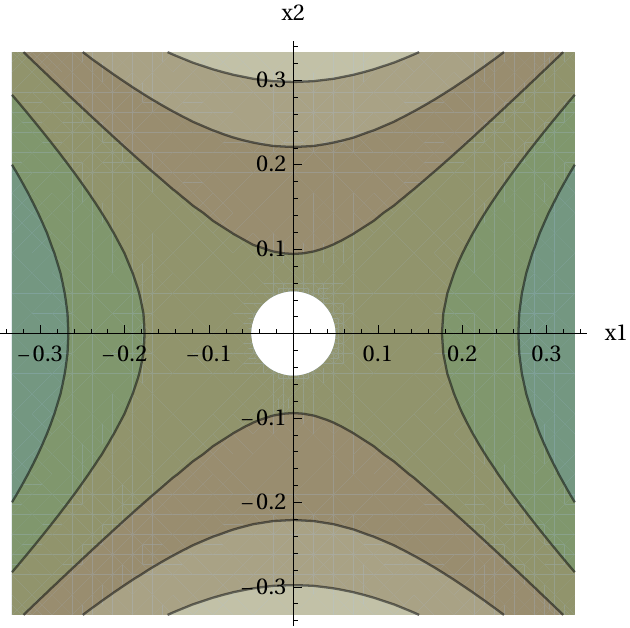}
\subcaption{Contour plot of the potential.}
\end{minipage}
\hfill
\begin{minipage}{.49\textwidth}
\vspace{1em}
\centering ~~~~\includegraphics[width=\linewidth]{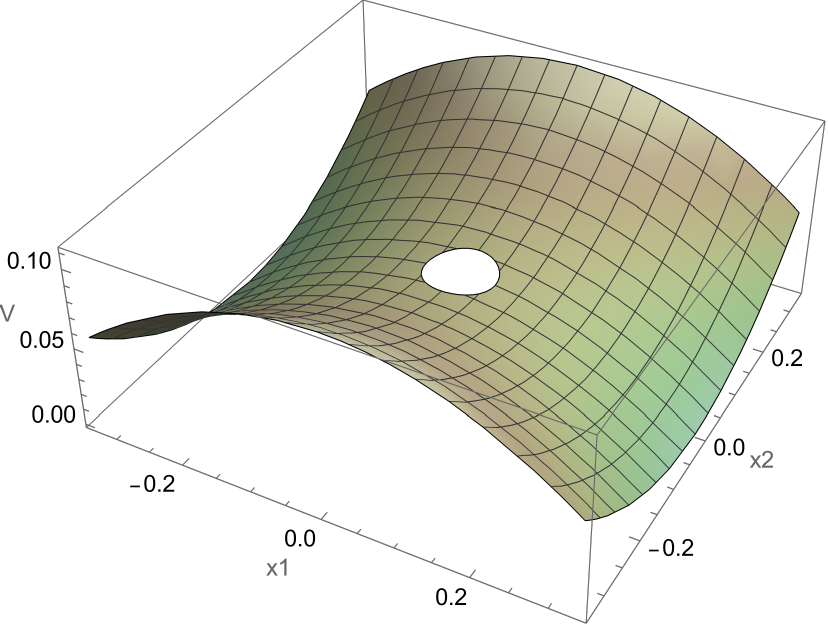}
\vspace{2em}
\subcaption{3D plot of the potential.}
\end{minipage}
\hfill
\vspace{2em}
\centering
\begin{minipage}{.49\textwidth}
\centering \!\!\!\!\!\!\!\!\!\! \includegraphics[width=.9\linewidth]{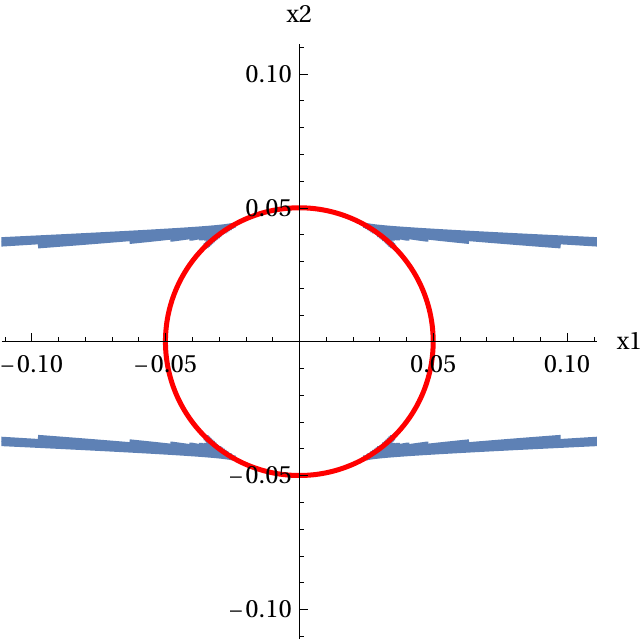}
\subcaption{Some characteristic curves projected \\on the $(x_1,x_2)$-plane.}
\end{minipage}
\hfill
\begin{minipage}{.49\textwidth}
\centering \includegraphics[width=\linewidth]{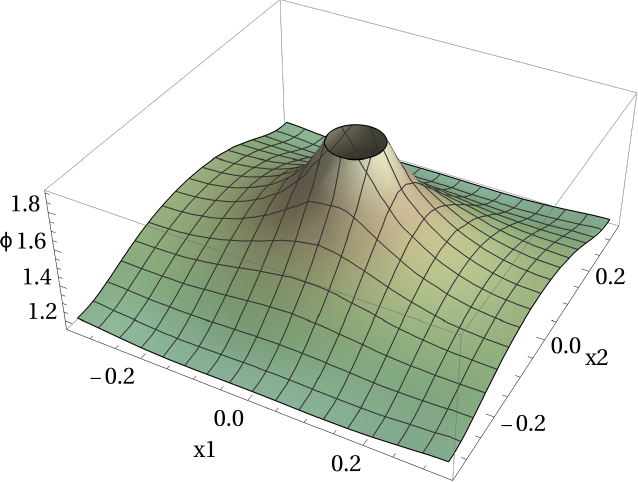}
\vspace{1em}
\subcaption{Solution of the Dirichlet problem for the viscosity
  perturbation with $\fc=e^{-8}$.}
\end{minipage}
\hfill
\vspace{2em}
\caption{The potential, projected characteristics and a viscosity
  approximant of the solution of the Dirichlet problem for the contact
  Hamilton-Jacobi equation for $V_c=1/18$ and $\lambda_1=-1$,
  $\lambda_2=1$ with $R=1/20$. We show the approximant of the solution
  $\phi$ in a region where the potential is positive.}
\label{fig:Nonlinear2}
\end{figure}

\vspace{4em}

\pagebreak

\begin{figure}[H]
\vspace{3em}
\centering
\begin{minipage}{.49\textwidth}
\centering \!\!\includegraphics[width=\linewidth]{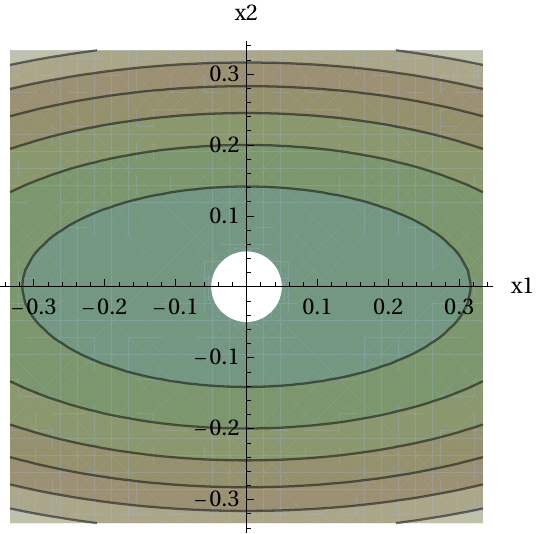}
\vspace{1em}
\subcaption{Contour plot of the potential.}
\end{minipage}
\hfill
\begin{minipage}{.49\textwidth}
\centering \!\!\!\!\!\!\!\includegraphics[width=1.05\linewidth]{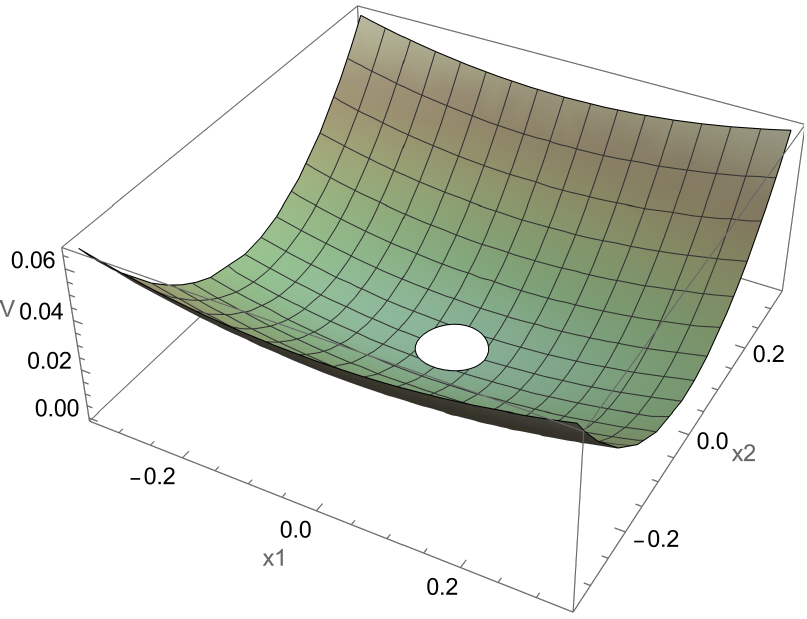}
\vspace{2em}
\subcaption{3D plot of the potential.}
\end{minipage}
\hfill
\\
\vspace{2em}
\centering
\begin{minipage}{.49\textwidth}
\centering \!\!\!\!\!\!\!\!\!\!\! \includegraphics[width=.9\linewidth]{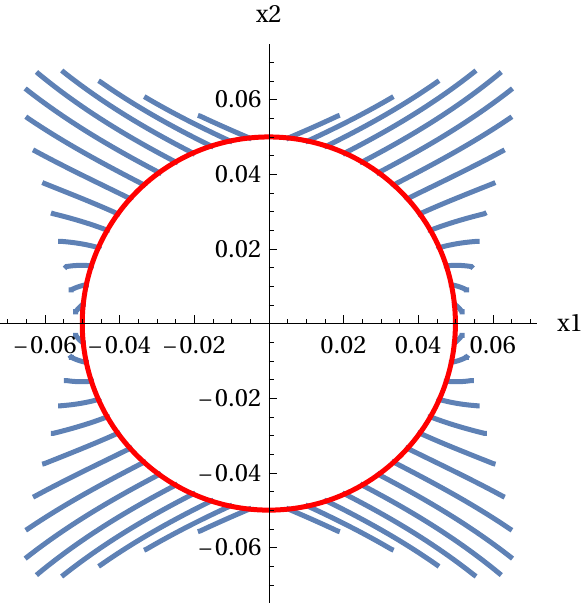}
\vspace{1em}
\subcaption{Some characteristic curves projected \\on the $(x_1,x_2)$-plane.}
\end{minipage}
\hfill
\begin{minipage}{.49\textwidth}
\centering  \includegraphics[width=1.05\linewidth]{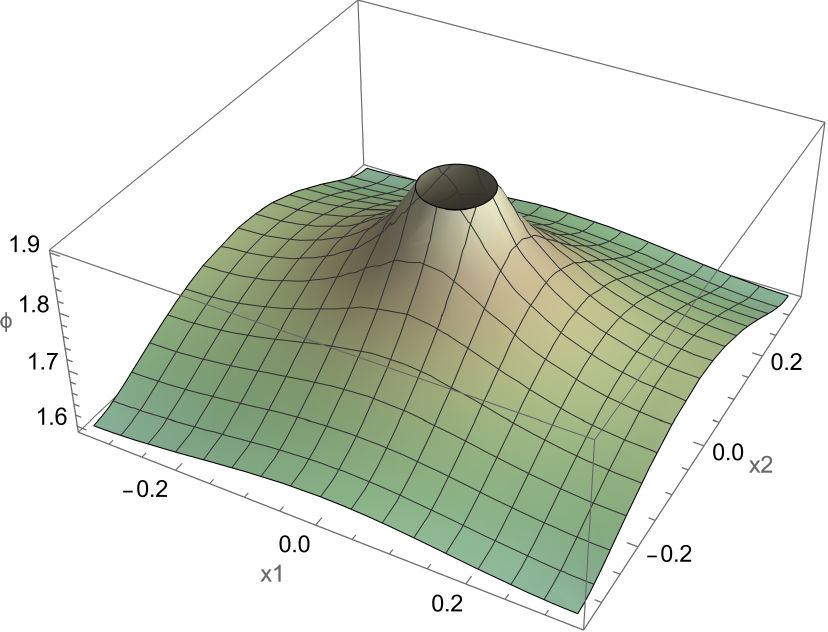}
\vspace{1em}
\subcaption{Solution of the Dirichlet problem for the viscosity perturbation with $\fc=e^{-7}$.}
\end{minipage}
\hfill
\vspace{1em}
\caption{The potential, projected characteristics and a viscosity
  approximant of the solution of the Dirichlet problem for the contact
  Hamilton-Jacobi equation for $V_c\!=\!10^{-10}$ and
  $\lambda_1\!=\!1/5$, $\lambda_2\!=\!1$ with $R\!=\!1/20$.}
\label{fig:Nonlinear4}
\end{figure}

\begin{figure}[H]
\centering
\begin{minipage}{.47\textwidth}
\centering \!\! \includegraphics[width=\linewidth]{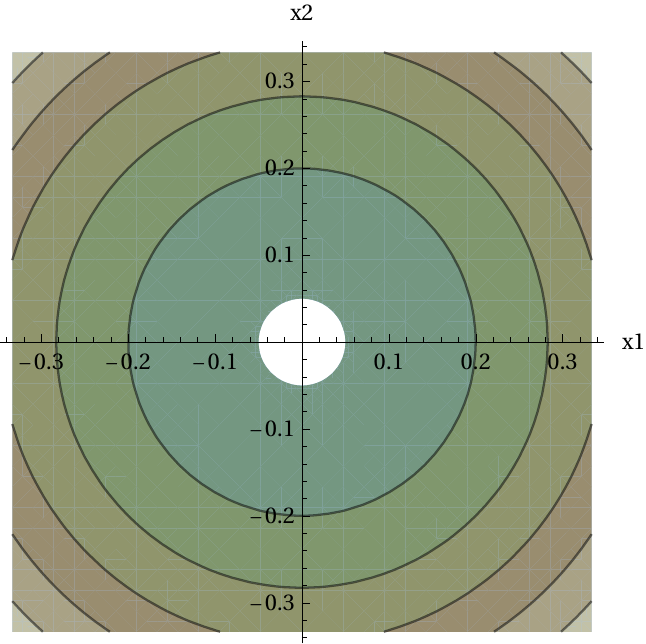}
\subcaption{Contour plot of the potential.}
\end{minipage}
\hfill
\begin{minipage}{.5\textwidth}
\centering \includegraphics[width=\linewidth]{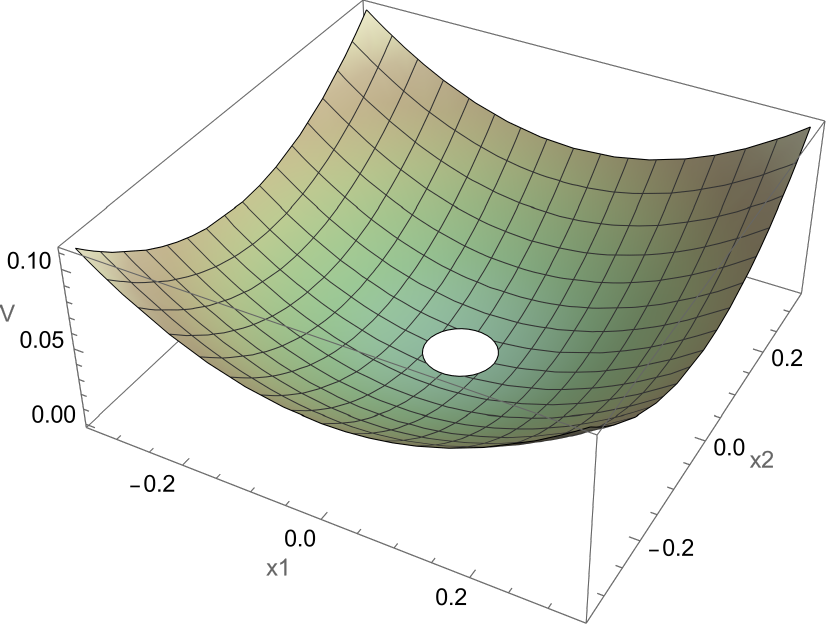}
\subcaption{3D plot of the potential.}
\end{minipage}
\hfill
\\
\centering
\begin{minipage}{.47\textwidth}
\centering \!\!\!\!\!\!\!\!\!\! \includegraphics[width=.85\linewidth]{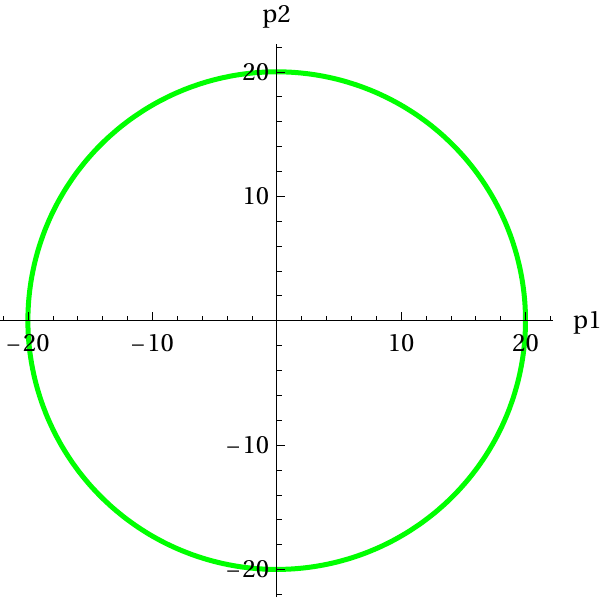}
\subcaption{The curve of special initial momenta.}
\end{minipage}
\hfill
\begin{minipage}{.5\textwidth}
\centering \includegraphics[width=1.07\linewidth]{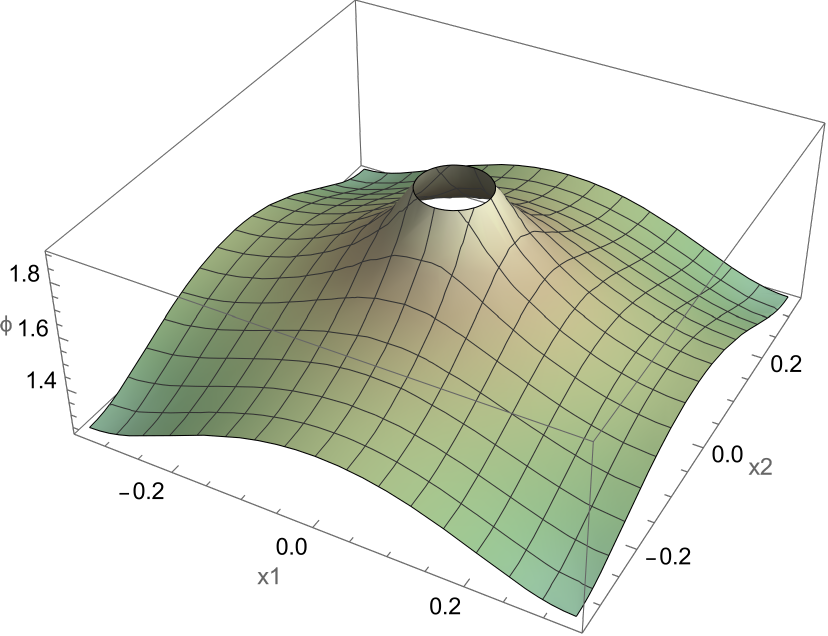}
\subcaption{Solution of the Dirichlet problem for the viscosity perturbation with $\fc=e^{-6}$.}
\end{minipage}
\hfill
\caption{The potential and a viscosity approximant of the solution of
  the Dirichlet problem for the contact Hamilton-Jacobi equation for
  $V_c=10^{-10}$ and $\lambda_1=\lambda_2=1$ with $R=1/20$. In this case,
  all admissible initial momenta lie on the curve of special initial
  momenta, which we show in the lower left corner. Thus the Dirichlet problem is
  irregular and the classical method of characteristics does not apply.}
\label{fig:Nonlinear3}
\end{figure}

\section{The quasilinear approximation near a non-degenerate critical point of the scalar 
potential}
\label{sec:quasilinear}

The critical points of the scalar potential $V$ play a crucial role in
determining important features of cosmological dynamics. In this
section, we study the behavior of the contact Hamilton-Jacobi equation
\eqref{Feq} and its solutions in the vicinity of {\em non-degenerate} critical
points of $V$. We show that this nonlinear equation can be
approximated by a quasilinear ODE near such a point and study the
solutions of the latter, which provide asymptotic approximants for certain
solutions of the contact Hamilton-Jacobi equation.

Let $c\in \Crit V$ be a non-degenerate (hence isolated) critical point
of $V$ and fix the complex structure $J$ on $\cM$. Let $(U,z)$ be a
complex coordinate chart for $(\cM,J)$ centered at $c$ (i.e. $z(c)=0$)
such that the punctured neighborhood $\dot{U}\eqdef U\setminus \{c\}$
is contained in $\cM_0$ and let $x^1=\Re z$ and $x^2=\Im z$ be the
corresponding isothermal coordinates defined on $U$.

Since $c$ is a non-degenerate critical point, the Riemannian Hessian
of $V$ at $c$ is a non-degenerate bilinear symmetric form defined on
$T_c\cM$ which is independent of the choice of the scalar field metric
$\cG$. In particular, we have:
\be
\Hess(V)(c)=\Hess_0(V)(c)=\frac{1}{2}(\pd_i\pd_j V)(c)\pd_i\otimes \pd_j\vert_{c}\in \Sym^2(T^\ast_c\cM)~~.
\ee
This covariant symmetric tensor can be diagonalized by a transformation:
\be
x'=Rx~,~\mathrm{with}~~R\in \SO(2,\R)\simeq \U(1)~~,
\ee
where $(U,x')$ is once again an isothermal coordinate system of the
same form as $(U,x)$, namely the isothermal coordinate system defined
by the complex coordinate $z'=e^{\i \theta} z$, where $\theta$ is the
rotation angle of the matrix $R$. Thus, we can assume without loss of
generality that the isothermal coordinate system $(U,x)$ was chosen
such that:
\be
\pd_i\pd_jV=\lambda_1 \delta_{i1}\delta_{j1}+\lambda_2\delta_{i2}\delta_{j2}~~,
\ee
where $\lambda_1,\lambda_2\!\in \!\R$ are the real eigenvalues of the Euclidean
Hessian operator $\widehat{\Hess}_0(c)\!\in \!\End(T_c\cM)$ at $c$ (recall that this
operator is $\cG_0(c)$-symmetric and hence has real eigenvalues). Non-degeneracy of the 
Hessian at $c$ implies that $\lambda_1$ and $\lambda_2$ are both nonzero. With this choice 
of isothermal coordinates, the Taylor expansion of $V$ at $c$ takes the form:
\be
V(x)=V(c)+\frac{1}{2}(\pd_i\pd_jV)(c)x^i x^j +\cO(||x||_0^3)=V_c+\frac{1}{2}(\lambda_1 x_1^2+\lambda_2 x_2^2) +\cO(||x||_0^3)~~,
\ee
which gives (without summation over $i$ !):
\ben
\label{gradVlin}
(\pd_iV)(x)=\lambda_i x^i+\O(||x||_0^2)
\een
and:
\be
||\dd V||_0=\sqrt{\lambda_1^2 x_1^2+\lambda_2^2 x_2^2}+\cO(||x||_0^2)~~.
\ee
Using these expressions, we find:
\ben
\label{Hleading}
H_0=\frac{\lambda_i^3 x_i^2}{\lambda_1^2 x_1^2+\lambda_2^2 x_2^2}+\cO(||x||_0^2)~~\mathrm{and}
~~{\tilde H}_0=\lambda_1\lambda_2(\lambda_2-\lambda_1) \frac{x_1x_2}{\lambda_1^2 x_1^2+\lambda_2^2 x_2^2}+\cO(||x||_0^2)~~.
\een
Consider the following homogeneous polynomial functions of degree two
in the variables $x_1$ and $x_2$, where $k\in \Z_{>0}$:
\ben
\label{sdef}
s_k(x)\eqdef \lambda_1^k x_1^2+\lambda_2^k x_2^2~~.
\een
Notice that these functions are invariant under the permutation which
exchanges $x_1$ with $x_2$ and $\lambda_1$ with $\lambda_2$.

\begin{prop}
We have:
\be
F(x,u,p)=-\frac{a_1(x,u) x^1 p_1+a_2(x,u) x^2 p_2-b(x,u)}{s_2(x)^3}+\cO(||x||_0^2)~~,
\ee
where $a_i$ and $b$ are homogeneous polynomial functions of degree
six in $x_1$ and $x_2$ (whose coefficients depend on $u$) given by:
\be
a_i(x,u)=\lambda_i s_2(x)\left[t_i(x)+6 V_c e^{2u} s_2(x) s_3(x)\right]~~,
\ee
with:
\beqa
&& t_1(x)\eqdef \lambda_1 \lambda_2^2(\lambda_1-\lambda_2) x_2^2 [s_2(x)-3 \lambda_2 s_1(x)]\nn\\
&& t_2(x)\eqdef \lambda_2 \lambda_1^2(\lambda_2-\lambda_1) x_1^2 [s_2(x)-3 \lambda_1 s_1(x)]~~
\eeqa
and:
\be
b(x,u)=-\lambda_1^3\lambda_2^3(\lambda_1-\lambda_2)^2 x_1^2 x_2^2 s_1(x)+3 e^{2u} V_c  s_2(x) s_3(x)^2~~.
\ee
\end{prop}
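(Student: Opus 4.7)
The plan is a direct substitution into the factorized form \eqref{FP} of $F$, namely
\[
-F=P_1P_2^2-3Ve^{2u}P_1^2+(\Delta_0V)P_2^2,\qquad P_1=A-H_0,\quad P_2=B-\tilde H_0,
\]
followed by careful bookkeeping. Using \eqref{gradVlin}--\eqref{Hleading} together with $V=V_c+\cO(||x||_0^2)$ and $\Delta_0V=\lambda_1+\lambda_2+\cO(||x||_0)$, the leading parts of $P_1,P_2$ near $x=0$ can be put over the common denominator $s_2(x)$:
\[
P_1=\frac{As_2-s_3}{s_2}+\cO(||x||_0^2),\qquad P_2=\frac{Bs_2-c(x)}{s_2}+\cO(||x||_0^2),
\]
with $A=\lambda_1x_1p_1+\lambda_2x_2p_2$, $B=-\lambda_2x_2p_1+\lambda_1x_1p_2$ and $c(x)\eqdef\lambda_1\lambda_2(\lambda_2-\lambda_1)x_1x_2$. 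Since $\lambda_1,\lambda_2\neq 0$, the denominator satisfies $s_2(x)\gtrsim||x||_0^2$, so the remainders in these substitutions propagate into an overall $\cO(||x||_0^2)$ correction for $F$.

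Multiplying through by $s_2^3$ yields the polynomial identity
\[
-Fs_2^3=(As_2-s_3)(Bs_2-c)^2-3V_ce^{2u}s_2(As_2-s_3)^2+(\lambda_1+\lambda_2)s_2(Bs_2-c)^2+s_2^3\cdot\cO(||x||_0^2).
\]
Grouping the right-hand side by $p$-degree, one observes that every cubic or quadratic term in $p$ carries at least one excess factor of $s_2$ beyond $s_2^3$: the cubic piece $AB^2s_2^3$ reduces to $AB^2=\cO(||x||_0^3)$ after division, and the three quadratic cross terms $ABs_2^2$, $B^2s_2^2$, $A^2s_2^3$ collapse similarly to contributions of $x$-degree $\geq 2$. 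All such terms are absorbed into the claimed $\cO(||x||_0^2)$ error. What survives at orders $\cO(1)$ and $\cO(||x||_0)$ is linear in $(p_1,p_2)$ with coefficients polynomial in $x$, together with a $p$-independent piece; identifying these with the stated $a_i(x,u)x_ip_i/s_2^3$ and $-b(x,u)/s_2^3$ reduces to matching the coefficients of $x_1p_1$, $x_2p_2$ and of $1$ one at a time.

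The main obstacle is the algebraic bookkeeping of this matching. For the $p$-free part the computation reduces to the identity $(\lambda_1+\lambda_2)s_2-s_3=\lambda_1\lambda_2 s_1$, which gives the $\lambda_1^3\lambda_2^3(\lambda_1-\lambda_2)^2x_1^2x_2^2s_1$ summand of $b$ from the combination $c^2[-s_3+(\lambda_1+\lambda_2)s_2]$, while the remaining $3V_ce^{2u}s_2s_3^2$ summand comes directly from the constant part of $-3V_ce^{2u}s_2(As_2-s_3)^2$. For the $p$-linear coefficient, the $6V_ce^{2u}\lambda_is_2^2s_3$ summand of $a_i$ comes from the cross term $6V_ce^{2u}As_2^2s_3$ in $-3V_ce^{2u}(As_2-s_3)^2$, while the $\lambda_is_2t_i(x)$ summand is obtained by combining the $A$-linear piece of $(As_2-s_3)(Bs_2-c)^2$ with the $B$-linear piece of $(\lambda_1+\lambda_2)s_2(Bs_2-c)^2$ and using an analogue of the above identity (e.g.\ $\lambda_1^2(\lambda_2-\lambda_1)x_1^2-2s_3+2(\lambda_1+\lambda_2)s_2=\lambda_1(3\lambda_2s_1-s_2)$ for the coefficient of $x_1p_1$). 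A convenient sanity test is the restriction to $x_2=0$, where the claim collapses to $F\sim 3V_ce^{2u}\lambda_1^2(1-2x_1p_1)$, matching a direct computation from \eqref{FP} on the axis.
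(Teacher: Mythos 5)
Your proposal is correct and takes essentially the same route as the paper's proof: both substitute the leading-order data \eqref{gradVlin}--\eqref{Hleading} (with $V\to V_c$, $\Delta_0V\to\lambda_1+\lambda_2$), discard every term of degree $\geq 2$ in the momenta as $\cO(||x||_0^2)$, and match the surviving $p$-linear and $p$-free coefficients. The only cosmetic difference is that you start from the shifted form \eqref{FP} and clear the denominator $s_2^3$ at the outset, whereas the paper expands \eqref{F} keeping $H_0,\tilde H_0$ symbolic and substitutes \eqref{Hleading} last; your identities $(\lambda_1+\lambda_2)s_2-s_3=\lambda_1\lambda_2 s_1$ and its $p$-linear analogue are exactly the simplifications that occur there, and both your coefficient matching and the $x_2=0$ sanity check are correct.
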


\begin{remark}
Notice that $a_1,a_2$ and $t_1,t_2$ are related by the permutation
which exchanges $\lambda_1$ with $\lambda_2$ and $x_1$ with $x_2$,
while $b$ is invariant under this permutation.
\end{remark}

\begin{proof}
Relations \eqref{gradVlin} show that the quantities $A$ and $B$ approximate as:
\ben
A=\lambda_1 x^1 p_1+\lambda_2 x^2 p_2 +\cO(||x||_0^2)~~,~~B=-\lambda_2 x^2 p_1 +\lambda_1 x^1 p_2 +\cO(||x||_0^2)
\een
and hence are of order one in $||x||$. Accordingly, the order one
approximation of $F$ is obtained by replacing $(\Delta_0 V)(x)$ with its
order zero approximation (which is the constant $\lambda_1+\lambda_2$)
and keeping only the terms linear in $A$ and $B$. Thus:
\be
-F=\big(\tilde{H}^2+6 e^{2 u} V_c  H\big) A+ 2\tilde{H}(H\!-\!\lambda_1 \!-\! \lambda_2) B+
(\lambda_1 +\lambda_2 -H) \tilde{H}^2-3e^{2 u} V_c  H^2 +\cO(||x||_0^2)
\ee
i.e.:
\beqa
&&F=3 e^{2u}V_c  H^2-{\tilde H}^2(\lambda_1 +\lambda_2 -H)\nn\\
&&+\left[2{\tilde H} (H - \lambda_1 - \lambda_2)\lambda_2 x_2 -
  \big(6  e^{2u} V_c H +\tilde{H}^2\big)\lambda_1 x_1\right] p_1 \nn\\
&&- \left[\,\,2{\tilde H} (H - \lambda_1 -\lambda_2 ) \lambda_1 x_1+ 
\big(6 e^{2u} V_c  H+ {\tilde H}^2\big)\lambda_2 x_2 \right]p_2
+\cO(||x||_0^2)~~.
\eeqa
Substituting \eqref{Hleading} in this expression gives the conclusion.
\end{proof}

\begin{cor}
The contact Hamilton-Jacobi equation \eqref{Feq} is approximated to
first order in $||x||_0$ by the following quasilinear first order PDE:
\ben
\label{quasilinear}
a_1(x,\phi) x^1 \pd_1 \phi+a_2(x,\phi) x^2 \pd_2 \phi=b(x,\phi)~~.
\een
\end{cor}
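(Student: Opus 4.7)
The corollary is an immediate consequence of the preceding Proposition, so my plan is essentially to record the one-line derivation and to make explicit the point at which the higher-order correction is discarded.

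First I would substitute the ansatz $p_i = \partial_i\phi$ and $u = \phi(x)$ into the asymptotic formula for $F(x,u,p)$ provided by the Proposition, obtaining
\be
F\bigl(x,\phi(x),\partial_1\phi(x),\partial_2\phi(x)\bigr) = -\frac{a_1(x,\phi)\,x^1\,\partial_1\phi + a_2(x,\phi)\,x^2\,\partial_2\phi - b(x,\phi)}{s_2(x)^3} + \cO(\|x\|_0^2)~~.
\ee
Imposing the contact Hamilton-Jacobi equation \eqref{Feq} and multiplying through by $-s_2(x)^3$ then yields
\be
a_1(x,\phi)\,x^1\,\partial_1\phi + a_2(x,\phi)\,x^2\,\partial_2\phi - b(x,\phi) = \cO(\|x\|_0^8)~~,
\ee
which is precisely \eqref{quasilinear} up to the stated first-order error. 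Dropping the correction gives the quasilinear PDE.

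The one subtlety to address is that the multiplication by $s_2(x)^3$ is legitimate on the punctured neighborhood $\dot{U} = U \setminus \{c\}$. Since $c$ is a \emph{non-degenerate} critical point of $V$, both eigenvalues $\lambda_1$ and $\lambda_2$ are nonzero, so the quadratic form $s_2(x) = \lambda_1^2 x_1^2 + \lambda_2^2 x_2^2$ is strictly positive on $\dot U$ and vanishes only at the origin. Thus the clearing of denominators introduces no spurious solutions and does not obstruct the passage from the expansion of $F$ to the quasilinear equation on $\dot U$.

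There is no hard part: everything of substance has already been absorbed into the Proposition, whose proof performs the Taylor expansion of $V$, $\|\dd V\|_0$, $H_0$ and $\tilde H_0$ near $c$ and groups the result according to the homogeneous polynomials $s_k$. The Corollary merely records that setting $F = 0$ in this expansion, with $(u,p)$ replaced by $(\phi,\dd\phi)$, produces the quasilinear PDE \eqref{quasilinear} modulo terms of order $\|x\|_0^2$ in $F$ (equivalently, of higher order in the coefficients of the equation).
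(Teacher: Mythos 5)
Your proposal is correct and coincides with what the paper does implicitly: the Corollary is stated without a separate proof precisely because it follows by substituting $u=\phi$, $p_i=\partial_i\phi$ into the expansion of $F$ from the preceding Proposition, setting $F=0$, and clearing the denominator $s_2(x)^3$, which is nonzero on $\dot U$ by non-degeneracy of the critical point. Your bookkeeping of the error term ($\cO(\|x\|_0^8)$ after multiplying the $\cO(\|x\|_0^2)$ correction by the degree-six factor $s_2^3$) is also consistent with the degrees of $a_i$ and $b$.
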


\begin{remark}
One can check that equation \eqref{quasilinear} is {\em not} proper in
the sense of the theory of viscosity solutions. For reasons of
practical computability, we will nonetheless use a viscosity
perturbation of this equation when numerically approximating its
solutions in Subsection \ref{subsec:numquasilinear}.
\end{remark}

\begin{remark}
Suppose that $\lambda_1=\lambda_2=\lambda\neq 0$. Then the quasilinear
PDE \eqref{quasilinear} reduces to:
\be
(x_1\pd_1+x_2\pd_2)\phi=\frac{1}{2}~~,
\ee
which has general solution (see Proposition \ref{prop:lineqequal} below): 
\be
\phi(r,\theta)=\frac{1}{2}\log r +\phi_0(\theta)~~,
\ee
where $\phi_0$ is a piecewise-smooth real-valued function defined on
the unit circle and $(r,\theta)$ are the polar coordinates defined by
the isothermal coordinate system $(x^1,x^2)$. In this case, the
quasilinear equation \eqref{quasilinear} is equivalent with the linear
equation in Proposition \ref{prop:linear} which we discuss later
on. We will exclude this special case in most of the following.
\end{remark}

\subsection{Symmetries}

Since $a_i$ and $b$ are homogeneous polynomials of the same degree in
$x_1$ and $x_2$, the quasilinear equation \eqref{quasilinear} is
invariant under the homotheties $x^i\rightarrow \kappa x^i$ with
$\kappa>0$.  If $\phi$ is a solution, it follows that the function
$\phi_\kappa$ defined by $\phi_\kappa(x)=\phi(\kappa x)$ is also a
solution. Moreover, equation \eqref{quasilinear} is invariant under
the action of the Klein four-group $\Z_2\times \Z_2$ generated by
$x_1\rightarrow -x_1$ and $x_2\rightarrow -x_2$. Thus given a solution
$\phi$, the functions
$\phi_{\epsilon_1,\epsilon_2}$ defined through:
\be
\phi_{\epsilon_1,\epsilon_2}(x_1,x_2)=\phi(\epsilon_1 x_1,\epsilon_2x_2)~~,~~\mathrm{where}~~\epsilon_1,\epsilon_2\in \{-1,1\}
\ee
are also solutions. Combining these observations, we conclude that the
group $\R_{>0}\times \Z_2\times \Z_2$ acts naturally on the set of
solutions to \eqref{quasilinear}.

\subsection{Scale-invariant solutions of the quasilinear equation}
\label{subsec:scale}

It is natural to look for scale-invariant solutions of the quasilinear
equation \eqref{quasilinear}, i.e. solutions which are fixed points of
the scaling action of $\R_{>0}$. Such solutions satisfy:
\be
\phi(\kappa x)=\phi(x)\quad\forall \kappa>0
\ee
and hence depend only on the angular coordinate $\theta$ of the polar coordinate system. Since:
\be
\pd_1 \phi=-\frac{\sin\theta}{r}\phi'(\theta)~~,~~\pd_2\phi=\frac{\cos\theta}{r}\phi'(\theta)~~,
\ee
equation \eqref{quasilinear} reduces to the ODE:
\be
[a_2(\cos\theta,\sin\theta,\phi)-a_1(\cos\theta,\sin\theta,\phi)]\cos\theta\sin\theta \phi'(\theta)=b(\cos\theta,\sin\theta)~~.
\ee
The change of variables $t=\cos(2\theta)\in [-1,1]$ brings the last equation to the form:
\ben
\label{Req}
\frac{\dd\phi}{\dd t}=R(t,\phi)~~,
\een
where:
{\scriptsize \beqa
  &&\!\!\!\!\!\!\!\!\!\!\!\!\!\!R(t,\phi)\!=\frac{1}{(\!t^2\!-\!1\!) \!\big[\lambda_1^2\!(t\!+\!1\!)
      \!\!-\!\lambda_2^2\!(t\!-\!1\!)\big](\!\lambda_1\!\!-\!\lambda_2\!) }\times\nn\\ &&\!\!\!\!\!\!\!\!\!\!\!\!\!\!\!\!\!\!\frac{3V_c e^{2\phi }\big[\lambda_1^2(t\!+1)\!-\lambda_2^2(t\!-1)\big]\big[\lambda_1^3(t\!+1)\!-\!\lambda_2^3(t\!-1)\big]^2\!+\lambda_1^3(\lambda_1\!-\!\lambda_2)^2
      \lambda_2^3(t^2\!-1)\big[\lambda_1(t\!+1)\!-\!\lambda_2(t\!-1)\big]}
     { \lambda_1^2\!\lambda_2^2\big[\!-\!6\lambda_2\!\lambda_1
      \!(t^2\!-\!1\!)\!\!+\!\lambda_1^2(t\!+\!1\!)(3t\!+\!1\!)\!+\!\lambda_2^2\!(t\!-\!1\!)
      \!(3t\!-\!1\!)\big]\!\!-\! 6V_c e^{2 \phi}\!\big[\lambda_1^2\!(t\!+\!1\!)\!-\!\!\lambda_2^2
         \!(t\!-\!1\!)\big]\big[\lambda_1^3\!(t\!+\!1\!)\!\!-\!\!\lambda_2^3\!(t\!-\!1\!)\big]}~.\\
\eeqa}
\noindent\!\!Since the right hand side of \eqref{Req} can become
singular in the interval $(-1,1)$, a locally-defined classical
solution of this equation need not extend to a differentiable function
defined on the entire interval. Below, we consider generalized
solutions which are of class $\cC^1$ on the interval $(-1,1)$ except
for a finite set of points of this interval, where they can tend to
infinity or have discontinuous derivative. However, we require that the
restriction of a generalized solution to the open interval $(-1,1)$ is
continuous as a map valued in $\bar{\R}=\R\cup \{-\infty,\infty\}$.
  
Since a generalized solution depends only on $t=\cos(2\theta)$, it
follows that $\phi$ as a function of $\theta$ has periodicity $\pi$
and is invariant under the transformation $\theta\rightarrow -\theta$:
\be
\phi(\theta+\pi)=\phi(\theta)~~,~~\phi(-\theta)=\phi(\theta)~~.
\ee
Hence as a function defined on the unit circle, $\phi$ is invariant
under the Klein four-group $\Z_2\times \Z_2$ generated by reflections
in the two coordinate axes\footnote{Notice that we allow
$\phi(\theta)$ to have different directional limits from the two sides
at $\theta\in \{0,\frac{\pi}{2},\pi, \frac{3\pi}{2}\}$.}. Thus
$\phi$ is also a fixed point of the discrete symmetries of equation
\eqref{quasilinear} and we can restrict its domain of definition to
$\theta\in [0,\pi/2]$ without losing information. The range $t\in
      [-1,1]$ corresponds to $\theta\in [0,\pi/2]$, where $\theta=0$
      corresponds to $t=1$ and $\theta=\pi/2$ corresponds to $t=-1$.

The denominator of $R(t,\phi)$ factors as $T_1(t,\phi)T_2(t,\phi)$, where
{\footnotesize \beqa
&&\!\!\!\!\!\!T_1= (t^2-1)[(\lambda_1^2-\lambda_2^2)t+\lambda_1^2+\lambda_2^2](\lambda_2-\lambda_1)\nn\\
&&\!\!\!\!\!\!T_2=3 (\lambda_1\!-\!\lambda_2)^2 t^2 \big[2 (\lambda_1\!+\!\lambda_2) (\lambda_1^2\!+\!\lambda_2 \lambda_1\!+\!\lambda_2^2) e^{2 \phi } V_c\!-\!\lambda_1^2 \lambda_2^2\big]
   +4 t \big[3 (\lambda_1^5\!-\!\lambda_2^5) e^{2 \phi } V_c\!-\!\lambda_2^2 \lambda_1^4
   \!+\!\lambda_2^4 \lambda_1^2\big] \nn\\
 &&\quad\quad  +6 (\lambda_1^5+\lambda_2^2 \lambda_1^3+\lambda_2^3 \lambda_1^2+\lambda_2^5) e^{2 \phi } V_c-\lambda_1^2 \lambda_2^2 (\lambda
  _1^2+6 \lambda_2 \lambda_1+\lambda_2^2)~~.
\eeqa}
\noindent\!\!\!When $\lambda_1=\lambda_2$, this denominator vanishes
identically while the numerator of $R$ equals $24 \lambda^8 V_c
e^{2\phi}$, so in this case equation \eqref{Req} has no
solutions. When $\lambda_1\neq \lambda_2$, the denominator of
$R(t,\phi)$ vanishes for $t\in
\left\{-1,1,-\frac{\lambda_1^2+\lambda_2^2}{\lambda_1^2-\lambda_2^2}\right\}$
(which are the roots of $T_1(t,\phi)$ as a polynomial of $t$) and for
$t\in \{t_-(\phi),t_+(\phi)\}$, where:
{\footnotesize \be
  t_\pm(\phi)=\frac{-6 (\lambda_1^4+\lambda_2 \lambda_1^3+\lambda
    _2^2 \lambda_1^2+\lambda_2^3 \lambda_1+\lambda_2^4) e^{2
      \phi} V_c +\lambda_1^2\lambda_2^2\left(2\lambda_1+2\lambda_2\pm \sign(\lambda_1-\lambda_2) \sqrt{d}\right)}
{3(\lambda_1-\lambda_2)\left[ 2 (\lambda_1+\lambda_2)(\lambda_1^2+\lambda_1\lambda_2\lambda_2^2)e^{2\phi} V_c -\lambda_1^2\lambda_2^2\right]}
\ee}
\!\!are the roots of $T_2(t,\phi)$ as a polynomial of $\phi$. These
roots are real iff the quantity:
\be
d\eqdef 36 e^{4\phi} V_c^2+60 (\lambda_1+\lambda_2) e^{2 \phi} V_c+\lambda_1^2+\lambda_2^2-10 \lambda_1 \lambda_2
\ee
(which is proportional to the discriminant of $T_2$) is non-negative.

Since
$\Big{|}\frac{\lambda_1^2+\lambda_2^2}{\lambda_1^2-\lambda_2^2}\Big{|}>1$,
the only potentially singular points of $R$ which lie in the interval
$[-1,1]$ are $-1,1$ and the points of the set $[-1,1]\cap
\{t_-(\phi),t_+(\phi)\}$. If any of the points $t_\pm(\phi)$ is contained
in the interval $[-1,1]$, then the derivative of a solution of
\eqref{Req} may tend to infinity at that point, depending on the
behavior of the numerator of $R$. Let us consider the ``worst'' kind
of singularity, which arises when both $|\phi(t)|$ and
$|\frac{\dd\phi(t)}{\dd t}|=|R(t,\Phi(t))|$ tend to infinity as we
approach the singular point. To understand where this kind of
singularity may occur, it suffices to consider the cases
$\phi=+\infty$ and $\phi=-\infty$ in the formulas for
$t_\pm(\phi)$. We have:
\beqa
&&t_+(+\infty)=\twopartdef{-\frac{\lambda_1^3+\lambda_2^3}{\lambda_1^3-\lambda_2^3}}{\lambda_1>\lambda_2}{-\frac{\lambda_1^2+\lambda_2^2}{\lambda_1^2-\lambda_2^2}}
    {\lambda_1<\lambda_2}\nn\\ &&t_-(+\infty)=\twopartdef{-\frac{\lambda_1^2+\lambda_2^2}{\lambda_1^2-\lambda_2^2}}
    {\lambda_1>\lambda_2}{-\frac{\lambda_1^3+\lambda_2^3}{\lambda_1^3-\lambda_2^3}}{\lambda_1<\lambda_2}
    \eeqa
and
\beqa
&&t_+(-\infty)=\twopartdef{-\frac{2(\lambda_1+\lambda_2)+\sqrt{\lambda_1^2-10\lambda_1\lambda_2+\lambda_2^2}}{3(\lambda_1-\lambda_2)}}
      {\lambda_1>\lambda_2}{-\frac{2(\lambda_1+\lambda_2)-\sqrt{\lambda_1^2-10\lambda_1\lambda_2+\lambda_2^2}}{3(\lambda_1-\lambda_2)}}{\lambda_1<\lambda_2}
      \nn\\ &&t_-(-\infty)=\twopartdef{-\frac{2(\lambda_1+\lambda_2)-\sqrt{\lambda_1^2-10\lambda_1\lambda_2+\lambda_2^2}}{3(\lambda_1-\lambda_2)}}
            {\lambda_1>\lambda_2}{-\frac{2(\lambda_1+\lambda_2)+\sqrt{\lambda_1^2-10\lambda_1\lambda_2+\lambda_2^2}}{3(\lambda_1-\lambda_2)}}{\lambda_1<\lambda_2}
\eeqa
The roots $t_\pm(-\infty)$ are real iff the quantity:
\ben
\label{delta}
\delta\eqdef \lambda_1^2-10\lambda_1\lambda_2+\lambda_2^2
\een
satisfies:
\ben
\label{detcond}
\delta\geq 0 \Longleftrightarrow \frac{\lambda_1}{\lambda_2} \not\in (5-2\sqrt{6}, 5+2\sqrt{6})~~.
\een
Notice that $t_\pm(+\infty)$ and $t_\pm(-\infty)$ do not depend on
$V_c$. Some aspects of the qualitative behavior of $\phi$ depend
on how many of these values lie in the interval $[-1,1]$.

\begin{lemma}
\label{lemma:roots}
Suppose that $\lambda_1,\lambda_2\in \R\setminus \{0\}$ satisfy $\lambda_1\neq \lambda_2$. Then the following statements hold:
\begin{enumerate}
\item  If $\lambda_1<\lambda_2$, then we have $t_+(+\infty)\not \in [-1,1]$ and $t_-(+\infty)\in \R\setminus \{-1,1\}$. Moreover, we have:
\be
t_-(+\infty)\in (-1,1)~~\mathrm{iff}~~\lambda_1\lambda_2<0~~.
\ee
\item  If $\lambda_1>\lambda_2$, then we have $t_-(+\infty)\not \in [-1,1]$ and $t_+(+\infty)\in \R\setminus \{-1,1\}$. Moreover, we have:
\be
t_+(+\infty)\in (-1,1)~~\mathrm{iff}~~\lambda_1\lambda_2<0~~.
\ee
\item When $\lambda_1^2-10\lambda_1\lambda_2+\lambda_2^2\geq 0$, we always have: 
\be
t_\pm(-\infty)\in (-1,1)~~.
\ee
\end{enumerate}
\end{lemma}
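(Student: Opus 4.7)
The plan is to separate the lemma into two independent computations: statements (1)--(2) concern $t_\pm(+\infty)$, which are rational expressions in even or odd powers of $\lambda_1,\lambda_2$, whereas statement (3) concerns $t_\pm(-\infty)$, which involves the discriminant quantity $\delta$ and reduces to a clean AM--GM inequality.

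For statements (1) and (2), I would first observe that swapping the sign of $\lambda_1-\lambda_2$ interchanges the two piecewise branches, so it suffices to analyze the two intrinsic quantities
\[
\alpha\eqdef -\frac{\lambda_1^2+\lambda_2^2}{\lambda_1^2-\lambda_2^2},\qquad
\beta\eqdef -\frac{\lambda_1^3+\lambda_2^3}{\lambda_1^3-\lambda_2^3}.
\]
For $\alpha$: the identity $\lambda_1^2+\lambda_2^2-|\lambda_1^2-\lambda_2^2|=2\min(\lambda_1^2,\lambda_2^2)>0$ (strictly positive because both $\lambda_i$ are nonzero) yields $|\alpha|>1$, so $\alpha\notin[-1,1]$. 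For $\beta$: $|\beta|=1$ would force $\lambda_1^3+\lambda_2^3=\pm(\lambda_1^3-\lambda_2^3)$, hence $\lambda_1=0$ or $\lambda_2=0$, contradicting our hypothesis; thus $\beta\in\R\setminus\{-1,1\}$. Squaring, $\beta^2<1$ is equivalent to $4\lambda_1^3\lambda_2^3<0$, i.e. $\lambda_1\lambda_2<0$. Matching $\alpha,\beta$ with $t_\pm(+\infty)$ in each sign regime according to the piecewise definition then yields both items (1) and (2).

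For statement (3), I would introduce the auxiliary variables $s\eqdef\lambda_1+\lambda_2$ and $D\eqdef\lambda_1-\lambda_2$, which give $\lambda_1\lambda_2=(s^2-D^2)/4$ and therefore
\[
\delta=(\lambda_1+\lambda_2)^2-12\lambda_1\lambda_2=3D^2-2s^2.
\]
Regardless of $\sign(\lambda_1-\lambda_2)$, the unordered pair $\{t_+(-\infty),t_-(-\infty)\}$ coincides with $\bigl\{-(2s\pm\sqrt{\delta})/(3D)\bigr\}$, so the desired inclusion $t_\pm(-\infty)\in(-1,1)$ amounts to the pair of inequalities $(2s\pm\sqrt{\delta})^2<9D^2$. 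Expanding and using $\delta+2s^2=3D^2$, both inequalities collapse to
\[
2|s|\sqrt{\delta}<\delta+s^2,
\]
which is exactly the AM--GM inequality $\delta+s^2\geq 2\sqrt{\delta\cdot s^2}$ applied to the nonnegative numbers $\delta$ and $s^2$. Equality in AM--GM holds iff $\delta=s^2$; substituting $\delta=3D^2-2s^2$ this becomes $D^2=s^2$, equivalently $\lambda_1\lambda_2=0$, which is excluded by hypothesis. Hence the inequality is strict and both roots lie in $(-1,1)$.

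The only genuinely delicate point is the sign bookkeeping in items (1)--(2): one must carefully track how the $\sign(\lambda_1-\lambda_2)$ convention in the piecewise definition of $t_\pm(+\infty)$ pairs each branch with $\alpha$ or $\beta$. Once this correspondence is unambiguously fixed, everything reduces to elementary algebraic identities and a single AM--GM step, and no further analytic input is required.
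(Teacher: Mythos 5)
Your proposal is correct and follows essentially the same route as the paper: items (1)--(2) reduce to the signs of $t^2-1=\frac{4(\lambda_1\lambda_2)^2}{(\lambda_1^2-\lambda_2^2)^2}$ and $\frac{4(\lambda_1\lambda_2)^3}{(\lambda_1^3-\lambda_2^3)^2}$, and your item (3) inequality $2|s|\sqrt{\delta}<\delta+s^2$ is, after the substitution $s=\lambda_1+\lambda_2$, $D=\lambda_1-\lambda_2$, exactly the paper's condition $|\cB|<|\cA|$ (with $(\delta+s^2)^2-4s^2\delta=(\delta-s^2)^2=144\lambda_1^2\lambda_2^2$ reproducing the paper's $\cA^2-\cB^2$). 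The only cosmetic difference is that your AM--GM packaging bounds $|t_\pm(-\infty)|<1$ in one step, whereas the paper separately checks that $\cA<0$; both are valid.
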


\begin{proof}
We have:
\ben
t_+(+\infty)^2-1=\twopartdef{\frac{4(\lambda_1\lambda_2)^3}{(\lambda_1^3-\lambda_2^3)^2}}{\lambda_1>\lambda_2}{\frac{4(\lambda_1\lambda_2)^2}{(\lambda_1^2-\lambda_2^2)^2}}{\lambda_1<\lambda_2}
\een
and
\ben
t_-(+\infty)^2-1=\twopartdef{\frac{4(\lambda_1\lambda_2)^2}{(\lambda_1^2\lambda_2^2)^2}}{\lambda_1>\lambda_2}{\frac{4(\lambda_1\lambda_2)^3}{(\lambda_1^3-\lambda_2^3)^2}}{\lambda_1<\lambda_2}~~.
\een
This implies the first two points of the lemma. On the other hand, we have:
\ben
t_-(-\infty)^2-1=\twopartdef{\cA+\cB}{\lambda_1>\lambda_2}{\cA-\cB}{\lambda_1<\lambda_2}
\een
and
\ben
t_-(-\infty)^2-1=\twopartdef{\cA-\cB}{\lambda_1>\lambda_2}{\cA+\cB}{\lambda_1<\lambda_2}~~,
\een
where:
\beqan
&&\cA=-\frac{4(\lambda_1^2-4\lambda_1\lambda_2+\lambda_2^2)}{9(\lambda_1-\lambda_2)^2}\nn\\
&&\cB=\frac{4(\lambda_1+\lambda_2)\sqrt{\lambda_1^2-10\lambda_1\lambda_2+\lambda_2^2}}{9(\lambda_1-\lambda_2)^2} ~~.
\eeqan
We have:
\be
\cA^2-\cB^2=\frac{64\lambda_1^2\lambda_2^2}{9(\lambda_1-\lambda_2)^4}>0~~,
\ee
which shows that $|\cB|<|\cA|$ and hence the sign of the quantity
$t_\pm(-\infty)^2-1$ always coincides with that of $\cA$. It is easy
to see that the inequality $\delta\geq 0$ implies
$\lambda_1^2-4\lambda_1\lambda_2+\lambda_2^2>0$, which ensures that
$\cA$ is always negative or zero. It follows that we have
$t_\pm(-\infty)^2-1\leq 0$ and we conclude.
\end{proof}

\noindent Lemma \ref{lemma:roots} implies the following: 

\begin{prop}
\label{prop:sings}
Suppose that $\lambda_1,\lambda_2\in \R\setminus \{0\}$ satisfy
$\lambda_1\neq \lambda_2$. Then the following statements hold:
\begin{enumerate}
\item If $\delta<0$, then the absolute value of a generalized solution of
  the ODE \eqref{Req} can tend to infinity together with that of its
  derivative in the interior of the interval $(-1,1)$ iff
  $\lambda_1\lambda_2<0$. In this case, there is at most one such
  singular point, namely $t_-(+\infty)$ if $\lambda_1<\lambda_2$,
  respectively $t_+(+\infty)$ if $\lambda_1>\lambda_2$.
\item If $\delta\geq 0$, then a generalized solution of \eqref{Req}
  has at most three points in the interval $(-1,1)$ where its absolute
  value can tend to infinity together with that of its
  derivative. These can be any or both of the points $t_\pm(-\infty)$
  (which coincide when $\delta=0$) and at most one of the following
  points, which belongs to the interval $(-1,1)$ only when
  $\lambda_1\lambda_2<0$:
\begin{enumerate}[(a)]
\item $t_-(+\infty)$, if $\lambda_1<\lambda_2$,
\item $t_+(+\infty)$, if $\lambda_1>\lambda_2$.
\end{enumerate}
\end{enumerate}
\end{prop}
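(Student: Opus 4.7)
The strategy is to identify precisely which zeros of the denominator of $R(t,\phi)$ can host the ``worst'' type of singularity, namely those at which both $|\phi(t)|$ and $|\phi'(t)|=|R(t,\phi(t))|$ diverge. First I will recall that the denominator of $R$ factors as $T_1(t,\phi)T_2(t,\phi)$. The polynomial $T_1$ (in $t$) has the three roots $t\in\{-1,1,-(\lambda_1^2+\lambda_2^2)/(\lambda_1^2-\lambda_2^2)\}$, the third of which has absolute value strictly greater than one and hence lies outside $[-1,1]$; moreover, $T_1$ is independent of $\phi$, so the zeros of $T_1$ cannot be approached while $|\phi|$ diverges unless the numerator of $R$ also compensates, and inspection shows they do not yield simultaneous blow-up of $\phi$ and $\phi'$ in $(-1,1)$. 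Thus the candidate singular points are $t_\pm(\phi)$, the roots of $T_2(\cdot,\phi)$.

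Next I will exploit continuity: if $t_\ast\in(-1,1)$ is a point at which $|\phi(t)|\to\infty$, then by continuity of the relation $T_2(t,\phi(t))=0$ we must have $t_\ast\in\{t_+(\pm\infty),t_-(\pm\infty)\}$, understood as limits of $t_\pm(\phi)$ as $\phi\to\pm\infty$. Therefore the search reduces to determining which of the four values $t_\pm(+\infty),t_\pm(-\infty)$ belong to the open interval $(-1,1)$. This is precisely the content of Lemma \ref{lemma:roots}.

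With this reduction in hand, the proof splits by the sign of the discriminant-type quantity $\delta=\lambda_1^2-10\lambda_1\lambda_2+\lambda_2^2$ of \eqref{delta}. In the case $\delta<0$, the values $t_\pm(-\infty)$ fail to be real, so the only candidate singular points inside $(-1,1)$ are $t_\pm(+\infty)$; applying points (1)--(2) of Lemma \ref{lemma:roots} shows that exactly one of these lies in $\R\setminus\{-1,1\}$ (namely $t_-(+\infty)$ if $\lambda_1<\lambda_2$, and $t_+(+\infty)$ if $\lambda_1>\lambda_2$), and the lemma also asserts that this value lies in $(-1,1)$ if and only if $\lambda_1\lambda_2<0$. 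This yields the first statement of the proposition. In the case $\delta\geq 0$, point (3) of Lemma \ref{lemma:roots} guarantees that $t_\pm(-\infty)\in(-1,1)$ unconditionally (collapsing to a single point when $\delta=0$), while points (1)--(2) again contribute at most one additional value from $\{t_\pm(+\infty)\}$, present in $(-1,1)$ exactly when $\lambda_1\lambda_2<0$. Summing these contributions gives at most three singular points, with the alternatives listed in (a) and (b) of the proposition.

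The main obstacle I anticipate is the justification of step 1, namely the claim that the roots of $T_1$ (which are independent of $\phi$) cannot produce a simultaneous blow-up of $\phi$ and $\phi'$ in the interior $(-1,1)$. Since only the third root of $T_1$ is $\phi$-independent but it lies outside $[-1,1]$, and the points $t=\pm 1$ are boundary points, this step reduces to checking that no blow-up of $\phi$ can accumulate at a $\phi$-independent zero of the denominator within $(-1,1)$; this amounts to a continuity-of-branches argument for $T_2(t,\phi)=0$. Once this point is handled, the remainder of the proof is a bookkeeping exercise that directly invokes Lemma \ref{lemma:roots}.
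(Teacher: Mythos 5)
Your proposal is correct and follows essentially the same route as the paper: the preceding discussion in the text already reduces the location of "worst" singularities to the limits $t_\pm(\pm\infty)$ of the roots of $T_2$ (since the roots of $T_1$ are either the endpoints $\pm 1$ or lie outside $[-1,1]$), and the paper then simply invokes Lemma \ref{lemma:roots} exactly as you do. Your worry about step 1 is moot for the reason you yourself note — $T_1$ has no zeros in the open interval $(-1,1)$ — so the bookkeeping via the lemma is all that remains.
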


\begin{remark}
Whether the derivative of $\phi$ becomes infinite at any point listed
in Proposition \ref{prop:sings} depends on whether $R(t,\phi(t))$
actually blows up at that point. Even in that case, it is not assured
that $|\phi|$ tends to infinity there. The proposition merely
constrains the position of {\em potential} singularities of the kind
considered, but does {\em not} guarantee that any of the points listed
in the statement arises as a singularity in any given solution.
\end{remark}

\noindent Figure \ref{fig:roots} shows $t_\pm(+\infty)$ and
$t_\pm(-\infty)$ as functions of $\lambda_1$ and $\lambda_2$, while
Figure \ref{fig:ScaleSol} shows the particular solutions $\phi(t)$ of
equation \eqref{Req} for a few choices of $V_c$, $\lambda_1$ and
$\lambda_2$ with the initial conditions listed. The black and red
  dots in Figure \ref{fig:ScaleSol} represent those points among $t_\pm(+\infty)$, respectively
  $t_\pm(-\infty)$ which lie inside the interval $(-1,1)$. We show solutions
whose restrictions to the open interval $(-1,1)$ are continuous as
maps valued in $\bar{\R}=\R\cup \{-\infty,\infty\}$.

\begin{figure}[H]
\centering
\begin{minipage}{.9\textwidth}
\centering ~~ \includegraphics[width=.7\linewidth]{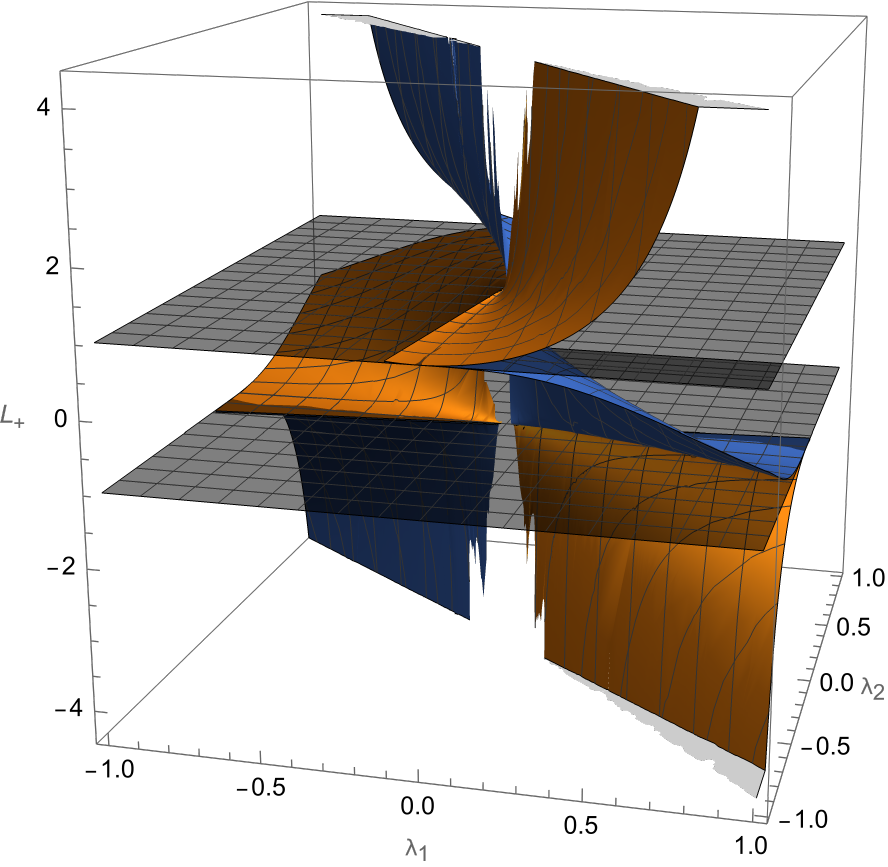}
\vspace{1em}
\subcaption{$t_\pm(+\infty)$ as functions of $\lambda_1$ and $\lambda_2$. The rigged regions in the plot
are numerical artifacts.}
\end{minipage}
\hfill
\centering
\begin{minipage}{.95\textwidth}
\centering \includegraphics[width=.7\linewidth]{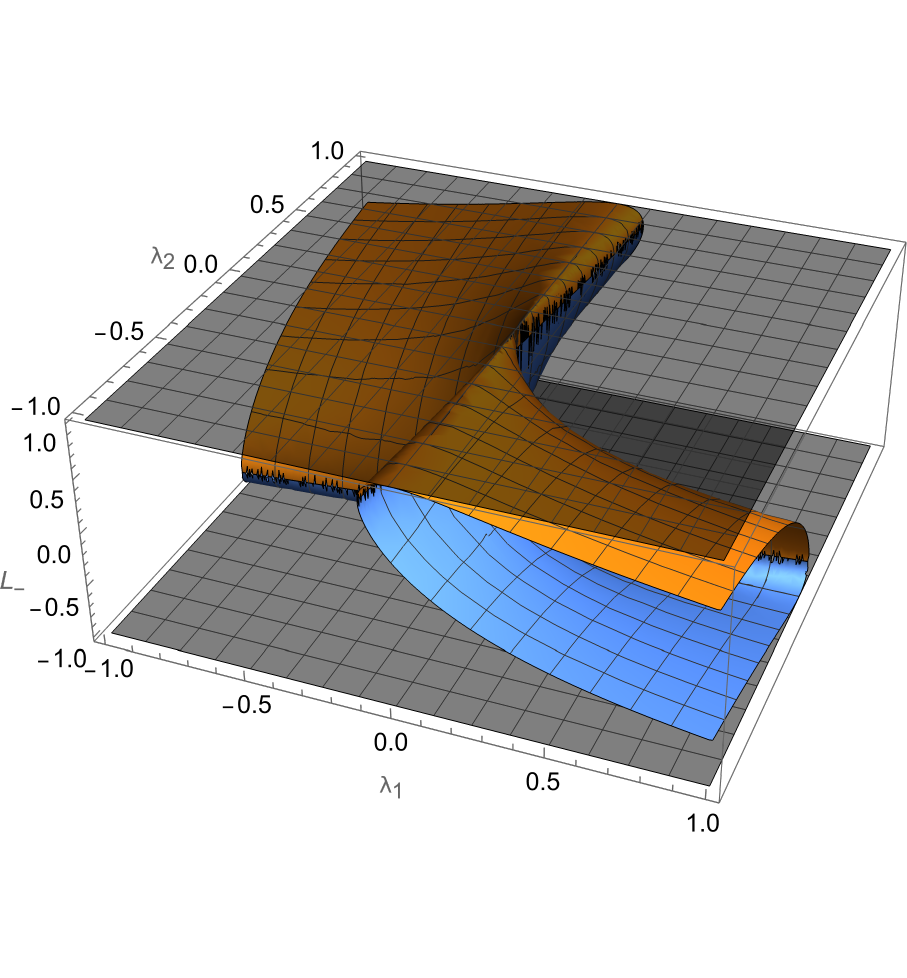}
\vspace{-2em}
\subcaption{$t_\pm(-\infty)$ as functions of $\lambda_1$ and $\lambda_2$.}
\end{minipage}
\hfill
\caption{Plots of $t_\pm(+\infty)$ and $t_\pm(-\infty)$ as functions of 
$\lambda_1$, $\lambda_2$.}
\label{fig:roots}
\end{figure}

\begin{figure}[H]
\centering
\begin{minipage}{0.53\textwidth}
\centering ~~ \includegraphics[width=.64\linewidth]{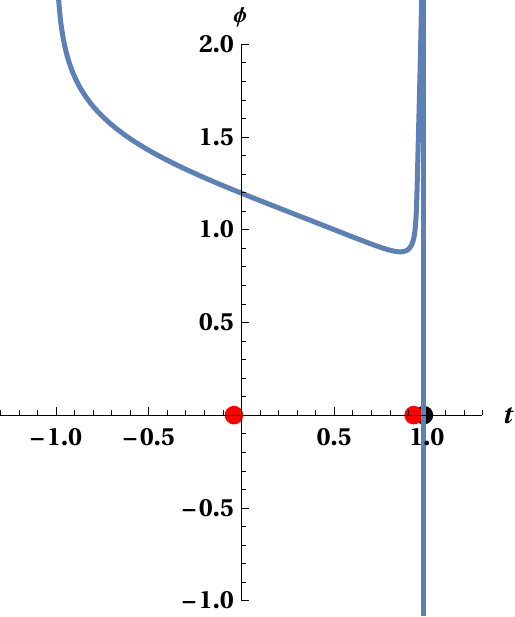}
\subcaption{The solution $\phi$ for $V_c\!=\!1/90$,
  $\lambda_1\!=\!-1/5$, $\lambda_2\!=\!1$ with initial condition
  $\phi(1/2)=1$. We have $\delta\approx 3.04$, $t_-(+\infty)\approx 0.98 $,
  $t_+(+\infty)\approx 1.08$, $t_-(-\infty)\approx  -0.04$,
  $t_+(-\infty)\approx  0.93$.}
\end{minipage}
\centering
\hspace{1em}
\begin{minipage}{0.4\textwidth}
\centering \includegraphics[width=\linewidth]{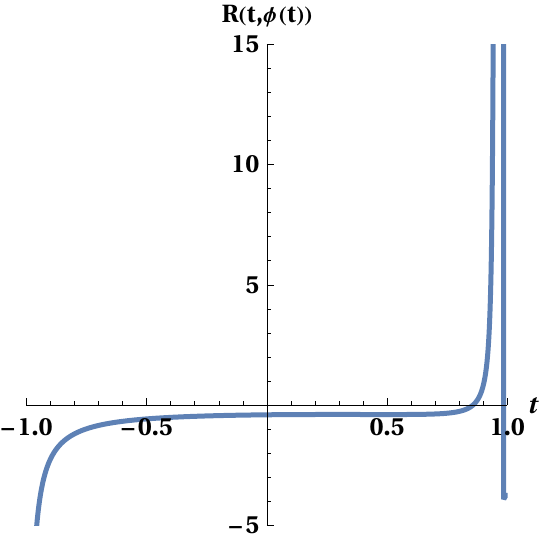}
\subcaption{Plot of $\phi'(t)=R(t,\phi(t))$ where $\phi$ is the
  solution shown on the left.}
\end{minipage}
\hfill
 \centering
\begin{minipage}{0.53\textwidth}
\centering \includegraphics[width=.46\linewidth]{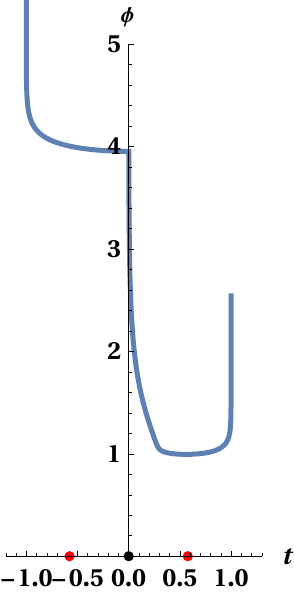}
\subcaption{The solution $\phi$ for $V_c\!=\!1/18$,
  $\lambda_1\!=\!-1$, $\lambda_2\!=\!1$ with initial condition
  $\phi(0.379)=1.01$. We have $\delta=12$, $t_-(+\infty)=0$,
  $t_+(+\infty)=-\infty $, $t_-(-\infty)\approx -0.57$,
  $t_+(-\infty)\approx 0.57$.}
\end{minipage}
\centering
\hspace{1em}
\begin{minipage}{0.4\textwidth}
\centering \includegraphics[width=\linewidth]{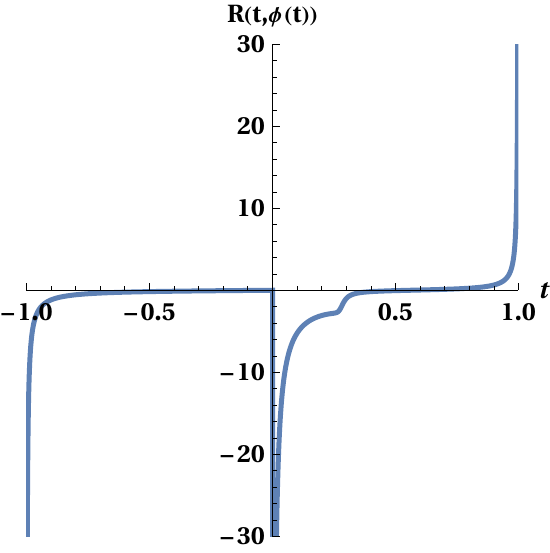}
\subcaption{Plot of $\phi'(t)=R(t,\phi(t))$ where $\phi$ is the
  solution shown on the left.}
\end{minipage}
\hfill
\\
\centering
\begin{minipage}{.53\textwidth}
  \centering \includegraphics[width=.7\linewidth]{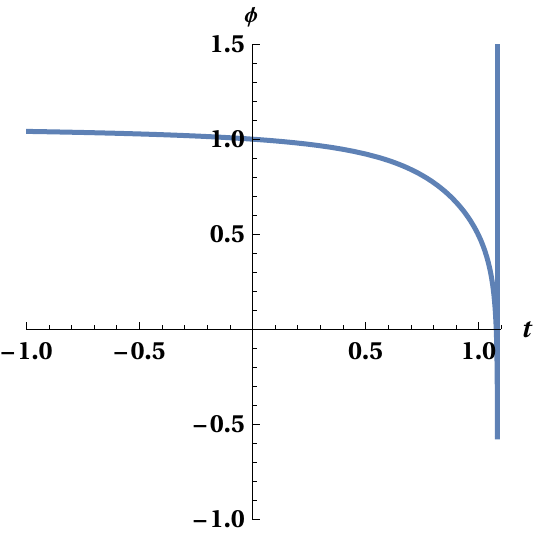}
  \subcaption{The solution $\phi$ for $V_c\!=\!10^{-10}$,
    $\lambda_1\!=\!1/5$, $\lambda_2\!=\!1$ with initial condition
    $\phi(0)=1$. We have $\delta=-0.95$, $t_-(+\infty)\approx 1.01$,
    $t_+(+\infty)\approx 1.08$, $t_-(-\infty)\approx 1 -0.4 \i$,
    $t_+(-\infty)\approx 1 +0.4\i$.}
\end{minipage}
\centering
\hspace{1em}
\begin{minipage}{0.4\textwidth}
\centering  \includegraphics[width=\linewidth]{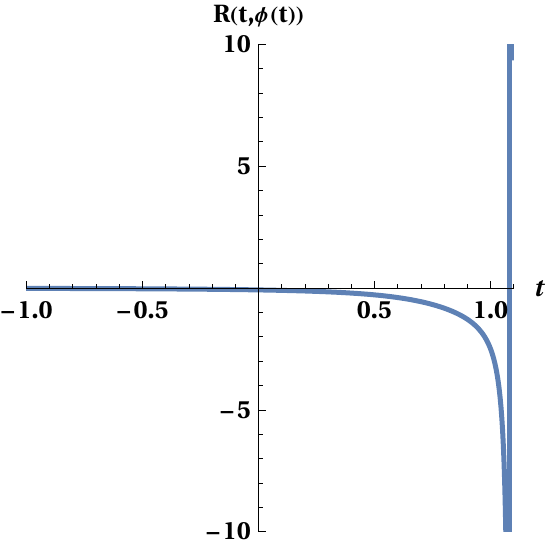}
\subcaption{Plot of $\phi'(t)=R(t,\phi(t))$ where $\phi$ is the solution shown on the left.}
\end{minipage}
\caption{Plots of the solutions $\phi$ of equation \eqref{Req}  and of their derivatives with  the initial conditions listed for various values of $V_c$ and
  $\lambda_1$, $\lambda_2$.}
\label{fig:ScaleSol}
\end{figure}

\subsection{Linearization of the contact Hamilton-Jacobi equation when $\phi$ blows up at a non-degenerate critical point}

The quasilinear equation \eqref{quasilinear} can be approximated by a
linear equation in the special case when $\phi$ tends to $+\infty$ at
the non-degenerate critical point $c$ of $V$.

\begin{prop}
\label{prop:linear}
Suppose that $\phi$ satisfies the quasilinear equation
\eqref{quasilinear} and that we have $\phi(x)\gg 1$ on a vicinity of
the non-degenerate critical point $c$ of $V$. Then $\phi$ is an
approximate solution of the following linear first order PDE:
\ben
\label{linear}
2 s_2(x) \lambda_i x^i \pd_i \phi= s_3(x)~~,
\een
which it satisfies up to corrections of order $\cO(\frac{e^{-2\phi}}{3V_c })$. 
\end{prop}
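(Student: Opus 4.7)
The plan is to substitute the explicit formulas for $a_1, a_2, b$ into the quasilinear equation \eqref{quasilinear} and isolate the terms that carry a factor of $e^{2\phi}$, which dominate in the regime $\phi \gg 1$. Recall that
\begin{equation*}
a_i(x,\phi) = \lambda_i s_2(x)\bigl[t_i(x) + 6 V_c e^{2\phi} s_2(x) s_3(x)\bigr],\qquad
b(x,\phi) = -\lambda_1^3\lambda_2^3(\lambda_1-\lambda_2)^2 x_1^2 x_2^2 s_1(x) + 3 V_c e^{2\phi} s_2(x) s_3(x)^2.
\end{equation*}
First I would separate each of $a_i$ and $b$ into its ``large'' part (proportional to $V_c e^{2\phi}$) and its $\phi$-independent ``small'' part. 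Substituting into \eqref{quasilinear} then yields an identity whose leading-order part in $e^{2\phi}$ reads
\begin{equation*}
6 V_c e^{2\phi} s_2(x)^2 s_3(x)\, \lambda_i x^i \pd_i \phi \;=\; 3 V_c e^{2\phi} s_2(x) s_3(x)^2,
\end{equation*}
and the remaining terms carry no factor of $e^{2\phi}$.

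Next, I would divide the full identity by the non-vanishing common factor $3 V_c e^{2\phi} s_2(x) s_3(x)$ (well-defined away from $c$ and away from the zero locus of $s_3$, which is where the claim is meaningful). The leading terms contribute exactly $2 s_2(x)\, \lambda_i x^i \pd_i \phi$ on the left and $s_3(x)$ on the right, giving the linear equation \eqref{linear}. The subleading contributions consist of the $t_i(x)$-term from $a_i$ and the first summand of $b$, both of which appear divided by $3 V_c e^{2\phi}$ after the rescaling; more precisely one obtains
\begin{equation*}
2 s_2(x) \lambda_i x^i \pd_i \phi - s_3(x) \;=\; \frac{e^{-2\phi}}{3 V_c}\biggl[\frac{\lambda_1^3\lambda_2^3(\lambda_1-\lambda_2)^2 x_1^2 x_2^2 s_1(x)}{s_2(x) s_3(x)} - \frac{\lambda_i t_i(x) x^i \pd_i \phi}{s_3(x)}\biggr].
\end{equation*}
Since the bracket is a bounded function of $x$ (on any compact subset of the punctured neighborhood where $s_3 \neq 0$) times terms involving $\pd_i \phi$ which remain bounded under the assumption that $\phi$ varies smoothly, the right-hand side is of order $\cO(e^{-2\phi}/(3V_c))$, which is the claimed error estimate.

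There is no real obstacle here — the statement is essentially a leading-order expansion. The only delicate point is the choice of regime: one must work on a punctured neighborhood of $c$ so that $s_2(x) > 0$, and restrict attention to the complement of $\{s_3 = 0\}$ (which is nontrivial only when $\lambda_1 \lambda_2 < 0$) so that the division by $s_3$ in the estimate is legitimate. A minor bookkeeping point is that the bound on the correction implicitly assumes $\pd_i\phi$ does not itself blow up faster than $e^{2\phi}$; this is consistent with the asymptotic regime $\phi \to +\infty$ envisioned in the proposition.
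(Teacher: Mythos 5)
Your proposal is correct and follows essentially the same route as the paper: divide the quasilinear equation by $3V_c e^{2\phi}$ (and the common factor $s_2 s_3$), identify the leading terms as \eqref{linear}, and observe that the remaining terms carry a factor $\frac{e^{-2\phi}}{3V_c}$. Your additional remarks on the zero locus of $s_3$ and on the implicit boundedness of $\pd_i\phi$ are sensible refinements of the paper's terser argument, but they do not change the method.
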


\begin{proof}
Dividing both terms by $3 V_c  e^{2\phi} $,  equation \eqref{quasilinear} can be written as:
\be
\lambda_i s_2(x)\!\left[t_i(x)\frac{e^{-2\phi}}{3V_c }\!+\!2 s_2(x) s_3(x)\right] \!x^i \pd_i \phi\!=\!
-\lambda_1^3\lambda_2^3(\lambda_1\!-\!\lambda_2)^2 x_1^2 x_2^2 s_1(x) \frac{e^{-2\phi}}{3V_c }+\! s_2(x) s_3(x)^2
\ee
and the terms containing exponentials can be neglected when $\phi(x)\gg 1$.
\end{proof}

To find the solutions of \eqref{linear}, consider the polar coordinate system $(r,\theta)$ defined through:
\ben
\label{polar}
x_1=r\cos \theta~~,~~x_2=r\sin \theta~~
\een
and notice the relations:
\be
\label{pdpolar}
\pd_1 =\cos\theta \frac{\pd}{\pd r} -\frac{\sin\theta}{r}\pd_\theta~\quad~,~\quad~\pd_2 =\sin\theta \frac{\pd}{\pd r} +\frac{\cos\theta}{r}\pd_\theta~~,
\ee
which imply:
\ben
\label{xpdpolar}
x^1\pd_1\phi_0= (\cos^2\theta \, r\pd_r-\sin\theta\cos\theta\pd_\theta) \phi~~,~~x^2\pd_2\phi_0=(\sin^2\theta \, r\pd_r + \sin\theta \cos\theta\pd_\theta)\phi~~.
\een

\begin{prop}
Suppose that $\lambda_1\neq \lambda_2$. Then the general solution of the linear equation \eqref{linear} is:
\ben
\label{gensol}
\phi(r,\theta)=\phi_0(\theta)+Q_0\big(\frac{\lambda_2-\lambda_1}{\lambda_1\lambda_2} \log r+\frac{1}{\lambda_1}\log|\cos\theta|-\frac{1}{\lambda_2}\log|\sin\theta|\big)~~,
\een
where:
\ben
\label{partsol}
\phi_0(\theta)=\frac{1}{4} \log(\lambda_1^2 \cos ^2\theta+\lambda_2^2 \sin ^2\theta)-
  \frac{1}{2} \frac{\lambda_2 \log |\cos\theta|-\lambda_1\log |\sin\theta|}{\lambda_2-\lambda_1}
\een
and $Q_0$ is a function of a single variable which we take to be smooth.
\end{prop}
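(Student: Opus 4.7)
The plan is to apply the classical method of characteristics to the linear first-order PDE \eqref{linear}, exploiting the fact that both sides are homogeneous of degree two in $x$. Away from the origin (where $s_2(x)>0$ since $\lambda_1,\lambda_2\neq 0$), the equation is equivalent to
\be
\lambda_1 x_1 \pd_1 \phi+\lambda_2 x_2 \pd_2 \phi=\frac{s_3(x)}{2 s_2(x)}~~.
\ee

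First I would compute a first integral of the associated characteristic flow. The planar characteristic ODEs reduce to $\frac{\dd x_1}{\lambda_1 x_1}=\frac{\dd x_2}{\lambda_2 x_2}$, which integrates to $\frac{1}{\lambda_1}\log|x_1|-\frac{1}{\lambda_2}\log|x_2|=\const$. Substituting \eqref{polar} and using $\frac{1}{\lambda_1}-\frac{1}{\lambda_2}=\frac{\lambda_2-\lambda_1}{\lambda_1\lambda_2}$ shows that the quantity inside $Q_0$ in the statement is precisely a first integral of the characteristic flow on $\cM_0$; the assumption $\lambda_1\neq \lambda_2$ is what makes this integral depend nontrivially on $r$ and thus truly parameterize the characteristics.

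Next I would produce a particular solution via the ansatz $\phi=\phi_0(\theta)$. Using \eqref{pdpolar} one computes $x^1\pd_1 \phi_0(\theta)=-\cos\theta\sin\theta\,\phi_0'(\theta)$ and $x^2\pd_2\phi_0(\theta)=\cos\theta\sin\theta\,\phi_0'(\theta)$, so the left-hand side of \eqref{linear} becomes $2r^2(\lambda_1^2\cos^2\theta+\lambda_2^2\sin^2\theta)(\lambda_2-\lambda_1)\cos\theta\sin\theta\,\phi_0'(\theta)$, while the right-hand side is $r^2(\lambda_1^3\cos^2\theta+\lambda_2^3\sin^2\theta)$. The $r^2$ factors cancel and the PDE reduces to the ODE
\be
\phi_0'(\theta)=\frac{\lambda_1^3 \cos^2\theta+\lambda_2^3 \sin^2\theta}{2(\lambda_2-\lambda_1)(\lambda_1^2 \cos^2\theta+\lambda_2^2 \sin^2\theta)\cos\theta\sin\theta}~~.
\ee
I would then verify directly that \eqref{partsol} satisfies this ODE by differentiation, bringing the two resulting terms to a common denominator and checking the algebraic identity
\be
(\lambda_2^2-\lambda_1^2)(\lambda_2-\lambda_1)\cos^2\theta\sin^2\theta+(\lambda_2\sin^2\theta+\lambda_1\cos^2\theta)(\lambda_1^2\cos^2\theta+\lambda_2^2\sin^2\theta)=\lambda_1^3\cos^2\theta+\lambda_2^3\sin^2\theta~~,
\ee
which, after setting $c=\cos^2\theta$, $s=\sin^2\theta$ with $c+s=1$, collapses to a linear identity.

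Finally I would invoke the standard structure theorem for linear first-order PDEs: the general $\cC^1$ solution (on each connected open region where the coefficients are smooth and the characteristic foliation is regular) equals a particular solution plus an arbitrary smooth function of any first integral of the characteristic flow. Combining the particular solution from step 2 with the first integral from step 1 yields \eqref{gensol}. The main technical obstacle is the bookkeeping in the algebraic identity of step 3; a minor subtlety is that \eqref{partsol} is singular along $\{\cos\theta=0\}\cup\{\sin\theta=0\}$, so strictly speaking the representation \eqref{gensol} holds on each of the four open quadrants of the punctured isothermal chart separately, which is consistent with the piecewise-smooth viewpoint adopted elsewhere in the paper.
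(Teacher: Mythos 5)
Your proposal is correct and follows essentially the same route as the paper: a particular solution depending only on $\theta$ (reducing to the same ODE for $\phi_0'$) plus the general homogeneous solution written as an arbitrary function of the invariant $\frac{1}{\lambda_1}\log|x_1|-\frac{1}{\lambda_2}\log|x_2|$, which in polar coordinates is exactly the argument of $Q_0$. The only cosmetic differences are that the paper finds the homogeneous solution via the substitution $x_i=\sign(x_i)e^{\sigma_i}$ rather than by integrating the characteristic ODE, and obtains $\phi_0$ by quadrature rather than by direct verification; your algebraic identity for $\phi_0'$ checks out.
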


\noindent Notice that the solution \eqref{gensol} need not be of class
$\cC^1$ if we take $Q_0$ to be smooth. Figures \ref{fig:linear1},
\ref{fig:linear2} and \ref{fig:linear3} show plots of the functions
$\phi_0$ and:
\be
f(r,\theta)\eqdef\frac{\lambda_2-\lambda_1}{\lambda_1\lambda_2} \log r+\frac{1}{\lambda_1}\log|\cos\theta|-\frac{1}{\lambda_2}\log|\sin\theta|
\ee
for a few choices of $\lambda_1$ and $\lambda_2$.

\begin{figure}[H]
\centering
\begin{minipage}{0.45\textwidth}
\centering ~~ \includegraphics[width=1.2\linewidth]{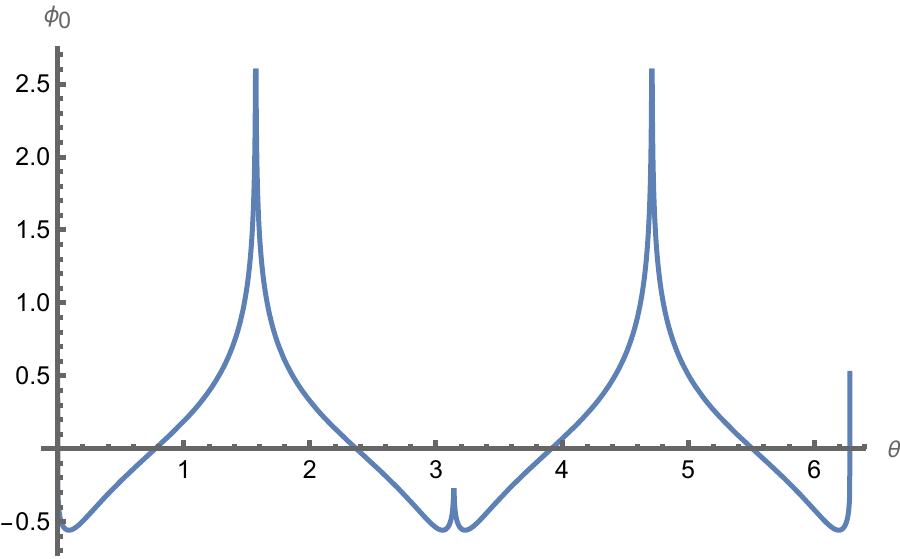}
\subcaption{$\phi_0$ as a function of $\theta$.}
\end{minipage}
\hfill
\centering
\begin{minipage}{.45\textwidth}
\centering ~~ \includegraphics[width=0.9\linewidth]{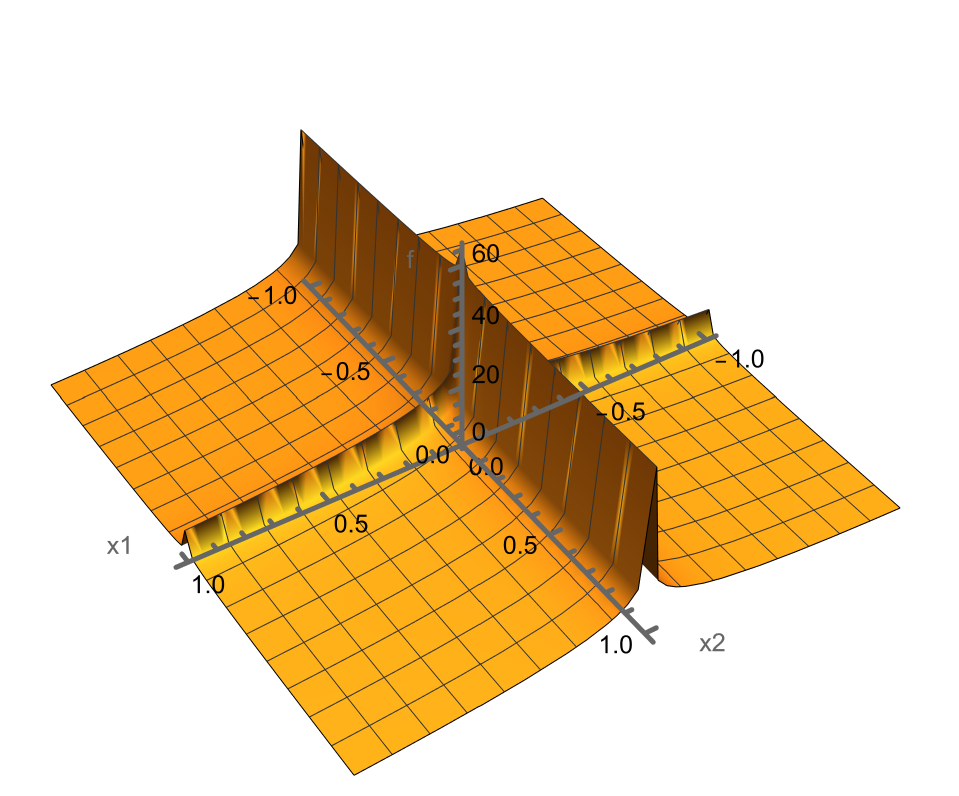}
\subcaption{$f$ as a function of $x_1$ and $x_2$}
\end{minipage}
\hfill
\caption{Plots of $\phi_0$ and $f$ for $\lambda_1=-1/5$ and $\lambda_2=1$.}
\label{fig:linear1}
\end{figure}

\begin{figure}[H]
\centering
\begin{minipage}{0.45\textwidth}
\centering ~~ \includegraphics[width=1.2\linewidth]{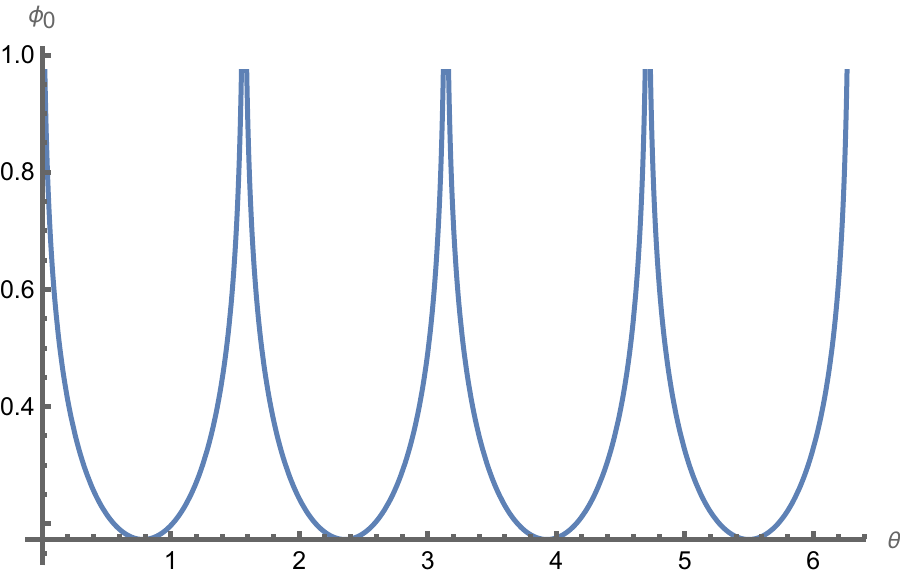}
\subcaption{$\phi_0$ as a function of $\theta$.}
\end{minipage}
\hfill
\centering
\begin{minipage}{.45\textwidth}
\centering ~~ \includegraphics[width=0.9\linewidth]{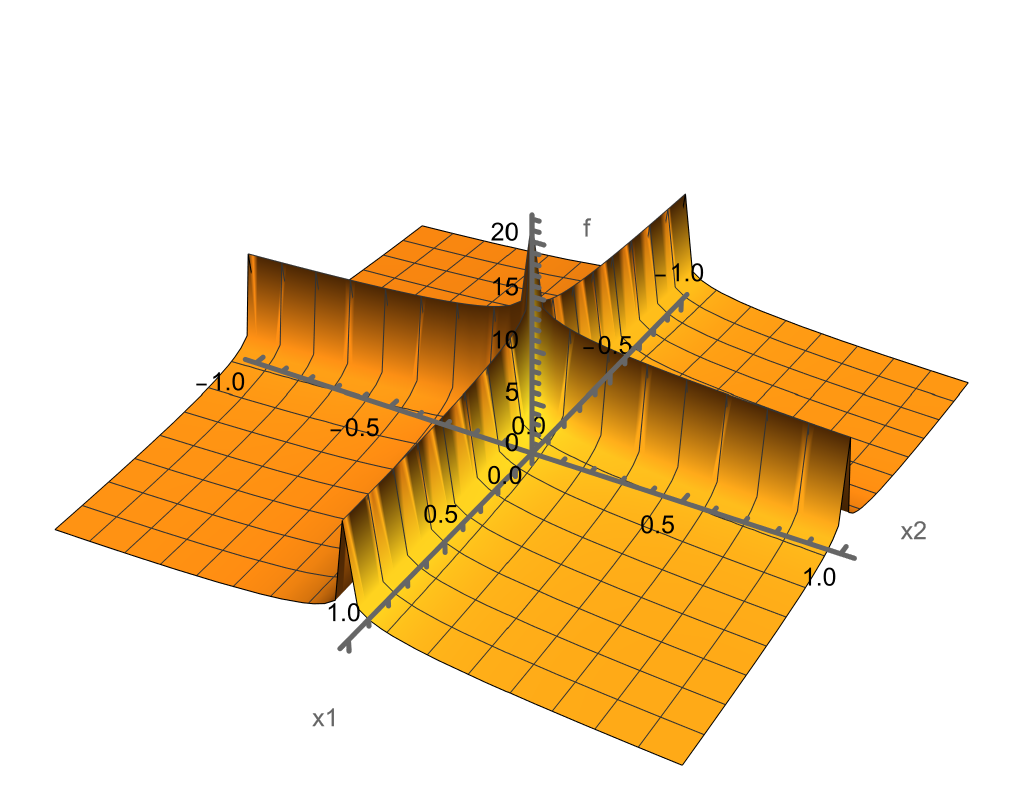}
\subcaption{$f$ as a function of $x_1$ and $x_2$}
\end{minipage}
\hfill
\caption{Plots of $\phi_0$ and $f$ for $\lambda_1=-1$ and $\lambda_2=1$.}
\label{fig:linear2}
\end{figure}

\begin{figure}[H]
\centering
\begin{minipage}{0.45\textwidth}
\centering ~~ \includegraphics[width=1.2\linewidth]{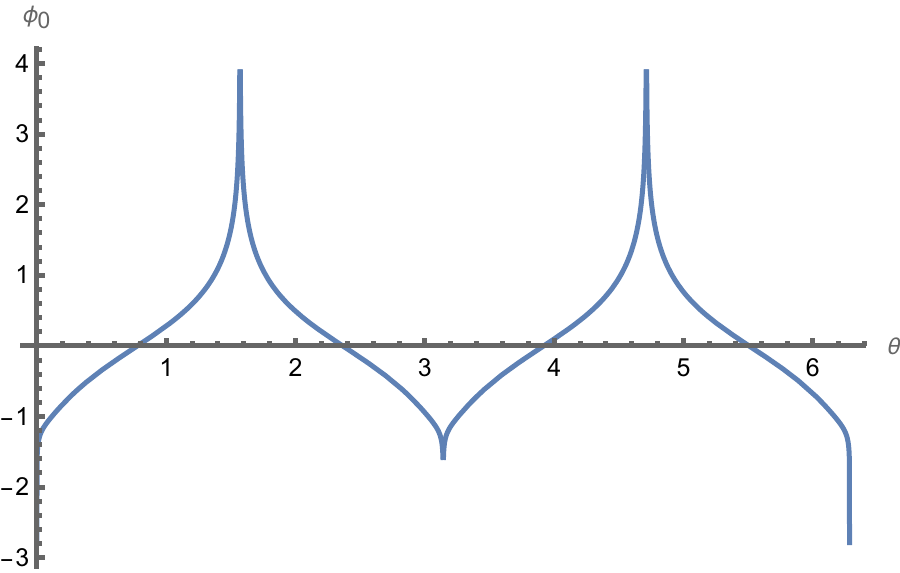}
\subcaption{$\phi_0$ as a function of $\theta$.}
\end{minipage}
\hfill
\centering
\begin{minipage}{.45\textwidth}
\centering ~~ \includegraphics[width=1\linewidth]{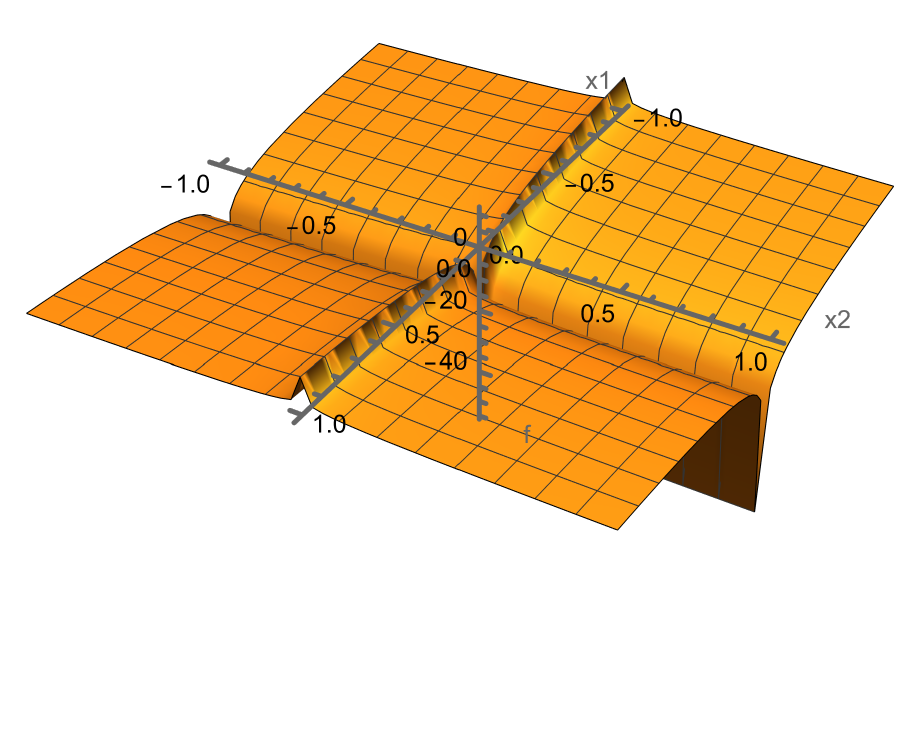}
\subcaption{$f$ as a function of $x_1$ and $x_2$}
\end{minipage}
\hfill
\caption{Plots of $\phi_0$ and $f$ for $\lambda_1=1/5$ and $\lambda_2=1$.}
\label{fig:linear3}
\end{figure}

\begin{proof}
The homogeneous equation associated to \eqref{linear} is equivalent with: 
\be
\lambda_1 x^1 \pd_1 \phi+\lambda_2 x^2 \pd_2\phi=0~~.
\ee
The change of variables $x_i=\sign(x_i)e^{\sigma_i}$ brings this equation to the form:
\be
\lambda_1 \frac{\pd \phi}{\pd \sigma_1}+\lambda_2 \frac{\pd \phi}{\pd \sigma_2}=0~~.
\ee
Let:
\be
\sigma_\pm =\frac{1}{\lambda_1}\sigma_1\pm \frac{1}{\lambda_2}{\sigma_2}~~.
\ee
Changing coordinates from $(\sigma_1,\sigma_2)$ to $(\sigma_+,\sigma_-)$,
the homogeneous equation becomes $\frac{\pd \phi}{\pd \sigma_+}=0$,
which means that $\phi$ depends only on $\sigma_-$. 
Hence the general smooth solution of the homogeneous equation is: 
\be
\phi=Q_0\big(\frac{1}{\lambda_1} \sigma_1-\frac{1}{\lambda_2}\sigma_2\big)=Q_0\big(\frac{1}{\lambda_1} \log|x_1|-\frac{1}{\lambda_2}\log|x_2|\big)~~,
\ee
where $Q_0$ is a function which we take to be smooth. A particular solution
$\phi_0$ of the inhomogeneous equation can be obtained by requiring
that $\phi_0$ depends only on the angular coordinate $\theta$.
Using relations \eqref{xpdpolar}, the inhomogeneous equation for
$\phi_0$ reduces to:
\be
2(\lambda_2-\lambda_1)s_2(\cos\theta,\sin\theta) (\cos\theta\sin\theta) \phi'_0(\theta)=s_3(\cos\theta,\sin\theta)~~,
\ee
where we used the fact that $s_k$ are homogeneous polynomials of degree two in $x_1$ and $x_2$. The last equation reads:
\be
\phi'_0(\theta)=\frac{1}{2(\lambda_2-\lambda_1)} R(\theta)
\ee
where:
\beqa
R(\theta)\eqdef \frac{s_3(\cos\theta,\sin\theta)}{s_2(\cos\theta,\sin\theta)\cos\theta\sin\theta}=
\frac{\lambda_1^3\cos^2\theta+\lambda_2^3\sin^2\theta}{(\lambda_1^2\cos^2\theta+\lambda_2^2\sin^2\theta)\cos\theta\sin\theta}~~.
\eeqa
This gives \eqref{partsol} up to an arbitrary constant which can be absorbed in $Q_0$. 
\end{proof}

\begin{prop}
\label{prop:lineqequal}
Suppose that $\lambda_1=\lambda_2:=\lambda$. Then the linear equation \eqref{linear} reduces to:
\ben
\label{lindeg}
x^i\pd_i \phi=\frac{1}{2}~~,
\een  
whose general solution is: 
\ben
\label{gensoldeg}
\phi(r,\theta)=\frac{1}{2} \log r+\phi_0(\theta)~~,
\een
where $\phi_0\in \cC^\infty(\rS^1)$ is a real-valued function of a single variable defined on the unit circle, which we take to be smooth.
\end{prop}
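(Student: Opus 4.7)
The plan is a direct computation in two short steps: first specialize the coefficients of \eqref{linear}, then integrate using polar coordinates.

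For the first step, I substitute $\lambda_1 = \lambda_2 = \lambda$ into the definition \eqref{sdef} of the homogeneous polynomials $s_k$. Since $s_k(x) = \lambda_1^k x_1^2 + \lambda_2^k x_2^2$, this collapses to
\be
s_2(x) = \lambda^2(x_1^2 + x_2^2) = \lambda^2 r^2~~,~~s_3(x) = \lambda^3(x_1^2+x_2^2) = \lambda^3 r^2~~,
\ee
while $\lambda_i x^i \pd_i \phi = \lambda\, x^i \pd_i \phi$. Plugging these into \eqref{linear} gives $2\lambda^3 r^2\, x^i\pd_i\phi = \lambda^3 r^2$, and since $\lambda \neq 0$ and $r > 0$ on the domain of interest, dividing through by $2\lambda^3 r^2$ yields \eqref{lindeg}.

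For the second step, I pass to the polar coordinates \eqref{polar}. Using the identities \eqref{xpdpolar}, I observe that
\be
x^i\pd_i\phi = x^1\pd_1\phi + x^2\pd_2\phi = (\cos^2\theta+\sin^2\theta)\, r\pd_r\phi + 0 \cdot \pd_\theta\phi = r\pd_r\phi~~,
\ee
i.e.\ $x^i\pd_i$ is the Euler radial vector field. Hence \eqref{lindeg} becomes the ordinary differential equation $r \pd_r \phi = \tfrac{1}{2}$, viewed as a one-parameter family of ODEs in $r$ parameterized by $\theta$. Integrating in $r$ gives $\phi(r,\theta) = \tfrac{1}{2}\log r + \phi_0(\theta)$, where $\phi_0$ is an arbitrary smooth function on the unit circle arising as the constant of integration. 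Conversely, any such expression solves \eqref{lindeg} and hence \eqref{linear}, proving the result.

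There is no genuine obstacle here: the degeneration $\lambda_1 = \lambda_2$ removes the nontrivial coupling of $\theta$-modes that produces the more elaborate formula \eqref{gensol} in the non-degenerate case. The only point worth noting is that smoothness of $\phi_0$ on $\rS^1$ (rather than mere piecewise smoothness) is natural here precisely because the coefficients in \eqref{lindeg} are everywhere regular away from the origin, in contrast to the $\lambda_1\neq \lambda_2$ case where the characteristic directions develop singularities along the coordinate axes.
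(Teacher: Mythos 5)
Your proof is correct and follows essentially the same route as the paper: specialize $s_k(x)=\lambda^k\|x\|_0^2$ to reduce \eqref{linear} to \eqref{lindeg}, pass to polar coordinates where the equation becomes $r\pd_r\phi=\tfrac{1}{2}$, and integrate in $r$ with a $\theta$-dependent constant. The only cosmetic difference is that the paper phrases the last step as "homogeneous solution plus particular solution" while you integrate directly, which is equivalent.
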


\begin{proof}
Since $s_k(x)=\lambda^k ||x_0||^2$, equation \eqref{linear} reduces to
\eqref{lindeg}, which takes the following form in polar coordinates:
\ben
\label{lindegpolar}
r\pd_r \phi (r,\theta)=\frac{1}{2} ~~.
\een
The corresponding homogeneous equation is equivalent with: 
\be
\pd_r\phi=0~~,
\ee
with general solution:
\be
\phi=\phi_0(\theta)~~,
\ee
where $\phi_0\in \cC^\infty(\rS^1)$. The conclusion follows by
noticing that $\phi_0=\frac{1}{2}\log r$ is a particular solution of
the inhomogeneous equation.
\end{proof}

\begin{remark}
When $\lambda_1=\lambda_2:=\lambda$, we have $t_1=t_2=0$ and: 
\be
a_1(x,\phi)=a_2(x,\phi)=6 \lambda^8 V_c  e^{2\phi} ||x||_0^6~\quad~,~\quad~b(x,\phi)= 3 \lambda^8 V_c  e^{2\phi} ||x||_0^6 ~~.
\ee
Hence the quasilinear equation \eqref{quasilinear} with $\lambda_1=\lambda_2=\lambda$ 
coincides with the linear first order PDE \eqref{lindeg}.
\end{remark}

\subsection{Natural asymptotic conditions at a non-degenerate critical point with distinct principal values}
\label{subsec:asymptotics}
\
Suppose that $\lambda_1\neq \lambda_2$. Defining:
\be
Q(w)\eqdef  Q_0\left(\frac{\lambda_2-\lambda_1}{\lambda_1\lambda_2} w\right)~~,
\ee
the general solution \eqref{gensol} of the linear equation \eqref{linear} takes the form:
\be
\phi(r,\theta)=\phi_0(\theta)+Q\left(\log r+\frac{\lambda_2 \log|\cos\theta|-\lambda_1\log|\sin\theta|}{\lambda_2-\lambda_1}\right)~~.
\ee
This satisfies the condition
$\lim_{r\rightarrow 0}\phi(r,\theta)=+\infty$ iff $\lim_{w\rightarrow -\infty} Q(w)=+\infty$. In this case, we have:
\be
\phi\approx Q(\log r)~~\mathrm{for}~~r\ll 1~~,
\ee
so $\phi$ becomes rotationally-invariant near the critical point. 
Since in this case $\phi$ approximates a solution of the nonlinear
PDE \eqref{Feq} when $r\rightarrow 0$, it follows that
the scalar field metric $\cG$ defined by that solution of \eqref{Feq}
becomes asymptotically rotationally-invariant near the critical
point. For $r\ll 1$, the Gaussian curvature of this metric is:
\be
K\approx-e^{-2\phi} \Delta \phi\approx - e^{-2\phi} \frac{1}{r} \frac{\dd}{\dd r}(r\frac{\dd\phi}{\dd r})=-e^{-2Q(\log r)} Q''(\log r)~~.
\ee
To further constrain $Q$, it is natural to require that the asymptotic
Gaussian curvature be constant, i.e. $K=K_c$ for some constant
$K_c$. This gives the equation:
\be
e^{-2Q(w)} Q''(w)=K_c~.
\ee
It is also desirable that the metric $\cG$ be geodesically complete
near the critical point (except at that point, which is not part of
$\cM_0$), and we assume this from now on.

If we choose $K_c=0$, then we can take $Q(w)=-w$, which gives:
\be
\phi(r,\theta)\approx_{r\ll 1} -\log r
\ee
and:
\be
\dd s^2\approx_{r\ll 1} \frac{1}{r^2}(\dd r^2+r^2\dd \theta^2)=\dd \rho^2+\dd \theta^2~~,  
\ee
where $\rho\eqdef -\log r$. In this case, $\cG$ asymptotes for $r\ll
1$ to the metric on a flat cylinder.

If we choose $K_c=-1$, then $\cG$ will asymptote at the critical point
$c$ to a hyperbolic metric defined near $c$ which coincides with the
hyperbolic cusp metric. In this case, we have:
\ben
\label{CuspMetric}
\dd s^2\approx \frac{1}{(r\log r)^2}(\dd r^2+r^2\dd \theta^2)
\een
for $r\ll 1$ and: 
\ben
\label{hcusp}
Q(w)=-\log(|w| e^w)=-w-\log |w|~~\quad~~\mathrm{i.e.}~\quad~\phi(r,\theta)\approx_{r\ll 1} -\log(r\log(1/r))~~.
\een
When $V$ is a Morse function defined on $\cM$, relation \eqref{hcusp}
can be viewed as an asymptotic condition for the solution $\phi$ of
the contact Hamilton-Jacobi equation \eqref{Feq}, to be imposed at
each critical point where $V$ has distinct principal values. In
practice, this can be done approximately by choosing simple closed
curves $C_R(c)$ around each critical point $c$ of $V$ which correspond
to Euclidean circles of radius $R\ll 1$ centered at the origin in the
isothermal coordinates centered at $c$ and imposing the Dirichlet
boundary conditions:
\be
\phi\vert_{C_R(c)}=-\log(R\log(1/R))~~.
\ee
This formula inspired the somewhat idiosyncratic choice of boundary
conditions made for the numerical examples contained in the paper.

Now suppose that $\lambda_1=\lambda_2=\lambda$. In this case, relation
\eqref{gensoldeg} gives:
\be
\phi=\frac{1}{2}\log r+Q_0(\theta)\approx_{r\ll 1} \frac{1}{2}\log r~~,
\ee
which tends to {\em minus} infinity when $r\rightarrow 0$. In this
case, $\phi$ need {\em not} approximate a solution of the contact
Hamilton-Jacobi equation \eqref{Feq}.

\subsection{Characteristics in the quasilinear approximation}

The quasilinear equation \eqref{quasilinear} can be solved locally by
the method of characteristics, which in this case reduces to the
Lagrange-Charpit method. The characteristic system is:
\beqan
\label{CharQuasilinear}
&&\frac{\dd x^i}{\dd t}=a_i(x,u)x^i~,\nn\\
&& \frac{\dd u}{\dd t}=b(x,u)~~.
\eeqan
The Dirichlet problem asks for a solution of \eqref{quasilinear}
which satisfies:
\be
\phi(\gamma(q))=\phi_0(q)\quad \forall q\in I~~,
\ee
where $\gamma:I\rightarrow U_0$ is a non-degenerate smooth curve and
$\phi_0:I\rightarrow \R$ is a given smooth function. The Dirichlet
data $(\gamma,\phi_0)$ is called {\em regular} (or {\em
  noncharacteristic}) if the following condition is satisfied:
\be
a_1(\gamma(q),\phi_0(q))\gamma_1(q)\gamma'_2(q)-a_2(\gamma(q),\phi_0(q))\gamma_2(q)\gamma'_1(q)\neq 0\quad \forall q\in I~~.
\ee
For regular data, the solution of the Dirichlet problem is
obtained by computing the one-parameter family of solutions $x(t,q)$,
$u(t,q)$ of \eqref{CharQuasilinear} which satisfy the initial
conditions:
\be
x(0,q)=\gamma(q)~~,~~u(0,q)=\phi_0(q)\quad\forall q\in I~~.
\ee
Then $u$ is obtained by eliminating $t$ and $q$ between the equations:
\be
x^i=x^i(t,q)~~,~~u=u(t,q)~~,
\ee
which produces a functional relation $u=\phi(x^1,x^2)$. The fact that the
function obtained in this manner satisfies equation
\eqref{quasilinear} follows from the relations:
\be
b(x,u)=\frac{\pd u}{\pd t}=\pd_i \phi \frac{\pd x^i}{\pd t}=a_i(x,u)x^i\pd_i \phi~~.
\ee

\subsection{Numerical solutions and characteristics of the quasilinear equation near a critical point}
\label{subsec:numquasilinear}
Let us illustrate the discussion above with the behavior of solutions
of the quasilinear equation \eqref{quasilinear} and their
characteristics. The potential is given by \eqref{Vquad}, where we
take $V_c>0$ and where $\lambda_1,\lambda_2$ are real numbers. We take
the open set $U_0\subset \R^2$ to be a domain contained inside the
region where the potential is positive. In this case, the functions
$a_1,a_2$ and $b$ are given explicitly by:
{\beqan
 \!\!\!\!\!\!a_1(x_1,x_2,u)&\!=\!&\lambda_1 (\lambda_1^2 x_1^2\!+\!\lambda_2^2 x_2^2) \Big[6 e^{2 u} V_c (\lambda_1^2 x_1^2\!+\!\lambda_2^2 x_2^2)
   (\lambda_1^3 x_1^2\!+\!\lambda_2^3 x_2^2)\nn\\
   &&\quad\quad\quad\quad\quad\quad\quad + (\lambda_1\!-\!\lambda_2) \lambda_1 \lambda_2^2 x_2^2 \big[\lambda_1
   (\lambda_1\!-\!3 \lambda_2) x_1^2\!-\!2 \lambda_2^2 x_2^2\big]\Big] \nn\\
\!\!\!\!\!\!a_2(x_1,x_2,u)&\!=\!&\lambda_2 (\lambda_1^2 x_1^2\!+\!\lambda_2^2 x_2^2) \Big[6 e^{2 u} V_c (\lambda_1^2 x_1^2\!+\!\lambda_2^2 x_2^2)(\lambda_1^3 x_1^2\!+\!\lambda_2^3 x_2^2) \nn\\
&& \quad\quad\quad\quad\quad\quad\quad+(\lambda_1\!-\!\lambda_2) \lambda_2 \lambda_1^2 x_1^2 \big[2 \lambda_1^2
   x_1^2\!+\!(3 \lambda_1\!-\!\lambda_2) \lambda_2 x_2^2\big]\Big] \nn \\
\!\!\!\!\!\!\! b(x_1,x_2,u)&\!=\!&3 e^{2 u} V_c (\lambda_1^2 x_1^2\!+\!\lambda_2^2 x_2^2) (\lambda_1^3 x_1^2\!+\!\lambda_2^3 x_2^2)^2 \nn\\
&&\quad\quad\quad\quad\quad\quad\quad\!-\!\lambda_1^3
  (\lambda_1\!-\!\lambda_2)^2 \lambda_2^3 x_1^2 x_2^2 (\lambda_1 x_1^2\!+\!\lambda_2 x_2^2) ~.
\eeqan}
\!\!As before, we consider the Dirichlet condition
$\phi=-\log[R\log(1/R)]$ on a circle of radius $R<1$ centered at the
origin. For the indicated values of $V_c$, $\lambda_1$, $\lambda_2$
and $R$, Figures \ref{fig:Quasilinear1}, \ref{fig:Quasilinear2}, and
\ref{fig:Quasilinear3} show the scalar potential, some characteristic
curves of the quasilinear equation \eqref{quasilinear} and some
solutions of its viscosity perturbation for various values of the
viscosity parameter $\fc$. The Dirichlet problem is irregular when
$\lambda_1=\lambda_2=1$. In that case, the viscosity solution of the
Dirichlet problem is constant; we do not consider this case in the
figures.


\begin{figure}[H]
\centering
\begin{minipage}{.45\textwidth}
\centering \includegraphics[width=\linewidth]{ContourPot-5inv11.pdf}
\vspace{0.5em}
\subcaption{Contour plot of the potential.}
\end{minipage}
\hfill
\begin{minipage}{.45\textwidth}
\centering  \includegraphics[width=1.15\linewidth]{3dPot-5inv11.pdf}
\vspace{2em}
\subcaption{3D plot of the potential.}
\end{minipage}
\hfill
\vspace{2em}
\centering
\begin{minipage}{.45\textwidth}
\centering \includegraphics[width=1.02\linewidth]{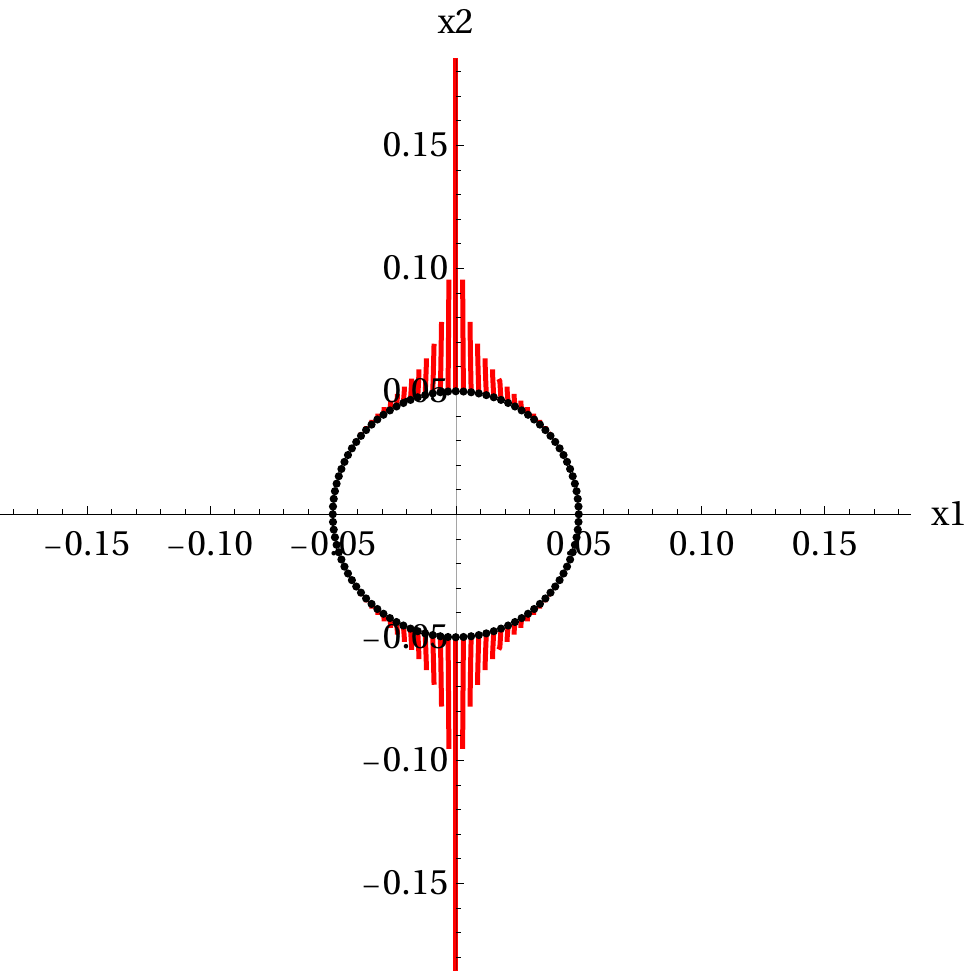}
\vspace{0.5em}
\subcaption{Projected characteristic curves.}
\end{minipage}
\hfill
\begin{minipage}{.5\textwidth}
\centering~~\includegraphics[width=1.05\linewidth]{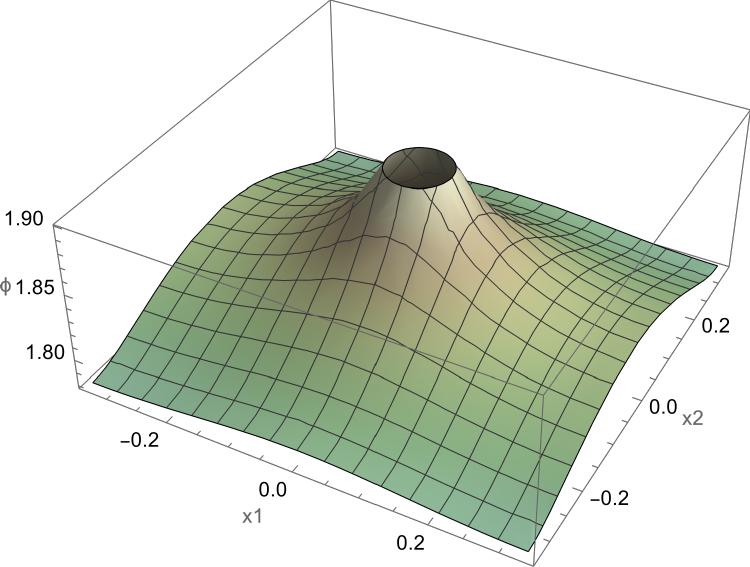}
\vspace{1em}
\subcaption{Viscosity approximant of the solution to the Dirichlet problem for $\fc=e^{-8}$.}
\end{minipage}
\hfill
\vspace{1em}
\caption{The potential, characteristics and a viscosity approximant of
  the solution to the Dirichlet problem for a quasilinear equation
  with $V_c=1/90$, $\lambda_1=-1/5$, $\lambda_2=1$ and $R=1/20$.}
\label{fig:Quasilinear1}
\end{figure}

\vspace{2em}

\pagebreak

\begin{figure}[H]
\vspace{3em}
\centering
\begin{minipage}{.45\textwidth}
\centering  \includegraphics[width=1\linewidth]{Contour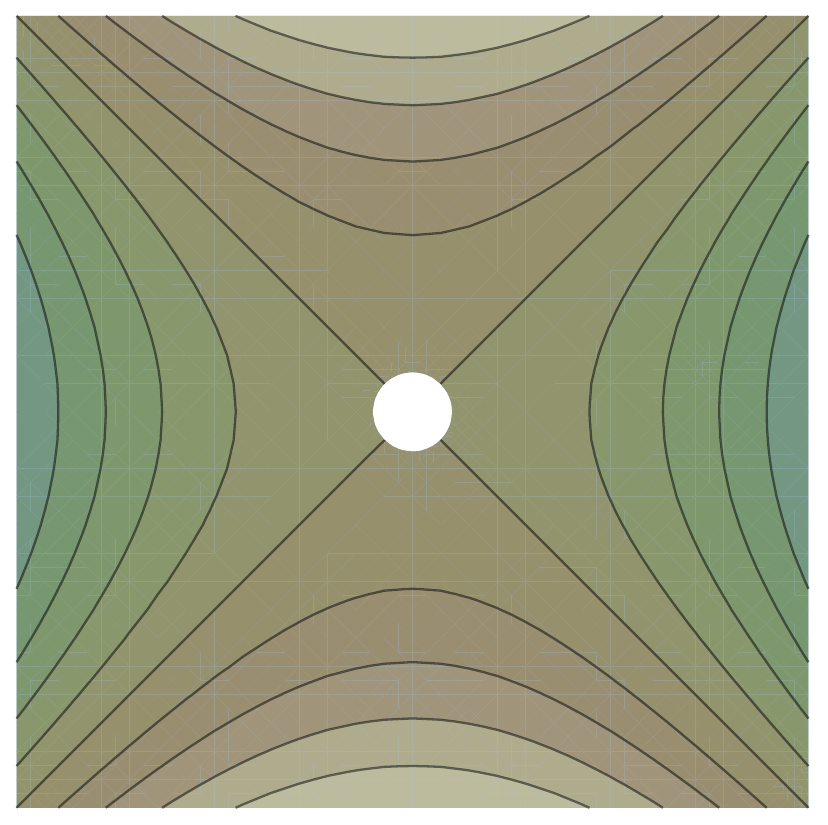}
\vspace{0.5em}
\subcaption{Contour plot of the potential.}
\end{minipage}
\hfill
\begin{minipage}{.5\textwidth}
\centering \includegraphics[width=1.05\linewidth]{3dPot-11.pdf}
\vspace{1em}
\subcaption{3D plot of the potential.}
\end{minipage}
\hfill
\vspace{2em}
\centering
\begin{minipage}{.5\textwidth}
\centering \includegraphics[width=0.95\linewidth]{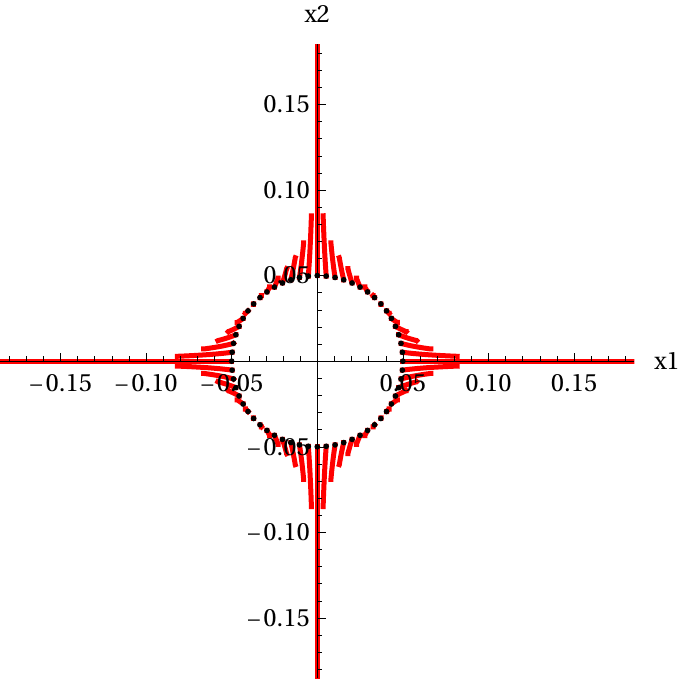}
\subcaption{Projected characteristic curves.}
\end{minipage}
\hfill
\begin{minipage}{.47\textwidth}
\centering \!\!\!\!\!\includegraphics[width=1.15\linewidth]{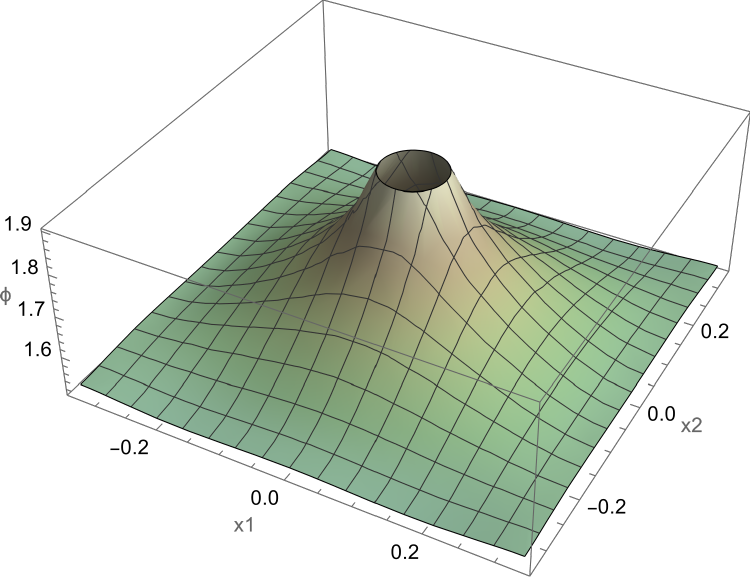}
\vspace{1em}
\subcaption{Viscosity approximant of the solution to the Dirichlet problem for $\fc=e^{-8}$.}
\end{minipage}
\hfill
\vspace{1em}
\caption{The potential, characteristics and a viscosity approximant of
  the solution to the Dirichlet problem for the quasilinear equation
  with $V_c=1/18$, $\lambda_1=-1$, $\lambda_2=1$ and $R=1/20$.}
\label{fig:Quasilinear2}
\end{figure}

\begin{figure}[H]
\vspace{1em}
\centering
\begin{minipage}{.45\textwidth}
\centering \!\!\!\!\!\!\!\!\! \includegraphics[width=1\linewidth]{ContourPot5inv11.pdf}
\subcaption{Contour plot of the potential.}
\end{minipage}
\hfill
\begin{minipage}{.47\textwidth}
\centering \!\!\!\!\!\!\!\!\!\!\!\!\!\!\!\includegraphics[width=1.15\linewidth]{3dPot5inv11.pdf}
\subcaption{3D plot of the potential.}
\end{minipage}
\hfill
\\
\centering
\begin{minipage}{.5\textwidth}
\centering \!\!\!\!\!\!\! \!\!\includegraphics[width=0.95\linewidth]{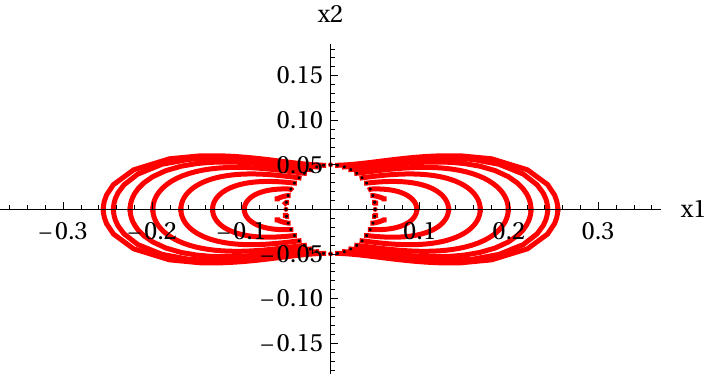}
\vspace{1em}
\subcaption{Projected characteristic curves.}
\end{minipage}
\hfill
\begin{minipage}{.46\textwidth}
\centering \!\!\!\!\!\!\!\!\!\!\!\!\!\!\! \includegraphics[width=1.12\linewidth]{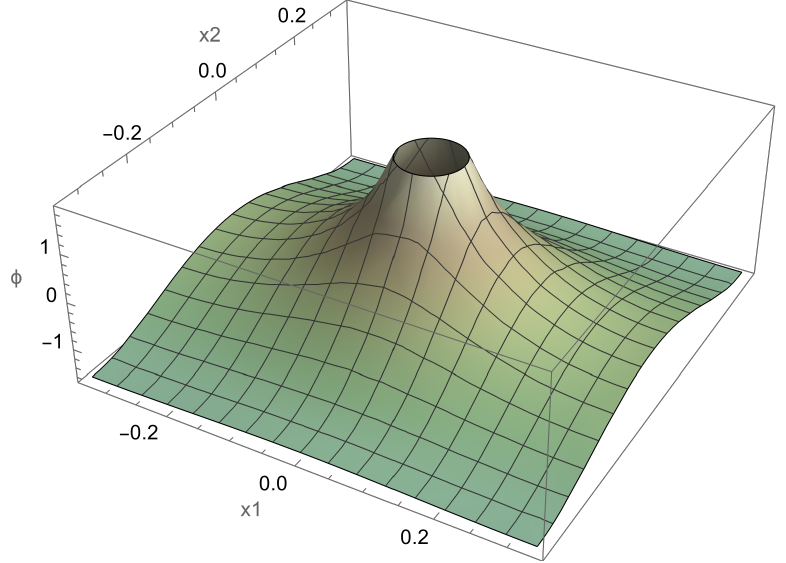}
\subcaption{Viscosity approximant of the solution to the Dirichlet problem for $\fc=e^{-8}$.}
\end{minipage}
\hfill
\caption{The potential, characteristics and a viscosity approximant of
  the solution to the Dirichlet problem for the quasilinear equation
  with $V_c=10^{-2}$, $\lambda_1=1/5$, $\lambda_2=1$ and $R=1/20$. All
  characteristic curves are followed for the same duration of their
  parameter up to limits imposed by the computational capacity of our
  system.}
\label{fig:Quasilinear3}
\end{figure}

\section{Models with 2-torus target}
\label{sec:torus}

In this section, we consider two-field cosmological models whose
target is a 2-torus $\rT^2$. Such models appear in simple examples of
the well-known setting of axion cosmology \cite{Axiverse}, for
example in an M-theory axiverse \cite{AxiverseM} or in a string
axiverse \cite{AxiverseB} realized as an orientifold compactification
of type IIB string theory on a Calabi-Yau manifold which produces two
axions in the uncompactified directions. We refer the reader to
\cite{Marsh} for a review of axion cosmology. While models with two
axions are only toy examples, they are interesting in the
context of our consistency conditions, since the strong SRRT equation
allows one, at least in principle, to determine the fiducial metrics
for which strong rapid turn inflation is likely to be possible.

The scalar potential of axions is likely to be easier to determine in
axiverse cosmology than their scalar field metric -- which, in
compactifications with low or broken supersymmetry, receives
complicated corrections that tend to be hard to control. In this
context, our approach of taking $V$ as given and determining those
scalar field metrics which are ``good'' for strong rapid turn
inflation is natural. Throughout this section, we fix an orientation
of the target manifold $\cM=\rT^2$. Following our general method, we
pass to the complex parameterization of the model by encoding the
information carried by the scalar field metric into its volume form
and the complex structure $J$ determined by its conformal class. The
latter makes $\rT^2$ into an elliptic curve which we denote by $E$,
while the former determines and is determined by the conformal factor
of the metric in a complex coordinate chart relative to the complex
structure $J$. Since any elliptic curve can be presented as a quotient
$E=\C^2/\Lambda$ of the complex plane by a full lattice
$\Lambda\subset \C$, such a curve admits a global {\em periodic}
complex coordinate which is induced by the natural complex coordinate
$z=x_1+\i x_2$ of $\C^2$. This allows one to lift the contact
Hamilton-Jacobi equation to an equation defined on $\C$ whose defining
function is $\Lambda$-periodic. Moreover, the globally-defined
solutions of the contact Hamilton-Jacobi equation on $E$ correspond to
$\Lambda$-periodic solutions of this lifted equation. Using this
observation, we discuss a simple class of exact solutions and the
corresponding metrics on $\rT^2$. In principle, the problem of solving
boundary or asymptotic condition problems for the lifted contact
Hamilton-Jacobi equation can be approached by numerical
methods. However, this requires implementations of the finite element
method for equations with {\em periodic} boundary or asymptotic
conditions, and we could not find such code that is publicly
available.

\subsection{Riemannian metrics on the torus}

Recall that a Riemannian metric $\cG$ on $\rT^2$ determines and is determined
by the following data:

\begin{itemize}
\item A complex structure $J$ on $\rT^2$ which makes the torus into an
  elliptic curve $E$ of modulus $\tau$ determined by $J$. In
  turn, this determines a presentation $E_\tau=\C/\Lambda$, where
  $\Lambda$ is a full lattice in the complex plane.
\item A conformal factor $\phi\in \cC^\infty(\rT^2)$.
\end{itemize}

More precisely, any conformal structure on an oriented two-torus
determines a complex coordinate $z$ on the universal cover, thus
presenting $\rT^2$ as a quotient:
\ben
\label{ComplexTorus}
\rT^2=\C/\Lambda~~,
\een
where $\Lambda$ is a full lattice in $\C\simeq_\R \R^2$. This
coordinate is determined up to transformations of the form
$z\rightarrow az+b$, where $a,b$ are complex constants with $a\neq 0$.
Equivalently, $\Lambda$ is determined up to translations, rotations
and homotheties of $\R^2$. A presentation \eqref{ComplexTorus}
determines a distinguished point $\mathrm{o}\in \rT^2$ which
corresponds to the origin of $\C$ and its $\Lambda$-translates and
hence endows $\rT^2$ with an Abelian group structure for which
$\mathrm{o}$ is the zero element. If we require that the
transformations which preserve the complex structure fix this point,
then we must set $b=0$ in the transformation rule above. Then we can
always choose $a$ such that $\Lambda$ admits a basis (as a free
Abelian group) of the form $e_1=1$ and $e_2=\tau$, in which case
$\tau\in \C$ is the modular parameter of the corresponding complex
structure $J$ on the torus. When endowed with this complex structure,
the torus becomes the elliptic curve $E$ of modulus $\tau$. The
freedom of choosing $e_2$ amounts to the action of the modular group
$\PSL(2,\Z)$ on $\tau$ by fractional transformations:
\be
\tau\rightarrow \frac{a\tau+b}{c\tau+d}\quad\forall 
A:=\left[\begin{array}{cc} a & b \\ c & d\end{array}\right]\in \PSL(2,\Z)~~. 
\ee
A fundamental domain for this action is the well-known domain of
definition of the modular $j$-function.

\subsection{Periodic isothermal coordinates}

Let $\cG$ be a Riemannian metric on $\rT^2$ which belongs to the
conformal class defined by $J$ and ${\hat \cG}$ be its
lift\footnote{For ease of notation, we will omit hats indicating
lifts to the covering space in what follows.} to the universal cover
$\C$ of $E$. The real and imaginary parts $x_1=\Re z$ and $x_2=\Im z$
of the complex coordinate $z$ are global isothermal coordinates for
${\hat \cG}$ on $\C=\R^2$, which descend to coordinates on any open
subset of $\rT^2$ obtained by choosing a fundamental domain, i.e. a
basis for the lattice $\Lambda$. We can also view $z$ and $(x_1,x_2)$
as bi-periodic coordinates on $\rT^2$, though the period vectors need
not be ``aligned'' with $x_1$ and $x_2$. In these coordinates, the
metric $\cG$ has the form:
\be
\dd s^2=e^{2\phi(z,\bar{z})} |\dd z|^2=e^{2\phi(x_1,x_2)}(\dd x_1^2+\dd x_2^2)~~,
\ee
where $\phi$ is a $\Lambda$-periodic smooth positive function defined on
$\C$ (equivalently, a smooth positive function defined on $\rT^2$). We
denote by $\la~,~\ra$ the Euclidean scalar product on $\C=\R^2$:
\be
\la z,z'\ra =\Re(\bar{z} z')=x_1x'_1+x_2x'_2\quad\forall~z=x_1+\i x_2~~,~~z'=x'_1+\i x'_2\in \C~~.
\ee
This determines a flat Riemannian metric $\cG_0$ on $\rT^2$.

\subsection{Positive scalar potentials on the torus}

A scalar potential on $\rT^2$ lifts to a $\Lambda$-periodic smooth
function $V$ defined on $\C$ which admits a Fourier expansion:
\ben
\label{Fourier}
V(z,\bar{z})=\sum_{q\in \Lambda^\ast} V_q e^{2\pi \i \la q, z\ra}~~,
\een
with
\ben
V_q\eqdef \frac{1}{(2\pi)^2}\int_{\rT^2} V(z,\bar{z}) e^{-2\pi \i \la q, z\ra}\vol_0
\een
for all $q\in \Lambda^\ast$, where
\ben
\Lambda^\ast\eqdef \{q\in \C~| \la q,\lambda \ra \in \Z\quad\forall \lambda\in \Lambda\}\subset \C
\een
is the lattice dual to $\Lambda$. Here $\vol_0=\dd x_1\wedge \dd
x_2=\frac{\i}{2}\dd z \wedge \dd \bar{z}$ is the volume form of the
flat metric $\cG_0$. Since $V$ is square integrable on the compact
manifold $\rT^2$, the expansion \eqref{Fourier} converges in the
$\mathrm{L}^2$ norm defined by $\cG_0$. Since we assume that $V$ is
smooth, the series is in fact convergent in the $\cC^k$-norm for all
$k\in \Z_{\geq 0}\cup\{\infty\}$ and in particular is absolutely
convergent. The Parseval formula gives:
\ben
\label{Parseval}
\sum_{q\in \Lambda^\ast}|V_q|^2=\frac{1}{(2\pi)^2}\int_{\rT^2} |V(z,\bar{z})|^2\vol_0~.
\een
Since $V$ is real, its Fourier modes must satisfy the relations:
\be
\overline{V_q}=V_{-q}\quad \forall q\in \Lambda^\ast~~.
\ee
Writing
\be
V_q=A_q e^{2\pi \i\alpha_q}~,~\mathrm{with}~~ A_q\eqdef |V_q|\geq 0~~\mathrm{and}~~\alpha_q\in \R/\Z~~,
\ee
these relations give:
\be
A_{-q}=A_q~~\mathrm{and}~~\alpha_{-q}=-\alpha_q\quad\forall q\in \Lambda^\ast~~
\ee
and imply $\alpha_0=0$. The Parseval formula \eqref{Parseval} becomes:
\ben
\label{ParsevalA}
\sum_{q\in \Lambda^\ast} A_q^2=\frac{1}{(2\pi)^2}\int_{\rT^2} |V(z,\bar{z})|^2\vol_0~.
\een
Since $V$ is real, equation \eqref{Fourier} implies:
\ben
\label{FourierReal}
V(z,\bar{z})=\Re \sum_{q\in \Lambda^\ast} V_q e^{2\pi \i \la q, z\ra }=A_0+\!\!\sum_{q\in \Lambda^\ast\setminus \{0\}} \!\! A_q \cos(2\pi [\la q, z\ra+\alpha_q])~,
\een
where we used the fact that $\alpha_0=0$. The condition that $V$ is positive amounts to:
\be
A_0>-\sum_{q\in \Lambda^\ast\setminus\{0\}}  A_q \cos(2\pi [\la q, z\ra+\alpha_q])~~.
\ee

\subsection{Lattice-adapted coordinates}

Choosing a basis $(e_1,e_2)$ of $\Lambda$, we write $z=\theta^1
e_1+\theta^2 e_2$ with $\theta^1,\theta^2\in \R$. Such real
coordinates $(\theta^1,\theta^2)$ on the universal cover of $E$ are called {\em
  lattice-adapted coordinates}. Notice that a translation by a lattice
vector $\lambda=\lambda^1 e_1+\lambda^2 e_2\in \Lambda$ on the
universal cover corresponds to $\theta^i\rightarrow \theta^i+\lambda^i$,
where $\lambda^i\in \Z$. In lattice-adapted coordinates, we have:
\be
\la q, z\ra=q_1\theta^1+q_2\theta^2\quad\forall q\in \Lambda^\ast~,
\ee
where we defined:
\be
q_i\eqdef \la q,e_i\ra \in \Z~~.
\ee
In axion cosmology, the quantity $q=(q_1,q_2)$ is called the {\em
  covector of charges} determined by $q\in \Lambda^\ast$ in the basis
$(e_1,e_2)$ of $\Lambda$. In lattice-adapted coordinates, the Fourier
expansion \eqref{FourierReal} reads:
\ben
\label{FourierBasis}
V(z,\bar{z})=\sum_{(q_1,q_2)\in \Z^2} A_{q_1,q_2} \cos(2\pi [q_1\theta^1+q_2\theta^2+\alpha_{q_1,q_2}])~~,
\een
where $A_{q_1,q_2}\eqdef A_q$ and $\alpha_{q_1,q_2}\eqdef \alpha_q$. The metric $\cG$ has the squared line element:
\be
\dd s^2=e^{2\varphi(\theta^1,\theta^2)} \big[|e_1|^2\dd \theta_1^2+|e_2|^2\dd \theta_2^2+(e_1\overline{e_2}+e_2\overline{e_1})\dd \theta_1\dd\theta_2 \big]~~.
\ee
If we choose $e_1=1$ and $e_2=\tau$, then the last relation becomes:
\be
\dd s^2=\varphi(\theta^1,\theta^2) \big[\dd \theta_1^2+|\tau|^2\dd \theta_2^2+2(\Re \tau)\dd \theta_1\dd\theta_2 \big]~~.
\ee

\subsection{Charge-adapted coordinates}

Choosing a basis $(e_1^\ast, e_2^\ast)$ of the dual lattice
$\Lambda^\ast$, we can write $q=n_1 e_1^\ast+n_2 e_2^\ast$ with $n_1,
n_2\in\Z$. Then:
\be
\la q, z\ra=n_1 s^1+n_2 s^2\quad\forall z\in \C~,
\ee
where we defined:
\be
s^i\eqdef \la e_i^\ast, z\ra =\Re(\overline{e_i^\ast} z)\in \R~~.
\ee
The real coordinates $(s^1,s^2)$ on the universal cover are called
{\em charge-adapted coordinates}. A translation by a lattice vector
$\lambda\in \Lambda$ on the universal cover corresponds to
$s^i\rightarrow s^i+\lambda_i$, where $\lambda_i\eqdef \la e_i^\ast,
\lambda \ra\in \Z$. The Fourier expansion \eqref{FourierReal} reads:
\ben
\label{FourierExp}
V(z,\bar{z})=\sum_{(n_1,n_2)\in \Z^2} A_{n_1,n_2} \cos(2\pi [n_1 s^1+n_2 s^2+\alpha_{n_1,n_2}])~~.
\een
where $A_{n_1,n_2}\eqdef A_n$ and $\alpha_{n_1,n_2}\eqdef \alpha_n$. 

\begin{remark}
In axion cosmology it is often assumed that only a finite
number of the amplitudes $A_q$ are nonzero. In that case, the sums
above become finite. 
\end{remark}

\subsection{The strong SRRT equation and its lift to the covering space}

The discussion of Section \ref{sec:HJ} shows that the strong SRRT
equation for torus models can be written as:
\be
F(x_1,x_2,\phi, \pd_1\phi,\pd_2\phi)=0~~,
\ee
where $(x_1,x_2)$ are periodic isothermal coordinates and the contact
Hamiltonian $F$ is given in $\eqref{Floc}$. This form of the equation
is valid globally on $\rT^2$ if we interpret $x^1,x^2$ as periodic
coordinates on the torus. Equivalently, this equation can be viewed as
an equation for the $\Lambda$-periodic function ${\hat \phi}\eqdef
\phi\circ \pi$, where $\pi:\C\rightarrow T^2=\C/\Lambda$ is the
quotient projection. In that case, $F$ is viewed as a real-valued
function defined on $\C\times \R\times \R^2$ which is periodic under
translations by vectors of the lattice $\Lambda$. Boundary and
asymptotic condition problems for the contact Hamilton-Jacobi equation
lift in the obvious manner to the covering space.

\subsection{A simple example}

Let us fix  $q\in \Lambda^\ast\setminus \{0\}$, $A>0$, $A_0>A$ and $\alpha\in [0,1)$
  and consider the positive {\em single charge potential}:
\be
V(z,\bar{z})= A_0+A \cos(2\pi [\la q, z\ra +\alpha])~~,
\ee
whose critical points lie on the parallel lines: 
\be
L_+(n)\!=\!\{z\in \C~\vert~\la q,z\ra \!=\!-\alpha+n\}~~,
~~L_-(n)\!=\!\{z\in \C~\vert~\la q,z\ra\!=\!-\alpha+n+\frac{1}{2}\}~~(n\in \Z)~~
\ee
along which the potential attains its extremum values $A_0+A$ and $A_0-A$
respectively. These lines project to two parallel closed curves
$\mathcal{C}_+$ and $\mathcal{C}_-$ on the torus which have the same
rational slope. In particular, the potential $V$ on the torus is of
Bott-Morse type.  Notice that $V$ is invariant under translations of
the form $z\rightarrow z+t$ for any vector $t\in \R^2$ which is
$\cG_0$-orthogonal to $q$, i.e. which satisfies $\la q,t\ra=0$.
Accordingly, the strong SRRT equation \eqref{SSThermalRed} is invariant under
such translations and we can seek special solutions $\phi$ which
respect this symmetry, i.e. which satisfy:
\be
\phi(z+t)=\phi(z)\quad\forall t\perp_0 q~~,
\ee
where $\perp_0$ indicates orthogonality with respect to the scalar
product $\la ~,~\ra$. Notice that the gradient flow lines of $V$ are
parallel to the vector $q$. The normalized gradient vector
$n_0=\frac{\grad_0 V}{||\dd V||_0}$ of $V$ with respect to the flat
metric $\cG_0$ is given on the noncritical set of the lift of $V$ to
the covering space $\C$ of the torus by:
\be
n_0(z)=\zeta (z) {\hat q}\quad\forall z\in \C\setminus \cup_{n\in \Z}(L_+(n)\cup L_-(n))~~,
\ee
where ${\hat q}\eqdef \frac{q}{||q||_0}$ and: 
\ben
\zeta(z)\eqdef-\sign[\sin(2\pi [\la q,z\ra+\alpha]).
\een
Let $\tau_0(z)=\i n_0(z)=\zeta(z) q'$, where $q'\eqdef \i q$. This is the $\cG_0$-normalized
vector which is $\cG_0$-orthogonal to $n_0(z)$ such that
$(n_0(z),\tau_0(z))$ is a positively-oriented frame. Since $q'\perp_0
q$, translations which preserve $V$ have the form $t=\lambda q'$ with
$\lambda\in \R$. Thus $\pd_{\tau_0}\phi=0$ and the strong SRRT
equation \eqref{SSThermalRed} reduces to:
\ben
\label{eq0}
\tilde{H}_0^2 \,\big[||\dd V||_0\pd_{n_0}\phi+\Delta_0V - H_0\big]=3 e^{2\phi} V  \big[||\dd V||_0\pd_{n_0}\phi-H_0\big]^2~~. 
\een
The components of the Riemannian $\cG_0$-Hessian of $V$ in the frame $(n_0,\tau_0)$ are:
\be
V_{n_0\tau_0}=V_{\tau_0 n_0}=\pd_{n_0}\pd_{\tau_0}V=0~~\mathrm{and}~~V_{\tau_0\tau_0}=\pd_{\tau_0}^2V=0
\ee
(which give ${\tilde H}_0=0$) and:
\be
H_0=\Hess_0(V)(n_0,n_0)=\pd_{n_0}^2V=-(2\pi ||q||_0)^2 A\cos(2\pi [\la q, z\ra+\alpha])~~.
\ee
Hence \eqref{eq0} reduces to:
\ben
\label{eq1}
||\dd V||_0\pd_{n_0}\phi=-(2\pi ||q||_0)^2 A\cos(2\pi [\la q, z\ra+\alpha])~~.
\een
Since:
\be
||\dd V||_0=|\pd_{n_0}V|=2\pi ||q||_0 A \,\big|\!\sin(2\pi [\la q, z\ra +\alpha])\big|~~,
\ee
equation \eqref{eq1} reads:
\ben
\label{eq}
\pd_{{\hat q}}\phi(z)=2\pi ||q||_0 \cot(2\pi [\la q, z\ra +\alpha])~~.
\een
The general solution of \eqref{eq} which satisfies $\pd_{q'}\phi=0$ is:
\ben
\label{phisingle}
\phi=\log|\!\sin (2\pi [\la q,z \ra +\alpha])|+C~~,
\een
where $C\in \R$ is an integration constant. Notice that $\phi$ has
logarithmic singularities along the lines $L_\pm(n)$, which on the
torus project to the two connected components $\cC_+$ and ${\cC}_-$ of
the critical locus of $V$. Figure \ref{fig:Torus} shows the single
charge potential and the solution \eqref{phisingle} for the parameter
values listed in the caption.

\begin{figure}[H]
  \centering
\begin{minipage}{.45\textwidth}
\centering \includegraphics[width=.95\linewidth]{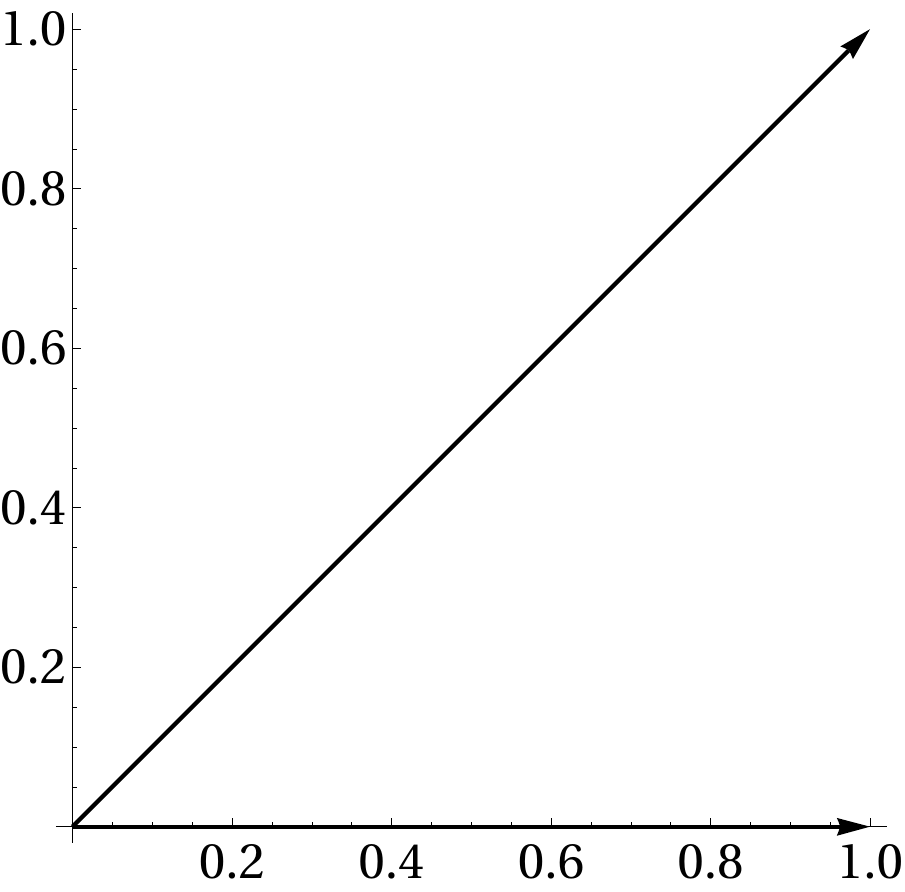}
\subcaption{The basis $(e_1,e_2)$ of the lattice $\Lambda$.}
\end{minipage}
\hfill
\begin{minipage}{.49\textwidth}
  \centering ~~ \includegraphics[width=.6\linewidth]{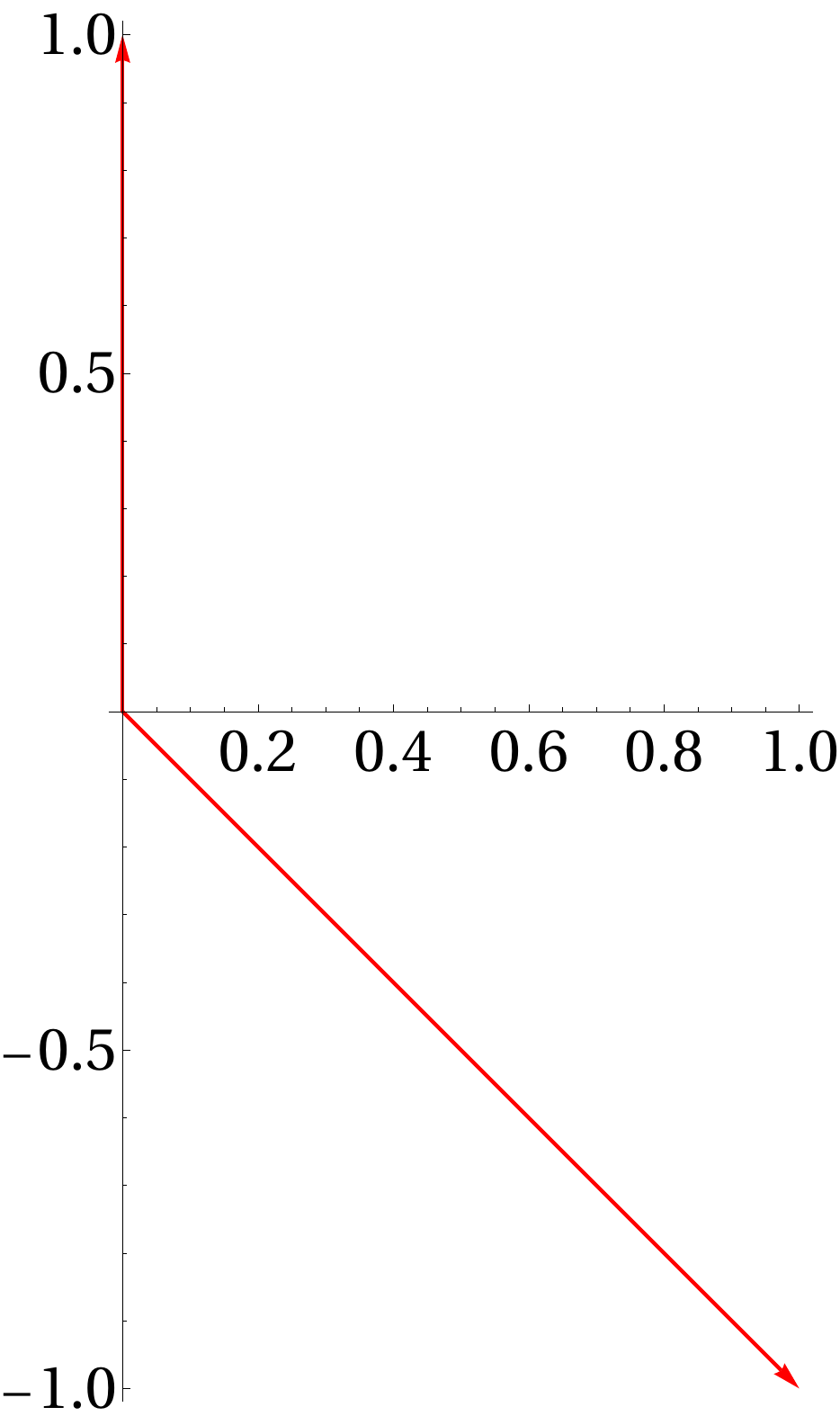}
\subcaption{The basis $(e_1^\ast,e_2^\ast)$ of the dual lattice $\Lambda^\ast$.}
\end{minipage}
\hfill
\centering
\begin{minipage}{.44\textwidth}
\vspace{1em}
\centering \includegraphics[width=1.05\linewidth]{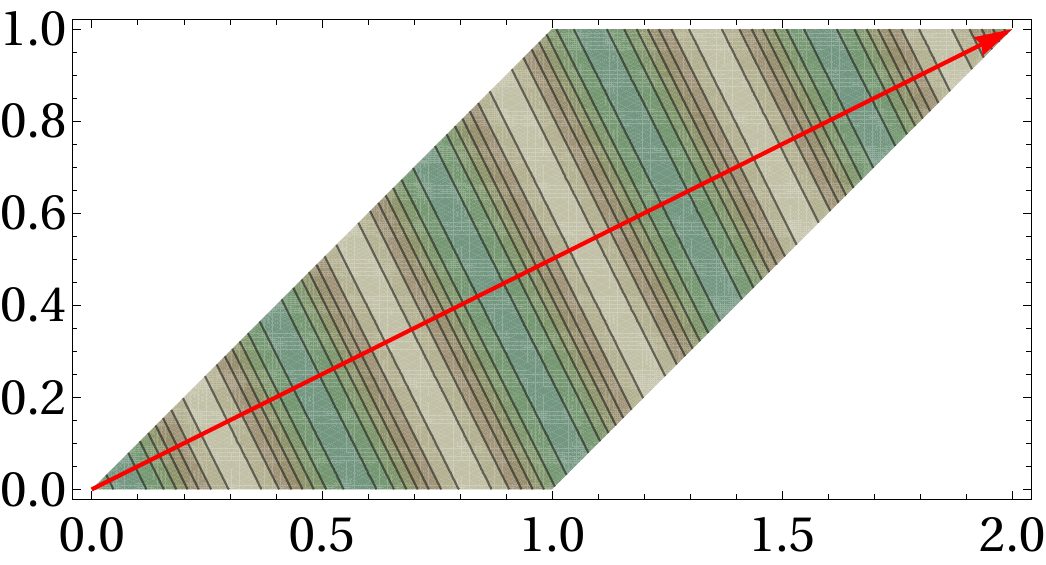}
\subcaption{Contour plot of the potential and its vector of charges $q$ (shown in red).}
\end{minipage}
\hfill
\quad\quad\quad \begin{minipage}{.46\textwidth}
\vspace{1em}
\centering  \includegraphics[width=1.05\linewidth]{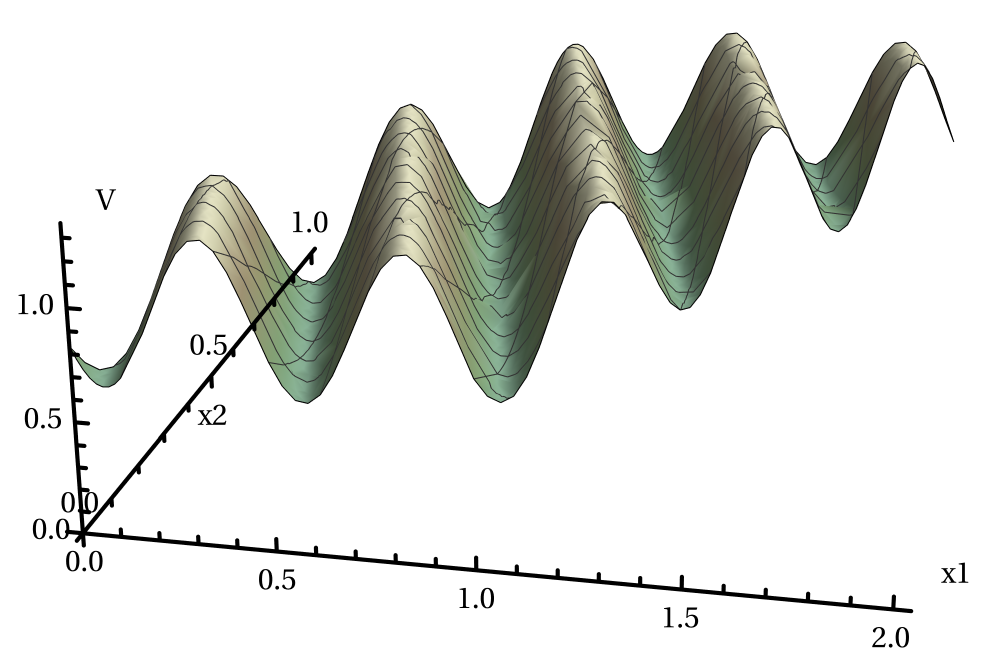}
\subcaption{3D plot of the potential.}
\end{minipage}
\hfill
\centering
\begin{minipage}{.44\textwidth}
\vspace{2em}
\centering \includegraphics[width=1.05\linewidth]{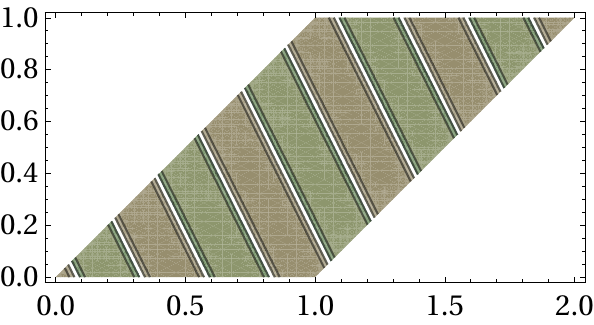}
\subcaption{Contour plot of the solution $\phi$ for $C=0$.}
\end{minipage}
\hfill
\begin{minipage}{.47\textwidth}
\vspace{1em}
\centering \includegraphics[width=\linewidth]{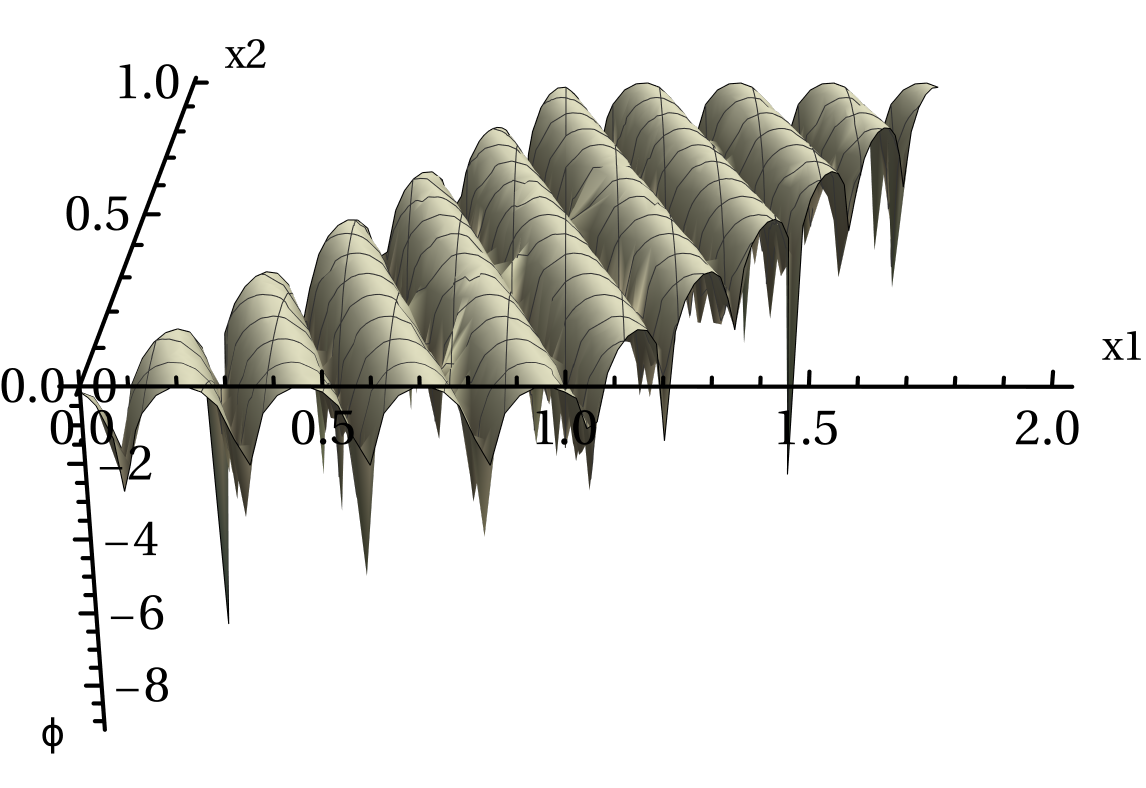}
\subcaption{3D plot of the solution $\phi$ for $C=0$. }
\end{minipage}
\hfill
\caption{The single charge potential and symmetry-adapted solution \eqref{phisingle}
  for the elliptic curve with $\tau=1+\i$ for $A_0=1$, $A=1/3$, $q_1=2$,
  $q_2=3$ and $\alpha=1/3$. We took $e_1=(1,0)$, $e_2=(1,1)$ and
  $e_1^\ast=(1,-1)$, $e_2^\ast=(0,1)$. The first two plots show these
  bases of $\Lambda$ and $\Lambda^\ast$. The remaining plots take
  $(x_1,x_2)$ to lie in the fundamental domain of $\Lambda$. The
  solution is given by $\phi=\log|\sin\left[2 \pi (2
    x_1+x_2+1/3)\right]|$, where we take the integration constant $C$ to be zero. }
\label{fig:Torus}
\end{figure}

\pagebreak

\section{Conclusion and further directions}
\label{sec:conclusions}

We extracted the frame-free form of the strong SRRT equation of
two-field cosmological models with oriented target space derived in
\cite{cons}, showing that it gives a geometric PDE which relates the
scalar field metric and potential of such models. Using the complex
parameterization of the former, we showed that, when the potential and
conformal class of the metric are fixed, this PDE reduces to a
geometric contact Hamilton-Jacobi equation for the volume form of the
scalar field metric, where the latter is viewed as a section of the
positive determinant half-line bundle $L_+$ of the target manifold
$\cM$ of the model. This volume form thus plays the role of abstract
Hamilton-Jacobi action for a geometric contact mechanical system
defined on the total space of the first jet bundle of $L_+$, which is
endowed with its Cartan contact structure.

In local isothermal coordinates defined by the complex structure $J$
determined by the conformal class of the scalar field metric, this
geometric PDE for the volume form becomes an ordinary contact
Hamilton-Jacobi equation for the conformal factor. In such
coordinates, the contact Hamiltonian is cubic in momenta, with
coefficients which depend in a complicated but explicitly known manner
on the scalar potential of the model. We analyzed this equation
locally using the classical method of characteristics and studied its
quasilinearization close to a non-degenerate critical point of the
potential, discussing various qualitative features and asymptotics
which are revealed by this approach.

Since the equation is nonlinear, its locally-defined classical
solutions generally do not extend globally, as illustrated by the
phenomenon of crossing characteristics. However, our contact
Hamilton-Jacobi operator turns out to satisfy the properness condition
of the theory of viscosity solutions, which allows us to apply that
theory to our situation. In particular, one can use the viscosity
perturbation of the equation in order to find numerical approximants
of its viscosity solutions.  We used this approach to compute
approximants of the viscosity solution of the Dirichlet problem in a
few simple cases of interest. Finally, we considered the case of
two-field models with toroidal scalar manifold, which arise as toy
models of axion cosmology. In this case, we found the general
symmetry-adapted solution of our contact Hamilton-Jacobi equation for
potentials with a single charge vector.

The present work opens up numerous new questions and directions for
further research. First, one can perform a similar analysis for the
{\em weak} SRRT equation derived in \cite{Achucarro} and
\cite{LiliaCons}, and we plan to do so in upcoming work.  Second, one
could write specialized code to compute efficiently solutions of our
contact Hamilton-Jacobi equation for general Riemann surfaces of
geometrically finite type (i.e. generally non-compact surfaces with
finitely-generated fundamental group), and in particular bi-periodic
viscosity solutions which are relevant to the case of modes with
toroidal scalar manifold. Third, one could investigate the problem of
existence and uniqueness of globally-defined viscosity solutions with
prescribed asymptotics of the type described in Subsection
\ref{subsec:asymptotics}. Such asymptotic condition problems seem to
be natural in the setting of two-field scalar cosmology. Fourth,
general lore suggests that the scale-invariant solutions of the
quasilinear approximation of our contact Hamilton-Jacobi equation,
which we investigated in Subsection \ref{subsec:scale}, should play a
distinguished role in a dynamical renormalization group approach to
studying that equation, in line with the general ideology advocated in
\cite{ren}. Last but not least, the geometric contact Hamilton-Jacobi
equation derived in the present paper may deserve detailed
mathematical study.

More ambitiously and for the longer term, one could aim to use the
fiducial two-field models which satisfy the strong SRRT equation (and
which can be determined numerically by computing appropriate viscosity
solutions of our contact Hamilton-Jacobi equation) in order to perform
a systematic study of the class of two-field models which allow for
``strong'' rapid turn inflation, as a function of the scalar potential
and of the topology of the target Riemann surface. Since the topology
of non-compact Riemann surfaces can be quite nontrivial, carrying out this
program requires the development of efficient software for computing
viscosity solutions which satisfy appropriate periodicity conditions
encoded by the fundamental polygon of such a surface. In principle,
the results and methods of the present paper allow one to construct
fiducial models for {\em any} strictly-positive and smooth scalar
potential defined on a given scalar manifold $\cM$. As a first
instance, it would be interesting to do so for scalar potentials which
satisfy the Morse condition and hence have isolated critical points.

Another direction of interest involves the systematic consideration of
higher order corrections in the approximations which led to the result
of \cite{cons}. This involves setting up an appropriate expansion
scheme which would extend the strong SRRT equation by including
corrections to all orders and in particular would contain (and widely
generalize) the weak SRRT condition derived in references
\cite{Achucarro} and \cite{LiliaCons}. Analyzing the resulting
conditions order by order should lead to a hierarchy of interesting
geometric PDEs relating the conformal factor and scalar potential on a
Riemann surface, when using the complex parameterization discussed in
the present work.

\section*{Acknowledgments}
\noindent This work was supported by national grant PN 23210101/2023
and by a Maria Zambrano Excellency Fellowship. We also acknowledge
funding by COST Action COSMIC WISPers CA21106, which is supported by
COST (European Cooperation in Science and Technology).

\end{document}